\DeclareSymbolFontAlphabet{\mathbb}{AMSb}
\long\def\ignore#1{}
\newcommand{\ie}{\emph{i.e.}}
\newcommand{\eg}{\emph{e.g.}}
\newcommand{\lp}{\lambda{\it Prolog}}
\newcommand{\Ll}{L_\lambda}
\newcommand{\dg}[2]{\langle #1 , #2 \rangle}
\newcommand{\tuple}[1]{\langle #1 \rangle}
\newcommand{\app}{{\ }}
\newcommand{\lambdadb}{\lambda \,}
\newcommand{\lambdax}[1]{\lambda #1 \,}
\newcommand{\dum}[1]{@ #1}
\newcommand{\lenv}{{\lbrack\!\lbrack}}
\newcommand{\renv}{{\rbrack\!\rbrack}}
\newcommand{\env}[1]{{\lenv #1 \renv}}
\newcommand{\allx}{\forall}
\newcommand{\somex}{\exists}
\newcommand{\andxy}[2]{#1 \wedge #2}
\newcommand{\orxy}[2]{#1 \vee #2}
\newcommand{\conj}{\wedge}
\newcommand{\disj}{\vee}
\newcommand{\imp}{\supset}
\newcommand{\impxy}[2]{#1 \supset #2}
\newcommand{\arrxy}[2]{#1\, \rightarrow \, #2}
\newcommand{\ra}{\rightarrow}
\newcommand{\lra}{\longrightarrow}
\newcommand{\lras}{\stackrel{*}{\lra}}
\newcommand{\pif}{{\tt :\!-}}
\newcommand{\pimp}{=>}
\newcommand{\plam}{\backslash}
\newcommand{\qf}{{\it quantifier}\_{\it free}}
\newcommand{\ia}{{\it is}\_{\it atomic}}
\newcommand{\itm}{{\it is}\_{\it term}}
\newcommand{\rnf}[1]{{\vert #1 \vert}}
\newcommand{\dquad}{\quad\quad}
\newcommand{\squad}{\quad\quad\quad\quad\quad\quad}
\newcommand{\st}{\mathcal}
\newtheorem{defn}{Definition}[section]
\newtheorem{prop}{Proposition}[section]
\newtheorem{theorem}{Theorem}[section]
\begin{document}

\setcounter{secnumdepth}{5}

\Author{Xiaochu Qi}

\Title{An Implementation of the Language Lambda Prolog Organized
around Higher-Order Pattern Unification}

\Month{October}

\Year{2009}

\Adviser{Dr. Gopalan Nadathur}

%\prelimpages
\titlepage
\copyrightpage

\pagenumbering{roman}

\setcounter{tocdepth}{2}

\pagenumbering{roman}
\abstract{%\section*{\begin{center}{\LARGE Abstract}\end{center}}

The automation of meta-theoretic aspects of formal systems typically
requires the treatment of syntactically complex objects. Thus,
programs must be represented and manipulated by
program development systems, mathematical
expressions by computer-based algebraic systems, and logic formulas
and proofs by automatic proof systems and proof assistants.
The notion of bound variables plays an important role in the
structures of such syntactic objects, and should therefore
be reflected in their representations and properly accounted for in
their manipulation.
The $\lambda$-calculus was designed specifically to treat binding in a
logically precise way and the terms of such a calculus turn out to be
an especially suitable representational
device for the application tasks of interest.
Moreover, the equality relation associated with these terms and the
accompanying notion of higher-order unification leads to a convenient
means for analyzing and decomposing these representations in a way
that respects the binding structure inherent in the formal objects.

This thesis concerns the language $\lp$ that has
been designed to provide support for the kinds of meta-programming
tasks discussed above. In its essence, $\lp$ is a logic
programming language that builds on a conventional language like
Prolog by using typed $\lambda$-terms instead of first-order terms as
data structures, by using higher-order unification rather than
first-order unification to manipulate these data structures and by
including new devices for restricting the scopes of names and of code
and thereby providing the basis for realizing recursion over binding
constructs. These features make $\lp$ a convenient programming vehicle
in the domain of interest. However, they also raise significant
implementation questions that must be addressed adequately if the
language is to be an {\it effective} tool in these contexts. It is
this task that is undertaken in this thesis.

An efficient implementation of $\lp$ can potentially exploit the
processing structure that has been previously designed for realizing
Prolog. In this context, the main new issue to be treated becomes that
of higher order unification. This computation has
characteristics that make it difficult to embed it effectively within a
low-level implementation: higher-order unification is in general
undecidable, it does not admit a notion of most general unifiers and a
branching search is involved in the task of looking for
unifiers. However, a sub-class of this computation that is referred to
as $\Ll$ or higher-order pattern unification has been discovered
that is substantially better behaved: in particular, for this class,
unification is decidable, most general unifiers exist and a
deterministic unification procedure can be provided. This class is
also interesting from a
programming point-of-view: most natural computations carried out
using $\lp$ fall within it. Finally, a
treatment of full higher-order unification within the context of
$\lp$ can be realized by solving only higher-order pattern
unification problems at intermediate stages, delaying any branching
and possibly costly search to the end of the computation.

This thesis examines the use of the strategy described above in
providing an implementation of $\lp$. In particular, it develops a new
virtual machine and compilation based scheme for the language by
embedding a higher-order pattern unification algorithm due to Nadathur
and Linnell within the well-known Warren Abstract Machine model for
Prolog. In executing this idea, it exposes and treats various
auxiliary issues such as the low-level representation of
$\lambda$-terms, the implementation of reduction on such terms, the
optimized processing of types in computation and the representation of
unification problems whose solution must be deferred till a later
point in computation.  Another important component of this thesis is
the development of an actual implementation of $\lp$---called {\it
Teyjus Version 2}---that is based on the conceptual design that is
presented. This system contains an emulator for the virtual machine
that is written in the C language for efficiency and a compiler that
is written in the OCaml language so as to enhance readability and
extensibility. This mix of languages within one system raises
interesting software issues that are handled. Portability across
architectures for the emulator is also treated by developing a modular
mapping from term representation to actual machine structures. A final
contribution of the thesis is an assessment of the efficacy of the
various design ideas through experiments carried out with the
assistance of the system.
}
\acknowledgments{I take this opportunity to express my gratitude to my advisor Dr.
Gopalan Nadathur for his guidance, support, valuable instruction and
encouragement throughout the entire project. Thanks are due to
Zach Snow, Steven Holte and Andrew Gacek for their help in the development
and maintenance of the system {\it Teyjus version 2}. I am also grateful to
Dr. Dale Miller, Dr. Eric Van Wyk, Dr. Mats Heimdahl and Dr. Williams
Messing for their interest on this thesis.

Work on this thesis has been partially supported by the NSF Grants
CCR-0096322 and CCR-0429572. I have also received support in the
course of my graduate studies from  a Grant-in-Aid of Research
provided by the Graduate School at the University of Minnesota
and by funds provided by the Institute of Technology and the
Department of Computer Science and Engineering at the University of
Minnesota. Opinions, findings, and conclusions or recommendations
expressed in this thesis should be understood as mine. In particular,
they do not necessarily reflect the views of the National Science
Foundation.
}
\tableofcontents \clearpage \listoffigures \pagenumbering{arabic}

\textpages
\singlespace
\chapter{Introduction}\label{chp:intro}
% formal objects and application
This thesis is concerned with the implementation of a higher order
logic programming language called $\lp$. This language is of interest
because it provides perspicuous and effective ways for
realizing computations over formal objects such as programs,
mathematical expressions,
logical formulas, and proofs. Computations of this kind are frequently
needed in meta-level application tasks such as those involved in
building
program development systems~\cite{HuetLang78},
automated algebraic systems~\cite{deBruijn80, Cons86},
automatic reasoning systems~\cite{CH88,HHP93}, and
proof assistants~\cite{BGMNT07,CoqMan97,gacek08ijcar,Paulson94}. In this
chapter we motivate the $\lp$ language from the perspective of such
applications, explain what is involved in implementing it well and
then characterize the contributions of this thesis.

\section{Using $\lambda$-terms as Data Structures}
% binding - lambda terms
% What is involved in representing objects with binding.
%  -- Scope;
%  -- Irrelevance of bound variable names.
% Examples: 1. logical formula with quantifiers; 2. function definition.
An important first step in building systems that manipulate formal
objects is the design of a convenient representation for such
objects. When we examine the specific programming tasks, it turns out
that in many of them there is a need to deal with syntactic constructs
that involve a notion of binding. As an example, consider a theorem
proving system that manipulates quantificational formulas. When
representing a formula such as $\forall x P(x)$, where $P(x)$ denotes
an arbitrary formula in which $x$ may appear free, it is necessary to
capture the scoping aspect of the quantifier as well as the fact that
the particular choice of name for the quantified variable is not
significant. These properties will be necessary, for example, in
correctly instantiating the quantifier when needed---we have to be
careful not to substitute terms for $x$ which contain variables that
get captured by quantifiers appearing further inside $P(x)$---and in
recognizing that the formula $\forall x P(x)$ is really the same as
$\forall y P(y)$. Similar observations can be made with respect to the
representation of programs in a program manipulation system. Here,
it is necessary to encode functions in such a way that the
binding aspects of arguments and issues of scope are clearly
recognized in the course of analyzing and transforming their
structures. Some of these aspects can be illustrated by considering the
simple setting of the $\lambda$-calculus that underlies the idea of
functions in programming languages. Suppose, for example, that we want
to write an evaluator for the $\lambda$-calculus. In this setting, we
have to be able to transform an expression of the form
$((\lambdax{x} M)\ N)$ into one that is obtained by replacing the free
occurrences of $x$ in $M$ by $N$. In carrying out this operation, we
have to be able to distinguish free occurrences of $x$ from the bound
ones and we also have to be careful to not allow any free variable
in $N$ to be captured by an abstraction within $M$. Moreover, a
prerequisite for applying such a transformation is that we have to be
able to recognize that a term has the form $((\lambdax{x} M)\ N)$ even if
the abstracted variable in the ``function part'' is not exactly named
$x$.

% The lambda calculus based representation.
A careful examination of the examples discussed above shows that even
though the application domains are quite different, there is a common
part to what needs to be treated with regard to binding in both
cases. The important aspects of binding can in fact be uniformly
captured by using the terms of the $\lambda$-calculus as a
representational mechanism. For example, the concept of the scope of a
binding is explicitly reflected in the structure of a
$\lambda$-abstraction. Similarly, the recognition of the irrelevance of
names for bound variables and the preservation of their scopes during
substitution are manifest though the usual $\lambda$-conversion rules.
Thus,
the representation of formal objects relevant to different contexts
can be accomplished by using $\lambda$-abstractions to capture the
underlying binding structures and using constructors like in
first-order abstract syntax representations to encode whatever context
specific semantics is relevant to the analysis.

As an illustration of this idea, consider the formula $\allx{x} P(x)$
mentioned in the theorem proving example. This formula can be
represented by the $\lambda$-term $(all\app(\lambdax{x} \overline{P(x)}))$,
where $all$ is a constructor chosen to denote the universal quantifier, and
$\overline{P(x)}$ represents, recursively, the formula $P(x)$. This
representation separates out the two different roles of a universal
quantifier, one of which corresponds to imposing the ``for all"
semantics and the other that indicates the scope of the
quantification, and it captures the latter explicitly through a
$\lambda$-abstraction. Using this representation, the instantiation of
the universal quantifier of the given formula can be simply denoted as
an $\lambda$-application of form $((\lambdax{x} \overline{P(x)})\app t)$,
where $t$ is the representation of the object-level term that the
quantifier is to be instantiated
with. This ``application term'' is equivalent under the rules of
$\lambda$-conversion to a term that results from replacing each
occurrence of $x$ in $\overline{P(x)}$ with $t$ being careful, of
course, to avoid any inadvertent capture of free variables in $t$.
Similarly, the object-level $\lambda$-term $((\lambdax{x} M)\ N)$ can be
represented by the expression $(app\ (abs\ (\lambdax{x} \overline{M}))\
\overline{N})$; notice that $app$ and $abs$ are constructors
chosen to encode object language application and abstraction in this
representation, and the binding effect of an object-level
abstraction is captured by an abstraction of the meta-language.
With this kind of representation, we can describe the evaluation rule
that was of interest earlier as
simply that of rewriting an expression of the form $(app\ (abs\ T)\
R)$ to the form $(T\ R)$; the meta-language understanding of
$\lambda$-terms ensures then that the required substitution operation
will be carried out in a logically correct manner.

% lambda Prolog
Our interest in this thesis is in a language for carrying out
computations over formal objects. From this perspective, what we
desire is a language that allows us to use $\lambda$-terms as a means
for representing objects and that provides primitives for
manipulating these in a logically meaningful way.
The logic programming language $\lp$~\cite{NM88} is one of this sort.
It is based on a higher-order logic built around a typed version of the
$\lambda$-calculus. The presence of $\lambda$-terms as basic data
structures in this language provides the convenience discussed earlier
in this section in representing formal objects,
and therefore renders the language an especially
suitable tool to describe formal systems.
This language attributes operational semantics to logical
connectives and quantifiers, so that these logical symbols can also be
viewed as programming primitives. As a result, the language allows for
the construction of descriptions of formal systems that can be viewed
as specifications but that are also executable as programs. In
comparison with usual logic programming languages, $\lp$
provides two new logical devices for specifying the scopes of names
and of clauses defining predicates. From the programming perspective,
these devices turn out to be helpful in describing recursive
computations over binding structure.
Many uses have been made of these various features of
$\lp$ in describing interesting computations over formal
objects; see, for example, \cite{AppFel99, Felty93jar,  Hannan90,
  NM88, Pareschi89phd, Pfe88}.
These kinds of applications motivate the development of an efficient
implementation of this language, a topic that is the focus of this
thesis.

\section{Using Higher-Order Unification for Computation}

An important part of the computational machinery underlying $\lp$ is a
realization of unification over $\lambda$-terms. This form of
unification, known as higher-order unification, differs from the one
used in a language like Prolog in that equality between terms is based
not just on identity but also on the conversion rules of
the $\lambda$-calculus. Pragmatically, this operation is the basis for
analyzing the shapes of syntactic structures that involve binding: for
example, it is this form of unification that allows $\forall x(P(x)
\land Q(x))$ to be used as a template for matching with formulas that have a
particular form and for decomposing them into the parts corresponding
to the conjuncts embedded inside the universal quantification if they
do have this
form. Most existing implementations of $\lp$ realize higher-order unification
based on a procedure described by Huet. While higher-order unification
seems a necessary operation within $\lp$, it unfortunately also turns
out to be one that has poor theoretical properties. For example, it
does not admit most general unifiers, a possibly redundant search may be
involved in calculating unifiers and unifiability is, in the limit,
undecidable. These kinds of properties manifest themselves in Huet's
procedure by giving it a non-deterministic branching structure, by
restricting it to calculating only pre-unifiers so as to avoid
redundancy and by making
it a possibly non-terminating computation. Embedding such a procedure
within a larger language implementation is difficult and can also make
it difficult to realize other associated operations in an efficient
manner.

% Huet's algorithm in practice -- unique solution.
% Lead to higher-order pattern unification
While the situation with employing higher-order unification in a practical
way seems difficult at first sight, the signs from looking at actual
attempts to employ it is much more hopeful. In particular,
from using a system realizing $\lp$ based on Huet's
procedure~\cite{NM99cade}, and also from using other logical
frameworks and proof assistants  such as
{\it Twelf}~\cite{Pfenning99systemdescription} and {\it
  Isabelle}~\cite{nipkow02book} 
that employ higher-order unification, it
becomes evident that there is a large collection of practically
relevant meta-programming tasks in which the relevant higher-order
unification problems actually have unique solutions that can be
completely revealed even by using Huet's pre-unification procedure.
Based on a study of the usage of higher-order unification in
these examples, Dale Miller has identified a subset
of the general problems, known as the $\Ll$
or the higher-order pattern class~\cite{Miller91jlc, Nipkow93}, which covers
the major cases of the unification problems occurring in practice~\cite{MP92}.
Unifiability on this subset is known to be decidable and it is also
known that a single most general unifier can be provided in any of the
cases where a unifier exists.
In fact, Miller has described a (non-deterministic) algorithm for
solving higher-order pattern unification problems that has the
characteristic of either determining non-unifiability
or producing a most general unifier at the end. The idea underlying
this procedure have been extended to dependently typed
$\lambda$-calculi~\cite{PF91, Pfenning91unificationand} and
higher-order rewrite systems~\cite{NT91}. 
%GN Not the right place for this information. It is too early in the
%discussion to begin with. More importantly, the point to argue about
%or to elaborate on is that you are not doing anything unique in
%making the switch and this is something relevant to your
%contributions and should be made there.
%In fact, the implementations
%of {\it Twelf} and {\it Isabelle} both have migrated from full
%higher-order unification to the pattern subset~\cite{MP92, Nipkow93}.

% pattern unification
% - Huet's algorithm can still be used
% - But a better procedure can be provided
It turns out that Huet's procedure is also effective when applied to
higher-order pattern unification problems in that it is guaranteed to
terminate and will do so with a unique successful branch. One may
wonder therefore if there is any purpose to describing specialized
unification procedures for this subclass and it is important to
address this question to put the work in this thesis in
perspective. There are, in fact, particular pragmatically significant
ways in which the behavior of Huet's procedure can be improved by
taking the restriction seriously. First, even though the
(pre)-solution found is unique, Huet's procedure conducts a branching
search to find it; it must do this since it needs to also address
more general higher-order unification problems. It turns out that if
one is not concerned about covering the larger class then the
intermediate steps can also be made deterministic. Second, even when
restricted to the higher-order pattern fragment, Huet's procedure
is guaranteed only to find pre-unifiers; in some instances, it will
return with a substitution and a remaining solvable problem but one that
it chooses not to solve. When focusing only on the higher-order
pattern unification class, however, it is possible to provide a
different unification algorithm that will solve the problem
entirely. Finally, the structures of solutions to general higher-order
unification problems depend on the types of the the terms being
unified, and consequently Huet's procedure examines these types during
computation. However, for the higher-order pattern fragment, it is
possible to structure computation so that it does not depend on type
information. This has a practical significance since it is, in
general, an expensive proposition to compute and carry around type
annotations with terms during execution.

Several implementations have been described of $\lambda$Prolog prior
to this thesis and one of them, {\em Teyjus Version 1}
\cite{NM99cade}, even considers
a compilation-based realization that is borrowed from heavily in this
thesis. All these implementations embed
within them Huet's treatment of higher-order unification. The
distinguishing feature of the work described here is that it analyses the
implementation of $\lambda$Prolog based on a model that treats only
higher-order pattern unification. The observations in this section
indicate a merit to considering this question: the higher-order
pattern fragment is practically relevant and restricting to only this
class can have an impact on the computational model that is important
to understand.

\section{Contributions of the Thesis}
This thesis explores the idea of orienting an implementation of
$\lp$ around a particular higher-order pattern
unification algorithm---the one proposed by Nadathur and Linnell
\cite{NL05HOP}. More specifically, it considers the full $\lp$
language, \ie, it does not restrict the syntax of this language in
any way. However, when unification problems are encountered, they are
solved completely only if they fall within the $\Ll$ fragment; more
general problems are deferred and later solved only if
instantiations convert them into ones in this subset.

%Now should stress it is abstract machine and compilation base.
% Embed pattern unification into WAM.
% New issues arise.
% term rep: two aspect 1) reduction, 2) low-level encoding
% delay
% embed hopu into WAM instructions
% run-time types.
The implementation that is developed is based on using a special
abstract machine for $\lp$ and on compiling programs in the
language into instructions for this machine.
The basic framework for the machine is provided by the WAM, the
abstract machine that D.~H.~D.~Warren designed for
Prolog~\cite{Warren83WAM}. The main new challenge in this work is to
embed pattern unification into the WAM that was originally designed to
treat only first-order terms.\footnote{In comparison with Prolog,
$\lp$ has additional search primitives
and also permits a quantification over predicates. However, we add nothing new to the
treatment of these aspects, simply inheriting them from {\it Teyjus
  Version 1} that is discussed later.}
There are several issues that must be considered in realizing
such an embedding in a practically acceptable fashion.
One class of such issues arises from the fact that a richer class of
terms---the terms of a $\lambda$-calculus instead of just first-order
terms---have to be represented and manipulated. The machine
representation that is chosen for such terms should, at the outset,
facilitate an efficient equality examination between terms based on
$\lambda$-conversions; in particular, it should support well the
recognition of equality between terms that differ only in the names of
bound variables and should also provide an efficient realization of
$\beta$-reduction or function evaluation. Beyond this, it is important
to treat efficiently the typical decomposition of terms that is needed in the
course of pattern unification.
For example, it is often necessary to get quickly to the head of a
term and this is best realized in a scheme that represents nested
applications in  a form that collects the successive arguments into a
vector form and directly exposes the embedded head.
A similar argument can be made for collecting a sequence of
abstractions into a single abstraction over several variables.
A second issue that needs to be treated is the seamless integration of
the richer higher-order pattern unification into the compiled
treatment of first-order unification that is the hallmark of the
WAM. In this context, we note that the treatment must also contain
within it a suitable mechanism for delaying higher-order unification
problems that do not fall within the higher-order pattern class. A
final issue that
we mention here is that of treating the polymorphic typing regime that
is part of the $\lp$ language. A consequence of this
polymorphism is that the particular type instances must be known when
comparing two constants that otherwise have the same name; the
ultimate identity of these constants must, in this case, be based on
an equality of their types.
Although the pattern unification does not need types in deciding
the structures of unifiers, the role of types mentioned above makes
it necessary to sometimes examine these dynamically to decide
unifiability of terms.
An efficient runtime type processing scheme should then be provided,
in which types are maintained and examined only for the
identity checking of constants.

This thesis addresses these various issues and proposes solutions to
them.
Towards providing an efficient realization of reduction over
$\lambda$-terms, it exploits the idea of an explicit substitution
notation for such terms \cite{ACCL91, NW98tcs}. It further considers
particular reduction procedures that can be used with such a
representation towards getting the best time and space performance.
To treat equivalence under bound variable renaming, it uses a
nameless representation for such variables in the style of de Bruijn
\cite{debruijn72}.
The ability to treat substitutions explicitly is exploited in
distributing this operation over the steps that need to be performed
in realizing unification towards minimizing redundant computations.
The low-level representation of terms in the explicit substitution
form pays attention to how applications and abstractions are
encoded so as to obtain fast access to the components that need
to be examined often in the course of unification.
The instruction set of the WAM is enhanced towards integrating the
treatment of higher-order pattern unification into the standard
compilation model. The particular approach that is used here is to
develop these instructions so that first-order unification is still
treated via compilation whereas the new components in higher-order
pattern unification lead to the invocation of an interpretive
phase. When parts of the unification problem falls outside the
higher-order pattern class, these are carried into subsequent
computations in the form of constraints that may be addressed later.
A practical representation is proposed for such residual problems and
the addition to these problems as well as their re-examination is
integrated into the instructions for the abstract machine. Finally,
the issue of runtime type processing is treated by first developing a
static analysis process that reduces the footprint of such types
considerably and then including instructions in the abstract machine
to treat the remaining aspects as much as possible through compiled
code.

In addition to proposing an implementation scheme for the
$\lp$ language, this thesis also develops an actual
implementation of the conceptual design that it produces. A
characteristic of this part of the work is a careful attention to the
issue of portability across varied architectures and operating
systems. Towards this end, a modular method is developed for mapping
the abstract machine onto the low-level hardware on which it is
emulated. Another aspect to which close consideration has been given is
that of enhancing the flexibility and expandability of the
implementation. To realize this goal, an attempt is made to use as
much as possible a high-level language---here the language {\em
  OCaml}---in the implementation, employing the language {\em C} only
in realizing those parts
whose efficiency depends critically on the closeness to the underlying
hardware. This mix of implementation languages raises interesting
problems of its own that we discuss later in the thesis. We note
finally that having an implemented system gives us the ability to test
the efficacy of our various design ideas, a topic that we also consider
on in this thesis.

In summary, the contributions of this thesis are threefold:
\begin{enumerate}
\item The design of an abstract machine and associated compilation
  methods for treating the $\lp$ language. A key characteristic of
  the abstract machine that is developed is that it attempts to
  exploit the efficiencies that arise out of focusing on higher-order
  pattern unification rather than treating more general forms of
  unification for $\lambda$-terms.

\item An actual implementation of $\lp$---{\it Version 2} of the {\it Teyjus}
  system---based on the virtual machine and compilation scheme
  developed. This implementation has proven to be extremely portable
  and also combines components written in the {\em C} and the {\em
    OCaml} languages towards enhancing openness and expandability in
  its structure.

\item A study of the performance impact of using higher-order pattern
  unification, optimized runtime types processing and other related
  design ideas. This study is based on experiments conducted with {\it
  Teyjus Version 2} using practical $\lp$ programs that exploit the
  meta-programming capabilities of the language.
\end{enumerate}

Prior to the work of this thesis, another abstract machine that is
organized around Huet's unification procedure has been designed for
$\lp$~\cite{KNW94cl, N03treatment, NJK95lp}.
This abstract machine has in fact provided the basis
for {\em Version 1} of the {\it Teyjus} system that we have mentioned
earlier.
Many challenges faced in realizing the new search primitives and
higher-order features present in $\lp$ were considered for the first
time in the context of that work and the design presented in this
thesis has been influenced by the ideas developed there.
However, the work undertaken in this thesis differs significantly
from the previous design and implementation in that it takes seriously
the idea of realizing a higher-order logic programming language through
the narrower mechanism of treating higher-order pattern
unification. In particular, it examines carefully the impact of this
decision on various aspects of the structure of the abstract machine
and of the efficiency of implementation. An auxiliary aspect of this
work is that it has resulted in a system that is far more portable and
expandable because of the particular approaches that have been used in
its implementation.

%In fact, the implementations
%of {\it Twelf} and {\it Isabelle} both have migrated from full
%higher-order unification to the pattern subset~\cite{MP92, Nipkow93}.
A central idea underlying this thesis is that of approaching
higher-order unification through higher-order pattern unification. It
is important to stress that the use of this idea is by itself not
novel to our work: in particular, this idea has been employed
previously in the proof assistant {\it Isabelle} \cite{nipkow02book}
and in the logical framework {\em Twelf}
\cite{Pfenning99systemdescription}. The particular deployment of this
idea in the 
{\it Isabelle} system is, in our understanding, quite different from
the method we use in this thesis: {\em Isabelle} first 
tries to solve unification problems by means of a 
higher-order pattern unification procedure and, if this does not
succeed, it then falls back to full higher-order unification. By
contrast, the method we use is quite similar to that employed within
{\em Twelf}: in both cases, a higher-order pattern unification
procedure is all that is used and problems that do not fall within the
class that this procedure is capable of handling are deferred till a
later point in the computation. The distinguishing characteristic of
our work in this context is that it explores the impact of
this idea on the design of an abstract machine and compilation
model for the underlying logic programming language. Another aspect of
our work is that it attempts to quantify the 
benefits of using this approach through a head-to-head comparison with
an implementation that uses Huet's unification procedure directly in
implementation.

\section{Organization of the Thesis}

% organization
The rest of the thesis is organized as the follows.
Chapter~\ref{chp:language} provides an overview of the $\lp$ language.
The discussion here illustrates the usefulness of the higher-order
features of the language
in describing formal systems and provides an intuitive understanding
on the underlying computation model.
Chapter~\ref{chp:termRep} describes the notion of equality of $\lambda$-terms
that is based on $\lambda$-conversion. This chapter also introduces an explicit
substitution based representation of such terms, which facilitates
efficient term comparison based on the relevant notion of equality.
An abstract interpreter for the $\lp$ language is presented in
Chapter~\ref{chp:interpreter} for the purpose of formally defining
the model of computation underlying this language. The role unification
plays in this computational model is discussed and a practical higher-order
pattern unification algorithm is introduced.
The low-level term representation scheme used in the implementation
developed by this thesis is discussed in Chapter~\ref{chp:machineTermRep}.
This discussion includes the presentation of an algorithm that
efficiently realizes
$\beta$-reduction based on the explicit substitution representation discussed
in Chapter~\ref{chp:termRep}. Also presented in the chapter is a
refinement to the term representation geared towards providing
fast access to the subcomponents that are needed by the term
decomposition operations used within pattern unification.
Chapter~\ref{chp:compile} describes a compilation based implementation
of the $\lp$ language. A detailed discussion is included of the way in
which pattern unification can be integrated into the WAM-based
computation model.
In Chapter~\ref{chp:types}, an efficient runtime type processing
scheme is proposed together with accompanying static optimization
processes and their integration into the compilation based processing model.
The actual software system, {\em Teyjus Version 2}, that realizes the
conceptual design ideas of this thesis is the focus of
Chapter~\ref{chp:system}. This
chapter also discusses  the practical issues faced in the
implementation of this software system, such as the realization of
the properties of portability and openness of code.
An assessment of the design and of the performance of {\em Teyjus Version 2}
is the topic of Chapter~\ref{chp:expr}. Experimental
data is presented and analyzed here towards providing a quantitative
understanding of the impact of our conceptual design ideas.
Finally, Chapter~\ref{chp:conclusion} concludes the thesis with a
discussion of some future directions.

\chapter{The $\lambda$Prolog Language}\label{chp:language}
%introduction paragraph:
% 1. The usefulness of the language
% 2. Why it is useful; lead to the support of HOAS--the most important features
% 3. Overview of the chapter organization

In this chapter we provide an overview of the higher-order logic
programming language $\lp$ whose implementation will be the subject of
the rest of the thesis. The foundation for this language is provided
by a subclass of formulas in an intuitionistic version of Church's
higher-order logic \cite{Church40}. This class of formulas, known as
{\em higher-order hereditary Harrop formulas}, enhances the
collection of first-order Horn clauses that underlie conventional logic
programming languages like {\em Prolog} in several significant
ways. In particular, the enriched formulas allow the arguments of
predicates to be $\lambda$-terms rather than just first-order terms,
they permit implications and universal quantifiers to be used in
queries thereby giving rise to new search primitives and they support
higher-order programming by including quantification over function and
(limited occurrences of) predicate symbols. By exploiting these
additions, $\lp$ provides strong support for what has come to be called
the {\em higher-order abstract syntax} approach to representing formal
syntactic objects~\cite{Pfenning88pldi2}.

Our presentation of $\lp$ below mixes a description of its theoretical
basis with a feeling for programming in the language. In
Section~\ref{sec:lambda_terms} we recall the simply typed
$\lambda$-calculus upon which the higher-order logic of interest is
based, presenting these terms in a way a $\lp$ user would encounter
them. In Section~\ref{sec:hohh} we introduce the higher-order
hereditary Harrop formulas and also describe at a high level the
computational interpretation that $\lp$ associates with these
formulas. In~Section~\ref{sec:language_example} we illustrate the idea
of higher-order abstract syntax and the support that $\lp$ provides
for this approach by considering an extended example. The discussion
in the first three sections assumes a simple monomorphic typing system
with $\lp$. In reality, the language allows for polymorphic typing. We
discuss this aspect in the last section of this chapter.

\section{The Simply Typed $\lambda$-Calculus}\label{sec:lambda_terms}
%Types
The logic underlying $\lp$ is based on a polymorphically typed version of the
simply-typed $\lambda$-calculus.
The types used in this calculus are constructed from sorts and type variables
by recursive applications of type constructors.
For simplicity, we initially restrict our attention to a simple,
monomorphically typed version of this calculus by leaving out the
usage of type variables. We eventually add these type variables to the
language in Section~\ref{sec:types_in_computation}.

In the interpretation used here, we assume given a set of sorts and
another set of type constructors each element of which is specified
with an arity. The types in the language are then described
through the following rules:
\begin{enumerate}
\item Each sort $s$ is a type;
\item $(c\app\tau_1\ ...\ \tau_n)$ is a type provided $c$ is a type
constructor of arity $n$, and $\tau_1$, ..., $\tau_n$ are types;
\item If $\tau_1$ and $\tau_2$ are types, then $\arrxy{\tau_1}{\tau_2}$ is a type.
\end{enumerate}
The type defined by the last rule is viewed as a function type, where $\ra$
is called the function type constructor. Types other than function types are
called atomic.
In the following discussions, the usage of parentheses is minimized by
assuming that $\ra$ is right associative and has a lower priority than
other type constructors. Under these assumptions, any function type can
be elaborated as $\alpha_1\ra...\ra\alpha_n\ra\beta$, where $\beta$ is
atomic. We call $\alpha_1$, ..., $\alpha_n$ the argument types of such
a type and we refer to $\beta$ as its target type.

From the programming perspective, the $\lp$ language starts out with a
set of ``built-in'' sorts and type constructors. This set contains $o$,
the type of propositions, and other primary types like {\it int}, {\it real},
{\it string} with obvious meanings.
It also includes a unary type constructor {\it list} which is used to
form types of homogeneous lists. These sets of sorts and type
constructors can be added to by the programmer by using declarations
that have the following form:
\begin{tabbing}
\quad\quad {\it kind} \quad {\it c} \quad\quad {\it type} $\ra$ ... $\ra$
{\it type}.
\end{tabbing}
Such a declaration associates with the symbol $c$ an arity that is one
less than the number of the occurrences of the keyword {\it type} in
it, and $c$ is considered a sort when its arity is zero.
As a concrete example, the following declarations define a binary type
constructor {\it pair} and a sort {\it i}.
\begin{tabbing}
\quad\quad {\it kind} \quad \={\it pair} \quad\quad\= {\it type}
$\ra$ {\it type} $\ra$ {\it type}.\\
\quad\quad {\it kind} \> {\it i} \> {\it type}.
\end{tabbing}
Based on the enhanced sets of sorts and type constructors, {\it (pair int i)},
{\it (list int $\ra$ $o$)} and {\it (pair i (i $\ra$ i))} are all legal
types.

%Terms
Assuming sets of typed constants
and variables, the terms of the simply typed $\lambda$-calculus are
identified together with their types through the following rules:
\begin{enumerate}
\item a constant or a variable of type $\tau$ is a term of type $\tau$;
\item  the expression $(\lambdax{x} t)$ is a term of type
  $\arrxy{\tau_2}{\tau}$ provided $x$ is a variable of type $\tau_1$
  and $t$ is a term of type $\tau_2$;
\item  the expression $(t_1\ t_2)$ is a term of type $\tau$ provided
  $t_1$ and $t_2$ are terms of type $\arrxy{\tau_1}{\tau}$ and
  $\tau_1$ respectively.
\end{enumerate}
Terms defined by the second and third rules are called {\it applications} and
{\it abstractions} respectively. We minimize the usage of parentheses by assuming
that applications are left associative and that abstractions have higher precedence
than applications.

Abstractions are of special interest among the categories of
terms, because it is they that endow the language the ability
to explicitly represent binding. From a scoping perspective, an
abstraction term of the form $\lambdax{x}t$ captures the concept that
$x$ is a variable that ranges over $t$.
From the perspective of meaning, such a term can be understood as a
function definition in which $x$ is the formal parameter and $t$ is the function body,
\ie, supplied with an actual parameter, say $t_2$, the evaluation result of this function
should be a variation of $t$ in whose structure the occurrences of $x$ are replaced
by $t_2$. Such an evaluation process is encompassed by an application
term $(t_1\ t_2)$ where $t_1$ denotes a function definition and $t_2$ an actual parameter.

The intended meanings of $\lambda$-terms are made formal by defining a
notion of equality between them that takes into account the binding
and functional character of abstractions discussed above. The
formation rules for these terms gives rise to a natural notion of
subcomponents or subterms. Further, let us say that an occurrence of a
variable $y$ is bound or free in a term $t$ depending on whether or
not it appears within a subterm of the form $\lambdax{y} t'$ and that a
variable is free or bound in $t$ if it has a free or bound occurrence
in it. Finally, let $t[x:=s]$ denote the result of replacing
all the free occurrences of $x$ by $s$ in $t$, where $t$ and $s$ are
terms and $x$ is a variable of the same type as that of $s$. In this
context the rules of $\lambda$-conversion that identify the desired
equality notion are defined as follows:
\begin{description}
\item[($\alpha$-conversion)]
 Replacing a subterm of form $\lambdax{x}t$ of
a given term with $\lambdax{y}(t[x:=y])$, provided $y$ is a variable with
the same type as that of $x$ and not occur in $t$.
\item[($\beta$-conversion)]
Replacing a subterm of form $(\lambdax{x}t)\app s$ of a given term
with $t[x:=s]$ or vice versa,
provided for every free variable $y$ of $s$, $y$ does not have a bound
occurrence in $t$. The subterm $(\lambdax{x}t)\app s$ is known as a $\beta$-redex,
\item[($\eta$-conversion)] Replacing a subterm of form $\lambdax{x}(t\app x)$
of a given term with $t$ or vice versa, provided $t$ is of type
$\arrxy{\alpha}{\beta}$, $x$ is a variable of type $\alpha$ and
not appear free in $t$.
\end{description}
The rule of $\alpha$-conversion recognizes the irrelevance of the names
of bound variables in an abstraction. For example, the terms $(\lambdax{x}x)$
and $(\lambdax{y}y)$ encode the same identification function despite
the different names given to its formal parameter. The
$\beta$-conversion rule formalizes the notion of function evaluation
discussed earlier. This rule initially seems limited because its
application requires $s$ not to have free variables that are bound in
$t$. However, if this condition is not satisfied at the beginning then
a sequence of $\alpha$-conversions can be used to rename the bound
variables in $t$ to avoid the name collisions.
The $\eta$-conversion rule encompasses the common assumption in
mathematics that the functions $f$ and $g$ are equal if
for every term $t$ of a suitable type, the function applications $(f\app t)$
and $(g\app t)$ are equal.

A pair of $\lambda$-terms are considered equal if they can be obtained from each
other by a sequence of applications of $\alpha$-, $\beta$- or $\eta$- rules.
The computation underlying $\lp$ is in fact organized around a process of comparing
$\lambda$-terms based on this notion of equality, and this process is known as
{\it unification}. The concept of unification will be discussed in details in
Chapter~\ref{chp:interpreter}. For now, we can simply understand it as a
matching process during which variables that are free at the top-level of
the terms can be replaced by some other term structures in attempting to make
the terms equal. A key requirement in such a replacement, however, is
that we cannot introduce variable occurrences that get captured by
abstractions occurring in the term into which the replacement
is done.

The last issue with regard to understanding the data structure of $\lp$ is
about the usage of constants from the programming perspective.
The set of constants of this language can be partitioned into two sub-categories
as {\it logical} and {\it non-logical} ones.
The language has internal interpretations to logical constants,
and they can be used to construct high level computation control.
This set of constants consists of
the symbols $\top$ of type $o$, denoting the tautological proposition, the symbols
$\conj$, $\disj$, $\imp$, of type $\arrxy{o}{\arrxy{o}{o}}$, corresponding to
logical conjunction, disjunction and implication respectively, and sets of
symbols $\Pi_{\alpha}$ and $\Sigma_{\alpha}$ of type
$\arrxy{(\arrxy{\alpha}{o})}{o}$ for each type $\alpha$.
The last two (families of) logical constants are used to construct
universal and existential quantifications:
formulas usually written as $\allx{x}\app t$ and
$\exists{x}\app t$ are encoded as
$\Pi_{\alpha}\lambdax{x} t$ and $\Sigma_{\alpha}\lambdax{x} t$,
where $x$ is a variable of type $\alpha$. The type subscripts associated with these
constants will be left out when they are not essential to our discussion.
Further, when the context is clear, we will still use the conventional
$\allx{x}\app t$ and $\exists{x}\app t$ representations for quantifications, and use $\conj$,
$\disj$ and $\imp$ as infix operators for better readability.

Constants other than the logical ones belong to the non-logical set.
Built-in support is provided to a primary collection of it,
and user can increment this set by defining their own in the course of programming.
The initial set of non-logical constants consists of the sets of integers, real numbers,
strings (character sequence enclosed by double quotes),
{\it nil$_\alpha$} of type {\it (list $\alpha$)}
and the right-associative binary infix operator {\it ::$_\alpha$} of type
{\it ($\alpha\ra$ list $\alpha\ra$ list $\alpha$)}.
The last two (families of) constants are used for encoding homogeneous lists
of element type $\alpha$, $\eg$ an integer list can be denoted as
{(1 ::$_{int}$ 2 ::$_{int}$ nil$_{int}$)}. Again, the type annotations of {\it list}
and {\it nil} will be omitted when the context is clear.

Users can define new non-logical constants together with their types
through declarations of the following kind
\begin{tabbing}
\quad\quad {\it type} \quad {\it const} \quad\quad {\it $<$type$>$}.
\end{tabbing}
where {\it $<$type$>$} should be replaced by the actual type of the
constant. Such declarations will typically be used when a new set of
constants is needed for encoding objects that need to be computed
over. As a concrete example, suppose that our computational task
requires us to represent the collection of of closed untyped
$\lambda$-terms built from the sole constant symbol {\it a}. The
following declarations then identify the required symbols
within $\lp$ to realize an encoding of such terms:
\begin{tabbing}
\quad\quad\= {\it kind} \quad\= {\it tm} \quad\quad\= {\it type}. \\
\> {\it type} \> {\it a}   \> {\it tm}.\\
\> {\it type} \> {\it app} \> $\arrxy{tm}{\arrxy{tm}{tm}}$.\\
\> {\it type} \> {\it abs} \> $\arrxy{(\arrxy{tm}{tm})}{tm}$.
\end{tabbing}
A sort {\it tm} is first declared as the type of the set of object-level terms, \ie, the set
of terms to be represented.
The second line above declares a constant {\it a} as the only object-level constant term.
Constants {\it app} and {\it abs} are the selected constructors for denoting object-level
applications and abstractions respectively: an object-level application can be formed by
applying {\it app} to two arguments of type {\it tm}, whereas an object-level abstraction
is denoted by applying {\it abs} to a meta-level abstraction of type $\arrxy{tm}{tm}$.
Within such a setup, an object-level term $(\lambdax{x}(a\ x))\ (\lambdax{y}y)$ can be
represented
as $app\ (abs\ (\lambdax{x}(app\ a\ x)))\ (abs\ (\lambdax{y}y))$.
Based on the above representations, now we can think of realizing operations over
the object-level terms. For example, suppose a copy operation, whose functionality is to duplicate a
given object-level term, is of interest. We can declare a predicate constant, \ie,
constant with proposition target type, named {\it copy} for this purpose.
\begin{tabbing}
\quad\quad{\it type} \quad {\it copy} \quad\quad $\arrxy{tm}{\arrxy{tm}{o}}$.
\end{tabbing}
We expect that this predicate evaluates to {\it true} if and only if its first and
second arguments are identical to each other. Such functionality can be specified through
definitions of predicates constructed by {\it formulas} in our language, which are
discussed in the next section.

\section{Higher-Order Hereditary Harrop Formulas}\label{sec:hohh}

The language of {\it higher-order hereditary Harrop} or {\it hohh
  formulas} is determined by two special classes of expressions: the
$G$-formulas that function as {\it goals} or {\it queries} in a logic
programming setting and the $D$-formulas that function as {\it program
  clauses} or {\it definition clauses} in this context. These formulas
are essentially subsets of $\lambda$-terms of type $o$ that are
constructed from recursive applications of logical constants with
certain restrictions.

Using symbol $P$ to denote a non-logical constant or a variable, we define
an {\it atomic formula} as a term of type $o$ with the structure
$(P\app t_1\ ...\ t_n)$, where, for $1 \leq i \leq n$, the only
logical constants appearing in each $t_i$ are $\conj$, $\disj$,
$\Sigma$, or $\Pi$; a term satisfying such a restriction is referred
to as a {\em positive} term.
If the head $P$ of an atomic formula is a variable, the
formula is said to be {\it flexible} and otherwise it is said to be
{\it rigid}. Using the symbol $A$ to denote
atomic formulas and $A_r$ to denote rigid atomic formulas, the sets of goals $G$ and
program clauses $D$ is identified by the following syntactical rules:
\begin{tabbing}
\quad\quad\= $G ::=\top$  $\vert$  $ A$  $\vert$  $\andxy{G}{G}$  $\vert$
$\orxy{G}{G}$  $\vert$  $\exists{x}G$ $\vert$ $\forall{x}G$ $\vert$ $\impxy{D}{G}$.\\
\>           $D ::= A_r$  $\vert$  $\impxy{G}{A_r}$  $\vert$  $\andxy{D}{D}$  $\vert$  $\forall{x} D$.
\end{tabbing}
In a program clause of form $A_r$ or $\impxy{G}{A_r}$, $A_r$ is called the {\it head} of the clause,
and for a clause of the latter form, $G$ is said to be its {\it body}.
The goals in the forms of $\forall{x}G$ and $\impxy{D}{G}$ are called
{\it generic} and {\it augment} goals respectively.

A program in $\lp$
is a set of closed clauses, \ie, a set of clauses that do not contain
any free variables.
Computation in $\lp$ corresponds to solving a top-level closed query
against a given program and relative to a given signature that
identifies the set of available constants. The program at the
beginning consists of all the clauses that the user of $\lp$ has
provided at the top-level and the signature consists of all the
built-in constants as well as those identified through user
declarations. The manner in which the computation proceeds is dictated
by the top-level structure of the query as indicated by the rules
below.
\begin{enumerate}
\item The goal $\top$ leads to an immediate solution regardless of the
  program and the signature.
\item The goal $\andxy{G_{1}}{G_{2}}$ is
  solved against any program and signature by solving both $G_{1}$ and
  $G_{2}$ using the same program and signature.
\item The goal $\orxy{G_{1}}{G_{2}}$ is solved against any program by
  solving one of $G_{1}$ or $G_{2}$ using the same program and
  signature.
\item The goal $\exists{x}G$ is solved against a program and a
  signature by picking a closed term $t$ of the same type as $x$
  that is constructed using only the constants in the given
  signature and then solving  $G[x := t]$ from the same
  program and signature; notice that the correctness of the
  replacement of $x$ by $t$ here is guaranteed by the fact that $t$ is
  closed.
\item The goal $\forall{x}G$ is solved against a given program $\cal
  P$ and signature $\Sigma$ by selecting a constant $c$ of the same
  type as $x$ that does not
  belongs to $\Sigma$ and then solving $G[x := c]$ against
  the program $\cal P$ and the signature $\Sigma \cup \{c\}$.
\item The goal $\impxy{D}{G}$ is solved against a program $\cal P$ and
  a signature $\Sigma$ by solving $G$ against the program ${\cal P}
    \cup \{D\}$ and the signature $\Sigma$.
\item The rigid atomic goal $A_r$ is solved from a program $\cal P$
  and a signature $\Sigma$ by picking a clause from $\cal P$,
  instantiating all the top-level universally quantified variables in
  it with closed terms constructed using only constants in $\Sigma$ to
  get a formula that $\lambda$-converts to the form $A_r$ or
  $\impxy{G}{A_r}$ and, in the latter case, solving the goal $G$
  from the program $\cal P$ and signature $\Sigma$.
\end{enumerate}
An important point to note with regard to the rules presented above is
that they can lead, in particular instances, to changes in the program
and the signature against which a query is to be solved. In
particular, a generic goal can extend the signature and an augment goal
can lead to additions to the program. These kinds of goals thus have
the ability to give names and clauses a scope over particular
computations. This situation is to be contrasted with the usual Horn
clauses that underlie {\em Prolog}; generic and augment goals are not
permitted in that setting and consequently the scoping ability in
question is absent there.

The above description of the operational semantics for $\lp$ is not
yet suitable to be used as a basis for implementation. First, we are
assuming an oracle in picking a proper instance for existentially
quantified variables in queries and universally quantified variables
in clauses.  Second, we have not specified how to select clauses for
solving rigid atomic goals when multiple possibility exists and nor
have said how to select the disjunct to solve when processing
disjunctive goals. Finally, a practical means is needed for
controlling the visibility of constants and clauses
introduced in generic and augment goals. We defer a discussion of
these issues till Chapter~\ref{chp:interpreter}, hoping that
enough details have been provided here to make clear when a particular
computation has been correctly carried out.

The new scoping mechanisms present in {\it hohh} formulas provides the
ability to realize recursion over abstractions in $\lambda$-terms and,
thus, over binding structures present in object languages over which
we are interested in describing computations.
To illustrate this capability, we consider the {\it copy} predicate
introduced in the previous section and show how it can be defined
in the $\lp$ language.

Assuming the representation for $\lambda$-terms that we have already
presented, it is very natural to define the copying computation for constant
and applications with the following two clauses:
\begin{tabbing}
\quad\quad\=({\it copy} {\it a} {\it a})\\
\>          ($\allx{t_1}\allx{t_2}\allx{t_3}\allx{t_4}\ (\impxy{(\andxy{copy\ t_1\ t_3}{\it copy\ t_2\ t_4})}{copy\ (app\ t_1\ t_2)\ (app\ t_3\ t_4)})$)
\end{tabbing}
These clauses simply state that a copy of the constant {\it a} is the
constant itself, and copying an application can be
carried out by constructing a new application over the copies of its arguments.
Now we need to consider how to recursively copy an abstraction of form {\it (abs ($\lambdax{x}$t))}.
Intuitively, we would like to have a clause of form
\begin{tabbing}
\dquad$\allx{t_1}\allx{t_2}\ (\impxy{copy\ t_1\ t_2}{copy\ (abs\ t_1)\ (abs\ t_2)})$
\end{tabbing}
to descend into the argument of {\it abs}. However, this clause is illegal because the arguments of {\it copy} should have type {\it tm}
whereas the argument of {\it abs} has type $\arrxy{tm}{tm}$ which essentially corresponds to an
abstraction $\lambdax{x}t$. A more careful consideration reveals that the copy of
$\lambdax{x}t$ in fact can be realized by first constructing a copy
for $t[x:=c]$ where $c$ is a new constant, and then
constructing an abstraction over the structure that results from
extracting $c$ out of this
copy. These operations can be easily expressed by using a generic
goal. In particular, consider the clause
\begin{tabbing}
\quad\quad$\allx{t_1}\allx{t_2}\ (\impxy{(\allx{c}\ copy\ (t_1\ c)\ (t_2\ c))}{copy\ (abs\ t_1)\ (abs\ t_2)})$.
\end{tabbing}
The generic goal that appears in this clause will lead to the
introduction of a new constant $c$. By applying $t_1$, which
essentially corresponds to an abstraction $\lambdax{x}t$ to $c$, the
substitution $t[x:=c]$ is automatically taken care of. Once this
structure has been copied, the control of the scope of $c$ embodied in
the generic goal ensures that the only correct instantiation of $t_2$
would be one that extracts $c$ out of $t[x:=c]$ and
constructs an abstraction over it. Thus the recursion over
abstractions in defining
{\it copy} is accomplished by the use of a generic goal.
However, our program is still not
entirely correct because there is no clause so far specifying how to copy the constant
$c$ introduced by the generic goal. The computation itself is very simple and
can be specified by a clause of form {\it copy c c}, but this clause cannot be simply added into
our program at the top-level because the constant $c$ is only visible inside the generic goal we discussed
above. The solution is to enhance the clause for copying abstractions by the use of an augment goal
\begin{tabbing}
\quad\quad$\allx{t_1}\allx{t_2}\ (\impxy{(\allx{c}\ (\impxy{copy\ c\ c}{copy\ (t_1\ c)\ (t_2\ c)}))}{copy\ (abs\ t_1)\ (abs\ t_2)})$.
\end{tabbing}
Now the clause {\it copy c c} has its scope inside that of $c$, so that it is effective only when
computation descends into the body of an abstraction.

\begin{figure}
\begin{tabbing}
\quad\quad\={\it copy a a}. \\
\> {\it copy (app T1 T2) (app T3 T4)} \= $\pif$ {\it copy T1 T3, copy T2 T4}. \\
%\>{\it copy (abs T1) (abs T2)}\> $\pif$ $\forall{c}\ (\impxy{copy\ c \ c}{copy\ (T1\ c)\ (T2\ c)})$.
\>{\it copy (abs T1) (abs T2)}\> $\pif$ $Pi\ c\plam \ (copy\ c\ c\ \pimp \ copy\ (T1\ c)\ (T2\ c))$.
\end{tabbing}
\caption{A program defining the predicate {\it copy}.}\label{fig:copy}
\end{figure}

We shall find it convenient to use in the rest of this thesis a {\em
  Prolog}-like syntax in presenting program clauses that are meant to
constitute $\lp$ programs. In particular, we always omit top-level
conjunctions in a program and use a period to terminate
top-level clauses.
Second, we use capitalized names for universally quantified variables over top-level clauses and for existentially
quantified variables over top-level goals and leave the quantifiers implicit. Third, we use the syntax {\it $A_r$ $\pif$ G.}
to denote top-level clauses of form $\impxy{G}{A_r}$. Finally, comma
and semicolon will be used to denote $\conj$ and $\disj$ respectively.
Based on these conventions and using the concrete syntax $\pimp$ for $\imp$, {\it Pi} for $\forall$, and the
infix operator $\plam$ for $\lambda$, the {\it copy} program that we
have just described would be presented concretely as in Figure~\ref{fig:copy}.

\section{An Extended Example }\label{sec:language_example}
We now provide a closer look at the power of $\lp$ and a better
feeling for programming in it by considering a extended example of its
use in a meta-programming task. The particular task we consider is
that of encoding formulas from a first-order logic and realizing a
syntactic transformation on them to produce their prenex normal forms,
\ie, a form in which all the quantifiers appear at the head of the
formula.

\begin{figure}[t]
\begin{tabbing}
\quad\quad\={\it kind}\quad\={\it form}\quad\quad\={\it type}. \\[5pt]
\>{\it type}\>{\it truth}\>{\it form}.\\
\>{\it type}\>{\it false}\>{\it form}.\\
\>{\it type}\>{\it and}\>$\arrxy{form}{\arrxy{form}{form}}$. \\
\>{\it type}\>{\it or}\>$\arrxy{form}{\arrxy{form}{form}}$.  \\
\>{\it type}\>{\it imp}\>$\arrxy{form}{\arrxy{form}{form}}$. \\
\>{\it type}\>{\it all}\>$\arrxy{(\arrxy{term}{form})}{form}$. \\
\>{\it type}\>{\it some}\>$\arrxy{(\arrxy{term}{form})}{form}$.
\end{tabbing}
\caption{Encoding the logical symbols in an object
  logic.}\label{fig:prenex_formula}
\end{figure}

The formulas that we want to encode will be from a logic that, as
usual, is characterized by logical and non-logical symbols.
The logical symbols that we assume here are $\top$, $\bot$,
$\conj$, $\disj$, $\imp$, $\forall$ and $\exists$. We shall encode
these by using the constants {\it truth}, {\it false}, $and$, $or$,
$imp$, $all$ and $some$, respectively. In encoding the quantifiers,
we, once again, separate a treatment of their meanings from a treatment
of their binding
effects. Figure~\ref{fig:prenex_formula} contains a set of declarations
that identify these constants; the type {\it form} is used in the
encoding to correspond to the category of formulas. For the
non-logical vocabulary, we shall assume that the object logic has
three constants {\it a}, {\it   b} and {\it c} and a single function
symbol {\it f} with arity 1. Beyond this, we assume two binary
predicate symbols {\it adj} and {\it path}; intuitively, these
symbols serve to describe graphs, with the first being used to
describe an adjacency relation and the second the relation
corresponding to the existence of a path between two nodes. Using the
type {\it term} to represent object logic terms, the declarations in
Figure~\ref{fig:prenex_term} provide an encoding of this non-logical
vocabulary.

\begin{figure}[t]
\begin{tabbing}
\quad\quad\={\it kind}\quad\={\it term}\quad\quad\={\it type}. \\[5pt]
\>{\it type}\>{\it a}\>{\it term}.\\
\>{\it type}\>{\it b}\>{\it term}.\\
\>{\it type}\>{\it c}\>{\it term}.\\
\>{\it type}\>{\it f}\>$\arrxy{term}{term}$.\\[5pt]
\>{\it type}\>{\it adj}\>$\arrxy{term}{\arrxy{term}{form}}$. \\
\>{\it type}\>{\it path}\>$\arrxy{term}{\arrxy{term}{form}}$.
\end{tabbing}
\caption{Encoding the non-logical symbols in an object logic.}\label{fig:prenex_term}
\end{figure}

We illustrate our encoding of formulas by considering the
representation of the following object-level formula that describes a
graph with four nodes and that describes the {\it path} relation in
terms of the {\it adj} relation:
\begin{tabbing}
\quad\quad\= $adj(a,\ b)\ \conj$ \\
\> $adj(b,\ c)\ \conj$ \\
\> $adj(c,\ f(c))\ \conj$ \\
\> $(\allx{x}\allx{y}\ (\impxy{adj(x,\ y)}{path(x,\ y)}))\ \conj$\\
\> $(\allx{x}\allx{y}\allx{z}\ (\impxy{(\andxy{adj(x,\ y)}{path(y,\ z)})}{path(x,\ z)}) $
\end{tabbing}
The $\lp$ term that represents this formula is the following:
\begin{tabbing}
\quad\quad\= $(and\ $\=$(adj\ a\ b)$ \\
\>\>                   $(and\ $\=$(adj\ b\ c)$ \\
\>\>\>                           $(and\ $\=$(adj\ c\ (f\ c))$ \\
\>\>\>\>                                   $(and\ $\=$(all\ x\plam\ (all\ y\plam\ (imp\ (adj x\ y)\ (path\ x\ y))))$ \\
\>\>\>\>\>                                           $(all\ x\plam\ (all\ y\plam\ (all\ z\plam\ (imp\ $\=$(and\ (adj\ x\ y)\ (path\ y\ z))$ \\
\>\>\>\>\>\>                                                                                             $(path\ x\ z))))))))).$
\end{tabbing}

\begin{figure}
\begin{tabbing}
\quad\quad{\it type}\quad$\itm$\quad\quad$\arrxy{term}{o}$. \\
\quad\quad\=$\itm\ (f\ X)\ $\= \kill
\>$\itm\ a$. \\
\>$\itm\ b$. \\
\>$\itm\ c$. \\
\>$\itm\ (f\ X)$ \>$\pif\ \itm\ X$.\\[5pt]
\quad\quad{\it type}\quad$\ia$\quad\quad$\arrxy{form}{o}$. \\
\quad\quad\=$\ia\ (path\ A\ B)\ $\= \kill
\>$\ia\ (adj\ X\ Y)$ \>$\pif\ \itm\ X,\ \itm\ Y$. \\
\>$\ia\ (path\ X\ Y)$ \>$\pif\ \itm\ X,\ \itm\ Y$. \\[5pt]
\quad\quad{\it type}\quad$\qf$\quad\quad$\arrxy{form}{o}$. \\
\quad\quad\=$\qf\ (and\ A\ B)\ $\= \kill
\>$\qf\ truth$. \\
\>$\qf\ false$. \\
\>$\qf\ A$\>$\pif\ \ia\ A$.\\
\>$\qf\ (and\ A\ B)$\>$\pif\ \qf\ A,\ \qf\ B$. \\
\>$\qf\ (or\ A\ B)$\>$\pif\ \qf\ A,\ \qf\ B$. \\
\>$\qf\ (imp\ A\ B)$\>$\pif\ \qf\ A,\ \qf\ B$.
\end{tabbing}
\caption{Some recognizers for encodings of object logic categories.}\label{fig:prenex_quantifier_free}
\end{figure}

With this representation in place, we consider the specifications of
the simple properties of being (the encodings of) a term, an atomic
predicate and a quantifier
free formula. Predicates recognizing these attributes of $\lp$ terms
are presented in Figure~\ref{fig:prenex_quantifier_free}; the
names $\itm$, $\ia$ and $\qf$ are used for recognizers for each of
these categories, respectively.

\begin{figure}\small
\begin{tabbing}
{\it type}\quad{\it prenex}\quad\quad$\arrxy{form}{\arrxy{form}{o}}$.\space\\ \space
\={\it prenex (some X) (some Y) }\= \kill
\>{\it prenex truth truth}. \qquad\qquad{\it prenex false false}. \\
\>{\it prenex B B}\>$\pif\ia\ B.$ \\
\>{\it prenex $($and B C$)$ D}\>$\pif${\it prenex B U, prenex C V, mrg $($and U V$)$ D.} \\
\>{\it prenex $($or B C$)$ D}\>$\pif${\it prenex B U, prenex C V, mrg $($or U V$)$ D.} \\
\>{\it prenex $($imp B C$)$ D}\>$\pif${\it prenex B U, prenex C V, mrg $($imp U V$)$ D.} \\
\>{\it prenex $($all B$)$ $($all D$)$}\>$\pif${\it Pi x$\plam($term x $\pimp$ prenex $($B x$)$ $($D x$))$}.\\
\>{\it prenex $($some B$)$ $($some D$)$}\>$\pif${\it Pi x$\plam($term x $\pimp$ prenex $($B x$)$ $($D x$))$}.\\[2.9pt]
{\it type}\quad{\it mrg}\quad\quad$\arrxy{form}{\arrxy{form}{o}}$.\\
\={\it mrg (and (all X) (all Y)) (all Z) }\= \kill
\>{\it mrg $($and $($all B$)$ $($all C$))$ $($all D$)$ }\>$\pif$
{\it Pi x$\plam($term x $\pimp$ mrg $($and $($B x$)$ $($C x$))$ $($D x$))$}. \\
\>{\it mrg $($and $($all B$)$ C$)$ $($all D$)$ }\>$\pif$
{\it Pi x$\plam($term x $\pimp$ mrg $($and $($B x$)$ C$)$ $($D x$))$}. \\
\>{\it mrg $($and $($some B$)$ C$)$ $($some D$)$ }\>$\pif$
{\it Pi x$\plam($term x $\pimp$ mrg $($and $($B x$)$ C$)$ $($D x$))$}. \\
\>{\it mrg $($and B $($all C$))$ $($all D$)$ }\>$\pif$
{\it Pi x$\plam($term x $\pimp$ mrg $($and B $($C x$))$ $($D x$))$}. \\
\>{\it mrg $($and B $($some C$))$ $($some D$)$ }\>$\pif$
{\it Pi x$\plam($term x $\pimp$ mrg $($and B $($C x$))$ $($D x$))$}. \\

\>{\it mrg $($or $($some B$)$ $($some C$))$ $($some D$)$ }$\pif$\\
\squad\squad{\it Pi x$\plam($term x $\pimp$ mrg $($or $($B x$)$ $($C x$))$ $($D x$))$}. \\

\>{\it mrg $($or $($all B$)$ C$)$ $($all D$)$ }\>$\pif$
{\it Pi x$\plam($term x $\pimp$ mrg $($or $($B x$)$ C$)$ $($D x$))$}. \\
\>{\it mrg $($or $($some B$)$ C$)$ $($some D$)$ }\>$\pif$
{\it Pi x$\plam($term x $\pimp$ mrg $($or $($B x$)$ C$)$ $($D x$))$}. \\
\>{\it mrg $($or B $($all C$))$ $($all D$)$ }\>$\pif$
{\it Pi x$\plam($term x $\pimp$ mrg $($or B $($C x$))$ $($D x$))$}. \\
\>{\it mrg $($or B $($some C$))$ $($some D$)$ }\>$\pif$
{\it Pi x$\plam($term x $\pimp$ mrg $($or B $($C x$))$ $($D x$))$}. \\

\>{\it mrg $($imp $($all B$)$ $($some C$))$ $($some D$)$ }$\pif$ \\
\squad\squad{\it Pi x$\plam($term x $\pimp$ mrg $($and $($B x$)$ $($C x$))$ $($D x$))$}. \\
\>{\it mrg $($imp $($all B$)$ C$)$ $($some D$)$ }\>$\pif$
{\it Pi x$\plam($term x $\pimp$ mrg $($imp $($B x$)$ C$)$ $($D x$))$}. \\
\>{\it mrg $($and $($some B$)$ C$)$ $($all D$)$ }\>$\pif$
{\it Pi x$\plam($term x $\pimp$ mrg $($imp $($B x$)$ C$)$ $($D x$))$}. \\
\>{\it mrg $($and B $($all C$))$ $($all D$)$ }\>$\pif$
{\it Pi x$\plam($term x $\pimp$ mrg $($imp B $($C x$))$ $($D x$))$}. \\
\>{\it mrg $($and B $($some C$))$ $($some D$)$ }\>$\pif$
{\it Pi x$\plam($term x $\pimp$ mrg $($imp B $($C x$))$ $($D x$))$}. \\
\>{\it mrg B B}\>$\pif \ $\qf$\ B$.
\end{tabbing}
\vspace{-0.8cm}
\caption{A specification of the prenex-normal form relation.}\label{fig:prenex}
\end{figure}

We now consider the encoding of the prenex normal form
relation. Specifically we are interested in writing down a set of
program clauses that define a predicate {\it prenex} such
that a goal of the form {\it prenex A B} is solvable from them just in
the case that {\it A} and {\it B} are both encodings of formulas and,
further, {\it B} represents a prenex normal form of the formula
represented by {\it A}. The definition of this predicate is presented
in Figure~\ref{fig:prenex}. Use is made in this definition of an
auxiliary predicate {\it mrg} for the purpose of raising quantifiers
over binary connectives. The definitions of
of both {\it prenex} and {\it mrg} use generic and augment goals in
a fashion already illustrated with the definition of the {\it copy}
predicate  to analyze and synthesize abstraction structures so as to
realize a recursion over the representation of quantified formulas.

Given program in Figure~\ref{fig:prenex}, the query
\begin{tabbing}
\quad\quad{\it ?- prenex} \={\it $($or }\={\it$($all x$\plam($and $($adj x x$)$ $($and }\={\it$($all y$\plam($path x y$))$} \\
\>\>\>{\it $($adj $($f x$)$ c$))))$}\\
\>\>{\it $($adj a b$)$} \\
\>{\it Pnf}.
\end{tabbing}
should succeed by instantiating the top-level existentially quantified
variable {\it Pnf} to the term
\begin{tabbing}
\dquad{\it $($all x$\plam($all y$\plam($or $($and $($adj x x$)$ $($and $($path x y$)$ $($adj $($f x$)$ c$)))$ $($adj a b$))))$}.
\end{tabbing}
For another example, the query
\begin{tabbing}
\dquad{\it ?- prenex $($and $($all x$\plam($adj x x$))$ $($all z$\plam($all y$\plam($adj z y$))))$ Pnf}.
\end{tabbing}
is also solvable with any one of the following five instantiations for
the variable {\it Pnf}:
\begin{tabbing}
\dquad{\it all z$\plam($all y$\plam($and $($adj z z$)$ $($adj z y$)))$},\\
\dquad{\it all z$\plam($all x$\plam($and $($adj x x$)$ $($adj z x$)))$},\\
\dquad{\it all x$\plam($all z$\plam($all y$\plam($and $($adj x x$)$ $($adj z y$))))$},\\
\dquad{\it all z$\plam($all x$\plam($all y$\plam($and $($adj x x$)$ $($adj z y$))))$}, and\\
\dquad{\it all z$\plam($all y$\plam($all x$\plam($and $($adj x x$)$ $($adj z y$))))$}.
\end{tabbing}
The multiple solutions listed above are a result of the existence of multiple matching clauses when solving
atomic goals in the course of computation.

We have only considered one example of the use of $\lp$ in encoding
computations over binding structures but, hopefully, this example will
provide the background for understanding our later discussions about
implementation. An interested reader can find several other examples
in the literature; such examples and a discussion of the
meta-programming capabilities of the language may be found, for
instance, in \cite{NM98Handbook}. In realizing such computations we
will have to find an effective way for treating varied aspects such as
search and the selection of instantiation terms, issues that we have
ignored in the presentation here as noted already. We will take these
issues up seriously in Chapter~\ref{chp:interpreter}. Anticipating
that discussion we note that the examples of {\it prenex} and {\it
  copy} both belong to the $\Ll$ fragment of $\lp$ programming, a class
for which the unification computation is decidable and admits unique
solutions and for which we are interested in developing a good
treatment in this thesis.

\section{Polymorphism and the Role of Types in Computation}\label{sec:types_in_computation}
%1. introduction of type variable
%2. parametric polymorphism and ad hoc polymorphism
%3. using type to select clause in ad hoc polymorphism--aid to decide whether two terms
%(two constants in particular) are equal.
Our presentation of $\lp$ up to now has treated it as if it is
monomorphically typed. In reality, the type system of $\lp$ allows for
polymorphism to provide flexibility and convenience in
programming. Such polymorphism is obtained by admitting the use of
type variables.  In particular, in addition to the sets of sorts and
type constructors, an infinite supply of type variables is also
assumed. Sorts and type variables are basic types, starting from which
constructed types, including function types, are built by recursive
applications of type constructors.  In the subsequent discussion, we
use capital letters to denote type variables.

Intuitively, a type variable can be viewed as an abbreviation of an
infinite set of types in the monomorphic type system. Thus a type with
type variables occurring inside in fact provides a schema for a family
of types: sets of more specific types can be generated by replacing
the contained type variables with other types. For instance, the
polymorphic type {\it list A} denotes a family of list types such as
{\it list int} for integer lists, {\it list $($list int$)$} for list
of integer lists, and {\it list $($list B$)$} for list of lists whose
element is of type {\it B}, where $B$ can again be instantiated by
arbitrary types. Consequently, a constant declared with a polymorphic
type can be viewed as an abbreviation of an infinite set of constants,
each element of which has a monomorphic type as an instance of the
type schema. For example, previously we have families of empty list
$nil_{\alpha}$ and list constructor $::_{\alpha}$ for each monomorphic
type $\alpha$. Now these sets can be abbreviated into two constants
{\it nil} of type {\it list A} and {\it ::} of type
$\arrxy{A}{\arrxy{\it list\ A}{\it list\ A}}$. By instantiating the
type variable {\it A} to {\it int}, an integer list can be denoted by
{\it 1 :: 2 :: nil}. Note that the instantiation of the type variable
has to be performed in a uniform manner across the entire polymorphic
type.  For example, a structure of form {\it 1 :: "a" :: nil} is not
well-typed since the integer argument of the first {\it ::} requires
its type variable being replaced by {\it int}, whereas the second
string list argument demands it being replaced by {\it string}
instead.

The idea of using polymorphic types to abbreviate sets of constants
can also be applied to clause definitions. An example for such a usage
is the predicate {\it append} which concatenates the lists in its
first two arguments into the third one.
\begin{tabbing}
\dquad\={\it type}\quad{\it append}\quad\quad\=$\arrxy{\it list\ A}{\arrxy{\it list\ A}{\arrxy{\it list\ A}{o}}}$. \\
\>{\it append nil L L.}\\
\>{\it append $($X :: L1 $)$ L2 $($X :: L3$)$ }$\pif\ ${\it append L1 L2 L3}.
\end{tabbing}
The two clauses defining {\it append} are shared by the append
operation of an infinite set of list types.  From the programming
perspective, this sort of polymorphism is known as {\it parametric},
where a function (predicate) works uniformly over a range of types.

In addition to parametric polymorphism, another sort of polymorphism
is embodied by $\lp$, which is obtained when a function (predicate)
works in unrelated ways on several different types and is known as
{\it ad hoc} polymorphism. An example is provided by the following
definitions of predicate {\it print}, in which we assume predicates
{\it write$\_$int}, {\it write$\_$string} and {\it write$\_$list} for
printing out given arguments of type {\it int}, {\it string} and {\it
  list A} respectively to the standard IO.
\begin{tabbing}
\quad\quad\= $kind$ \quad \=$string$\quad\quad\quad\quad\=$type$. \kill
\>           $type$ \>  $print$\> $A$ $\ra$ $o$. \\
\>           $type$ \>  $write\_int$ \> $int$ $\ra$ $o$.\\
\>           $type$ \>  $write\_string$ \> $string$ $\ra$ $o$.\\
\>           $type$ \>  $write\_list$\> $(list\app A)$ $\ra$ $o$.\\
\\
\> $print\app N\app :-\app write\_int\app N$.\\
\> $print\app L\app :-\app write\_list\app L$. \\
\> $print\app S\app :-\app write\_string\app S$.
\end{tabbing}
In the execution of the above program, the dispatching to different
{\it write} methods depends on the type of the first argument of {\it
  print}, which is inherited from the top-level {\it print} query. To
achieve this effect, types should participate in the computation of
solving goals. In particular, they are necessary for deciding the
equality of constants, $\eg$, the dispatching in the {\it print}
example is based on the fact that a constant {\it print} of integer
type is different from those of string type or list types.

Based on the above discussions, it is clear that the roles that types
play in our language are two-fold.  First, they are used to ensure the
correctness of programs, and second they participate in computation
for deciding the solvability of queries. When playing the first role,
types are used to identify legitimate terms by restricting the
applicability of specific operations, thereby providing a control over
the computations that can be attempted. From this perspective, the
usage of our type system is very similar to that of the functional
language SML~\cite{DM82POPL, Harper86SML}. Naturally, it can be
expected that this usage should be discharged at compilation time. In
actual compilation based implementations of $\lp$, a type checking
procedure, which encompasses a process inferring types for every term
in program from those declared with constants, is commonly used by the
compilers for this purpose.  The second role of types is more peculiar
to logic programming languages, where types are actually employed
during runtime computation and have an influence on the
solutions~\cite{NP92type}. The specific way in which the computations
are determined depends on the particular algorithm used to realize the
underlying unification operations, and will be clarified in the
discussions of Chapter~\ref{chp:interpreter}.
It should be noted here that this usage requires
types to be manipulated during the execution of programs, which
consequently poses a challenge on efficient implementations of $\lp$
with regard to minimizing runtime overhead of this sort.
An optimized runtime type processing scheme is provided by this
thesis in the particular context where computation is organized around
the higher-order pattern unification, and the discussions about it
appear in Chapter~\ref{chp:types}.

\chapter{Comparison of $\lambda$-Terms}\label{chp:termRep}

%The purpose of this chapter is to 1. set up high-level term
%representations -- the de Bruijn notation and the concept of
%beta-normal forms are immediately necessary for the discussion of the
%next chapter (Chp 4); the overall suspension notation is needed for
%the discussion on machine level term representations (Chp 5).
%2. give enough context for Chp 5 when discussing reduction
%strategies.

%Carefully discussions are needed to motivate this chapter (at the
%current location): the computation of the language is based on
%comparison of lambda-terms, which should be performed on some normal
%forms; however, the process of transforming an arbitrary term to its
%normal form is not trivial and driven by efficiency requirements on
%the implementation of this process, a more sophisticated term
%representation scheme is needed to help with a finer control.

The computational model described in the previous chapter requires the
comparison of atomic goals: in solving a goal of the form $A_r$, we
have to find an instance of a clause that is equal to $A_r$ or to
$G\supset A_r$. Observe, however, that the notion of equality that is
involved here is richer than that used in first-order logic
programming. In particular, we are allowed to use the conversion rules
of the $\lambda$-calculus in determining if the instance of a clause
has the required form. A question that must be addressed in an implementation
of our language, therefore, is how to effectively carry out such a
determination. As we discuss in this chapter, comparisons of this kind
between terms can be realized by first reducing them into a normal
form. The process of transforming a $\lambda$-term into a normal form
is not trivial and must be given careful attention from an efficiency
perspective. An aspect that must be given special consideration in
this context is the treatment of substitutions that are generated in the
course of reductions. We discuss the various issues involved in the
overall comparison process in this
chapter, leading eventually to what is known as an explicit
substitution notation for $\lambda$-terms. This notation eventually
serves as a high-level representation for such terms that we later
refine into an actual machine-level implementation.

This chapter is structured as follows. In the first section we provide
an overview of the comparison of $\lambda$-terms, introducing in the
process the idea of using normal forms as the basis for such
comparisons. Section~\ref{sec:beta_reduction} then discusses at a high
level the issue of carrying out $\beta$-reductions on terms in the
course of producing normal forms. This discussion highlights the
importance of treating substitutions carefully in the course of
reduction. The next section presents an explicit substitution
notation for $\lambda$-terms that is known as the {\it suspension
  calculus} \cite{gacek07suspension}; this notation provides the basis
for realizing
the normalization of terms in a finely controlled way and is what
underlies the term representation we use in the implementation scheme
developed in this thesis. Section~\ref{sec:hnorm} contains some formal
properties of the suspension calculus and it also lifts the idea of
normal forms and of rewriting sequences to produce normal forms to the
suspension calculus. This discussion underlies the reduction procedure
that is eventually used in the implementation to realize the comparison
operation.
We conclude this
chapter with a discussion of how the $\eta$-conversion rule can be taken
into account in the context of the suspension calculus.

\section{Normal Forms and Term Comparison}\label{sec:overview}
% 1. Church-Rosser Theorem.
%    The normalization property (simply typed lambda-calculus) --
%    the polymorphic version in lambdaProlog.
% 2. Hence we use beta-normal form as the basis of comparison
%    what is a beta-normal form
%    beta-reduction: oriented fashion of beta-conversion
% 3. comparison on beta-normal forms:
%    alpha conversion, simple, no structure changes; with
%    a nameless representation to be discussed, it can be
%    eliminated.
% 4. eta-conversion: terms of the same type with different
%    binder length.

Normal forms usually play an important role in the comparison of terms
in a situation where equality encompasses a
richer notion than a simple check for syntactic identity. In the context of the
$\lambda$-calculus, a useful such form is what is known as a {\it head
  normal form}. A term is said to be in such a form if, for some $n, m
\geq 0$, it has the structure
\begin{tabbing}
\dquad$(\lambdax{x_1}\ \ldots( \lambdax{x_n}\ (\ldots (h \ t_1)\ \ldots t_m))\ldots)$,
\end{tabbing}
where $h$ is a constant or a variable,
possibly in the set $\{x_1,\ldots,x_n\}$. We call the abstractions
at the front of such a term its {\it binder}, the atom $h$ its {\it
  head} of the term and $t_i, \ldots, t_m$ its {\it arguments}. Notice
that, in particular instances, the binder might be empty and the term
may also not have any arguments.
A special case of a head normal form is one where each of its arguments
recursively have this structure. A term that satisfies this structure is
said to be in {\it $\beta$-normal form}.

An alternative characterization of a $\beta$-normal form---that is
easily seen to be equivalent to the one provided above---is that it is
a term that does not have any $\beta$-redexes as subterms. We can
think of trying to convert an arbitrary $\lambda$-term to a
$\beta$-normal form by orienting the $\beta$-conversion rule. In
particular, given a term that has a subterm of the form
$(\lambdax{x}t)\app s$, we can first use $\alpha$-conversions to
rename the bound variables in $t$ so that they are distinct from the
free variables of $s$. If we obtain the term $(\lambdax{x}t')\app s$
from this process, we can then replace this subterm by the form
$t'[x:=s]$. We shall refer to such a sequence of applications of the
$\alpha$-conversion rule followed by the oriented application of the
$\beta$-conversion rule as a $\beta$-contraction and we call a
sequence of $\beta$-contraction rule applications a
$\beta$-reduction. An important property of the
simply typed $\lambda$-calculus, that carries over also to the
polymorphic version of it that is used in $\lambda$Prolog, is that any
$\beta$-reduction sequence that starts from a given term must
terminate \cite{Girard89}. It follows from this that every term in our
language can be converted to a $\beta$-normal form and hence also a
headnormal form. We shall refer to such a form as a $\beta$-normal
(head normal) form {\it for} the term.

Two terms that have identical $\beta$-normal forms are obviously equal
under the $\lambda$-conversion rules. Ignoring for the moment the
$\eta$-conversion rule, a converse of this observation is also
available by virtue of the Church-Rosser Theorem for the
$\lambda$-calculus~\cite{Bar81}: two terms that are equal must have
$\beta$-normal forms that differ only in the names used for bound
variables. We can use this observation to describe an algorithm for
comparing two terms that have the same types; it is only such terms
that we ever need to compare in the execution model for
$\lambda$Prolog. First, we take the two terms and convert them into
head normal forms. At this stage, we compare their binder lengths. If
these are not equal, then the terms are not equal.  Otherwise, using a
sequence of $\alpha$-conversions, we can ensure that the names of the
variables in the two binders are identical; later we shall consider a
nameless representation of bound variables in the style of de Bruijn
\cite{debruijn72} that shall make this renaming step redundant. Now,
if the heads of the two terms are distinct then the terms are once
again unequal. If, on the other hand, the heads are identical, then
the typing assumption ensures that they must have an equal number of
arguments. The comparison of the two terms now reduces to a pairwise
comparison of their arguments.

The comparison algorithm that we have just described is, of course,
inadequate in the situation when the $\eta$-conversion rule is also
included. However, a simple change to it suffices in this richer
context. After we have converted the two terms to head normal forms,
it may be the case that one of them has a shorter binder than the
other. In this case, our first task is to extend the length of the
shorter binder. Suppose that this term is of the form
$\lambdax{x_1}\ldots\lambdax{x_n}t$. Clearly $t$ must have a function
type, \ie, its type must be of the form $\alpha \ra \beta$. But then
the term under consideration is equal by virtue of the
$\eta$-conversion rule to the term
$\lambdax{x_1}\ldots\lambdax{x_n}\lambdax{x_{n+1}}(t\app x_{n+1})$
where $x_{n+1}$ is some variable that does not appear free in $t$. By
a repeated use of transformation, we can make the binders of the two
terms of equal length. The comparison algorithm now proceeds as
before. The correctness of this algorithm follows from a version of
the Church-Rosser Theorem that applies to the situation where the
$\eta$-rule is included.

\section{Issues in the Realization of $\beta$-reduction}\label{sec:beta_reduction}
% beta-normalization
% big step rewriting too complicated ...
From the discussion in the previous section, it is clear that the
reduction of a $\lambda$-term
to a head normal form is an important component of the term comparison
operation. However, the realization of this transformation is not
trivial.
Theoretical presentations of the $\lambda$-calculus typically treat the
substitution required in rewriting a $\beta$-redex as an atomic
operation.
In particular, given a term of the form $(\lambdax{x}t)\app s$, the
sequence of $\alpha$-conversions that produces the term
$(\lambdax{x}t')\app s$ that are intended to avoid the capture of free
variables in $s$ and the subsequent rewriting to the form  $t'[x :=
  s]$ is assumed to be achieved magically in a single step.
However, from an implementation perspective,
this is a task too complicated to be accomplished in one step.
The actual realization of this operation usually combines the renaming
of the bound variables in $t$ and the replacement of the free
occurrences of $x$ by $s$ into one combined operation. It then
breaks this operation into smaller
steps: an environment is maintained to explicitly record the needed variable
replacements and each rewriting step focuses on propagating
the environment over a specific sort of term structure.
Specifically, at the beginning of the performance of $t[x:=s]$,
$[x:=s]$ is first registered into an environment $e$.
Then the rewriting task becomes that of propagating $e$ over $t$.
The interesting case arises when $t$ is of form $\lambdax{y}t'$.
Now if $y$ does not occur free in $s$, the same environment $e$ can
be simply pushed inside the abstraction. Otherwise, the occurrences of
the variable $y$ in $\lambdax{y}t'$ should be renamed to $z$, such
that $z$ does not appear free in $s$. The renaming action $[y:=z]$
is then accumulated into the environment. As a result, we have
an environment propagation step in this situation that is given by a
rewrite rule of the form
\begin{tabbing}
\dquad $(\lambdax{y}t')[x:=s]  \longrightarrow\ \lambdax{z}(t'[y:=z, x:=s])$,
\end{tabbing}
assuming $y$ appears free in $s$.
When variable occurrences are finally encountered in the substitution
performance process, replacement can be actually carried out according
to information recorded in the environment.

In the discussion above, we have thought of using an environment to
encode multiple simultaneous substitutions. Although the environment
in the example we have considered has exactly one substitution
generated from rewriting a $\beta$-redex, it is possible to imagine
environments that have more than one such substitution.
By allowing for such environments, we obtain an ability to combine
the term traversal needed in effecting substitutions with the
traversal needed for finding and reducing $\beta$-redexes. As an
example, consider the term $(\lambdax{x}\lambdax{y} t_1)\app t_2\app
t_3$. This term can be transformed through a sequence of
$\beta$-contractions to the form $t_1[x:= t_2, y := t_3]$. The
replacement in $t_1$ of $x$ by $t_2$ and $y$ by $t_3$ can now be done
at the same time and can also be combined with the identification and
rewriting of further $\beta$-redexes within $t_1$.

We have treated an environment or substitution up to now as
an auxiliary device, outside the term structure, to be used
essentially in implementing reduction. However, it is also possible to
include substitutions explicitly in terms, treating a
term with a substitution also as a term; such a term is similar to the
idea of a closure used in implementing functional programming
languages except that closures are now also treated as first-class
terms. If we allow substitutions to be used in this manner, so that we
permit the term $t$ in an environment of the form $[x:=t]$ to carry its
own environment, then we also obtain the ability to delay the
performance of substitutions so as to carry them out in a demand
driven fashion, thereby further enhancing the capability to combine
reduction and substitution traversals of terms. As an example,
consider the term
$(\lambdax{x}\ ((\lambdax{y}\ t_1)\ t_2))\ t_3$. This term can be
rewritten to the form $t_1[y:=t_2[x:=t_3],x:=t_3]$. Notice that in
this term we have delayed the substitution of $t_3$ for $x$ in
$t_2$. We may eventually need to reduce the term $t_2$ and the
mentioned substitution can then be carried out in the same traversal
as is needed for this reduction.

Implementations of functional programming languages typically use the
idea of environments to encode substitutions. A simplifying assumption
that is used in these contexts is that it is never necessary to look
at term structure embedded within abstractions. As a result of this
assumption, there is never any need to rename bound variables: the
terms that are being substituted are never carried into a context
where their free variables may get bound. The assumption of looking
within abstractions is, however, no acceptable in a situation where we
have to compare arbitrary $\lambda$-terms.
For instance, to decide the inequality of the terms
\begin{tabbing}
\dquad$(\lambdax{y}\ ((\lambdax{x}\ (\lambdax{y}\ x))\ y))$\quad and\quad
      $(\lambdax{y}\ ((\lambdax{x}\ (\lambdax{y}\ y))\ y))$,
\end{tabbing}
$\beta$-redexes inside abstractions have to be rewritten. Combining renaming
substitutions with $\beta$-contraction substitutions seems not to
be a problem when we use explicit names for bound variables. However,
the need to also consider $\alpha$-convertibility in comparing terms
usually dictates that an nameless representation be used for bound
variables. In such a situation, the descent into abstraction contexts
requires a lot more care. This issue is specifically dealt with in
explicit substitution calculi like the suspension calculus that we
discuss next.

\section{The Suspension Calculus}\label{sec:suspension}
Before the actual discussion on the suspension calculus, we first introduce
a notation of $\lambda$-terms proposed by de Bruijn~\cite{debruijn72}
that simplifies the task of checking for equality under $\alpha$-conversion.
In this notation, an occurrence of a variable is denoted by a
positive number, called a de Bruijn index, which counts the number
of abstractions between this occurrence and the abstraction binding
the variable.
For example, the term represented as
$(\lambdax{x}\app (\lambdax{y}\app (x \app y)) \app x)$ in a
name-based setting is denoted in
the de Bruijn notation by
$(\lambdadb(\lambdadb (\# 2 \app \# 1))\app \# 1)$.
It can in fact, be easily seen that any pair of $\alpha$-convertible
terms in the name-based notation
have the same de Bruijn representation.
It should be noted that bound variable renaming needed for substitution
propagation discussed in the previous section is not really eliminated
by the de Bruijn notation, but is, rather, transformed into
a form as the renumbering of de Bruijn indexes.
For example, upon pushing substitutions into an abstraction in
the context of the de Bruijn notation, it has to be properly reflected that
first the index corresponding to the variable that will be substituted should
become one greater than that is recorded in the current substitutions, and
second, the indexes corresponding to the variables occurring free in the
term that is to be substituted with should be increased by one.
Moreover, when an environment based reduction approach is under
consideration, a problem similar to what has been discussed in the
previous section also exists in combining a substitution corresponding
to a redex embedded in an abstraction, $\eg$,
$\lambdadb((\lambdadb t)\ s)$, to the enclosing environment:
in addition to the substitution of $s$
for the first free variable in $t$, the decreasing of the indexes
corresponding to the variables occurring free in $t$ should also be
properly reflected into the environment.
The details on how the required renumbering tasks are accomplished
in the context of the suspension calculus, which is based on the
de Bruijn representation, will become clear in the
discussions that follow.

It has been illustrated in the previous section
that the explicit maintenance of substitution environment
could be beneficial to the efficiency of the $\beta$-reduction
process. The explicit encoding of substitutions in term
representations provides a stronger control on the reduction and
substitution steps and thereby the flexibility of ordering
them towards further efficiency improvement to the overall term
comparison operation.
One such benefit is the ability to avoid unnecessary
performance of substitutions.
For example, consider the comparison of the pair
\begin{tabbing}
\dquad$(\lambdax{x}\ (x\ t))\ a$\quad and\quad  $(\lambdax{x}\ (x\ s))\ b$
\end{tabbing}
where $a$ and $b$ are different constants and $t$ and $s$ are some
complicated term structures. By reducing the redexes, substitutions
$[x:=a]$ and $[x:=b]$ are generated over $(x\ t)$ and $(x\ s)$ respectively.
It is obvious that the inequality of the terms is in fact entirely decided
by the results of applying the substitutions over the leading $x$'s,
and is irrelevant to those of $t$ and $s$. With the capability to record
substitutions along with other term structures, the generation and performance
of substitutions can be completely separated in an explicit substitution
calculus. This provides the chance to delay
the performance of the substitutions on $t$ and $s$, and consequently
to carry out the comparison on the structures
$(a\ (t[x:=a]))$ and $(b\ (s[x:= b]))$, which eventually avoids
the effort of effecting the delayed substitutions over $t$ and $s$.

Various explicit substitution calculi have been proposed for
reflecting substitutions into term structures, such as the
{\it suspension calculus}~\cite{NW98tcs}, the
{\it $\lambda\sigma$-calculus}~\cite{ACCL91},
the {\it $\lambda\upsilon$-calculus}~\cite{BBLR96},
the {\it $\lambda\xi$-calculus}~\cite{M96expsub} and
{\it $\lambda s_e$-calculus}~\cite{KR97}.
Among those calculi, the suspension calculus and the
$\lambda\sigma$-calculus are especially useful because besides the
lazy performance of substitutions, these notations also provide
support to combine substitutions generated from different
$\beta$-redexes; such a capability is essential for realizing the
sharing of structure traversal discussed in the previous section.
In this thesis we choose to use the suspension calculus because it
more closely attuned to practical applications in comparison with the
$\lambda\sigma$-calculus.

The terms of the suspension calculus are obtained from de Bruijn terms
essentially by adding a new form that is capable of representing a
term with a suspended substitution.
The full collection of terms is described formally by the
syntax rules in Figure~\ref{fig:susp_terms}.
\begin{figure}
\begin{tabbing}
\dquad\=$EnvTerm\ $\=\kill
\>{\it Term}    \> $::=\ C\ |\ \#I\ |\ ({\it Term}\ {\it Term})\ |\ (\lambdadb\ {\it Term})\ |\ \env{{\it Term}, N, N, {\it Env}}$ \\
\>{\it Env}     \> $::=\ {\it nil}\ |\ {\it EnvTerm :: Env}$\\
\>{\it EnvTerm} \> $::=\ \dum N\ |\ ({\it Term}, N)$
\end{tabbing}
\caption{The syntax of terms in the suspension calculus.}\label{fig:susp_terms}
\end{figure}
In these rules, $C$ represents constants, $N$ denotes the category of natural
numbers and $I$ represents the category of positive numbers.
%suspension
Expressions of the form
$\env{t, ol, nl, e}$, referred to as {\it suspensions}, constitute the
new category of terms. Intuitively, such a suspension represents a
term $t$ whose first $ol$ free variables, \ie, those given by
de Bruijn indices ranging from $1$ to $ol$, should be
substituted for in a way determined by $e$ and whose other variables
should be renumbered to reflect the fact that $t$ originally appeared
inside $ol$ number of abstractions, but now appears within $nl$ of
them; $nl$ may be different from $ol$ either because some abstractions
enclosing $t$ have disappeared because of $\beta$-contractions or
because $t$ is being substituted into a context embedded within some
additional abstractions.
%Environment
The environment $e$, that has the structure of a list, explicitly records
substitutions to be performed for the first $ol$ free variables
in $t$---the $i$th entry in this environment is intended to be the
substitution for the $i$th free variable. Consequently, $e$ should
have a length equal to $ol$ for the
term to be well-formed.
% Environment items
Two sorts of substitutions can be recorded in an environment.
One kind of substitution corresponds to abstractions
that persist even after some abstractions within whose scope they
appear disappear because of $\beta$-contractions.
Such substitutions are recorded in an environment by means of
expressions of the form $\dum l$, where $l$ is the count of the number
of abstractions within whose scope the one binding the variable in
question occurs; the difference between $l$ and the count of the
abstractions that persist at the point of substitution---given by $nl$
in  a term of the form $\env{t, ol, nl, e}$---determines the new index
for the variable being substituted for. Notice that from
this discussion it follows that, for any $\dum l$ that appears in the
environment $e$ in a well-formed
suspension $\env{t, i, j, e}$, it must  be the case that $l< j$.
The other sort of environment entry corresponds to the substitution for the
variable bound by an abstraction that disappears because of a
$\beta$-contraction. Such a
substitution is recorded by an expression of the form $(s, l)$. The natural
number $l$ records the number of abstractions within which the
$\beta$-redex whose contraction generated the substitution is
embedded; when the variable replacement is actually carried out, $l$
is used together with the embedding level at the point of replacement
to determine an adjustment for indexes
of free variables in $s$. From this it follows easily that a
suspension $\env{t, i, j, e}$ is well-formed only if it is the case
that $l \leq j$ for any $(s, l)$ contained $e$.

\begin{figure}
\begin{tabbing}
\quad\=(r11)\ \=\dquad\dquad\dquad\=\kill
\> ($\beta_s$)\> $((\lambdadb t_1)\app t_2) \ra \env{ t_1, 1,
0, (t_2,0) :: nil}$\\
\> ($\beta'_s$)\> $((\lambdadb
\env{t_1,ol+1,nl+1,\dum{nl}::e})\app t_2) \ra  \env{ t_1,ol+1,nl, (t_2,nl) :: e }$\\
%\>\>\> $\env{ t_1,ol+1,nl, (t_2,nl) :: e }$\\
\>(r1)\> $\env{c,ol,nl,e} \ra c$\\
\>\> provided $c$ is a constant\\
\>(r2)\>$\env{\#i,ol,nl,e} \ra \#j$\\
\>\>provided $i > ol$ and $j = i - ol + nl$.\\
\>(r3)\>$\env{\#i,ol,nl, e} \ra \#j$\\
\>\>provided $i \leq ol$ and $e[i] = \dum{l}$ and $j = nl - l$.\\
\>(r4)\>$\env{\#i,ol,nl,e} \ra \env{t,0,j,nil}$\\
\>\> provided $i \leq ol$ and $e[i] = (t,l)$ and $j = nl - l$.\\
\>(r5)\> $\env{(t_1\app t_2),ol,nl,e} \ra
(\env{t_1,ol,nl,e}\app \env{t_2,ol,nl,e})$.\\
\>(r6)\>$\env{(\lambdadb t), ol, nl, e} \ra (\lambdadb
\env{t, ol+1, nl+1, \dum{nl} :: e})$.\\
\>(r7)\>$\env{\env{t,ol,nl,e},0,nl',nil} \ra \env{t,ol,nl+nl',e}.$ \\
\>(r8)\>$\env{t,0,0,nil} \ra t$
\end{tabbing}
\caption{The rewriting rules for the suspension calculus.}\label{fig:susp_rules}
\end{figure}
The collection of terms is complemented in the suspension calculus
by a set of rewriting rules for simulating $\beta$-reduction.
The rules are present in Figure~\ref{fig:susp_rules}.
We use $e[i]$ to refer to the $i$th item in an environment. Among these
rules, ($\beta _s$) and ($\beta' _s$) generate the suspended substitutions
corresponding to the reduction of $\beta$-redexes;
rules (r1)-(r8), referred to as \emph{reading rules}, are used to
actually carry out those substitutions.

%example
Now we use a concrete example to illustrate how $\beta$-reductions
can be performed in the suspension calculus.
Consider the term
\begin{tabbing}
\dquad $((\lambdadb ((\lambdadb (\lambdadb ((\#1\ \#2)\ \#3)))\ t_2))\ t_3)$,
\end{tabbing}
where $t_2$ and $t_3$ are arbitrary de Bruijn terms. Using rule
($\beta_s$) to reduce the outermost redex, the term is rewritten to
\begin{tabbing}
\dquad $\env{((\lambdadb (\lambdadb ((\#1\ \#2)\ \#3)))\ t_2),1,0,(t_3,0)::nil}$.
\end{tabbing}
Now the suspended substitution needs to be propagated into the top-level
application, which is accomplished by applying rule (r5).
\begin{tabbing}
\dquad $\env{(\lambdadb (\lambdadb ((\#1\ \#2)\ \#3))),1,0, (t_3,0)::nil}\ \env{t_2,1,0,(t_3,0)::nil}$.
\end{tabbing}
Using rule (r6) to push the substitution into the abstraction in the
suspension term on the left, the whole term is rewritten to
\begin{tabbing}
\dquad $(\lambdadb \env{(\lambdadb ((\#1\ \#2)\ \#3)),2,1, \dum{0}::(t_3,0)::nil})\ \env{t_2,1,0,(t_3,0)::nil}$.
\end{tabbing}
Now a new $\beta$-redex is revealed in the top-level term structure,
and the reduction of this redex can be simulated by rule ($\beta'_s$), which
directly combines the newly generated substitutions into the existing
environment.
\begin{tabbing}
\dquad $\env{(\lambdadb ((\#1\ \#2)\ \#3)),2,0, (\env{t_2,1,0,(t_3,0)::nil},0)::(t_3,0)::nil}$.
\end{tabbing}
By applying rules (r5)-(r8) several times, we finally get a term of form
\begin{tabbing}
\dquad $(\lambdadb ((\#1\ \env{t_2,1,1,(t_3,0)::nil})\ \env{t_3,0,1,nil}))$.
\end{tabbing}
Depending on the particular structures of $t_2$ and $t_3$, the
rewrite rules can be applied to finally produce a $\beta$-normal form of
the original term.

It can be observed that the rule $(\beta'_s)$ is in fact redundant if
our only purpose is to simulate $\beta$-reduction: whenever this rule is
applied, rule ($\beta_s$) is applicable too. However, this rule plays an
important role in our rewriting system because it serves to combine
the substitution generated from an redex with those already recorded in the
environment and thus shares the term traversals for reducing nested redexes.
A particular pattern is required by ($\beta'_s$) on the redex
to be reduced:
\begin{tabbing}
\dquad $((\lambdadb\env{t_1,ol+1,nl+1,@nl::e})\ t_2)$.
\end{tabbing}
This pattern matches the result of propagating the suspension
$\env{\lambdadb t_1, ol, nl, e}$ inside the abstraction, and
arises frequently in the presence of nested redexes when the
reduction process follows an outermost and leftmost order.

\section{Head Normalization and Head Reduction Sequences}\label{sec:hnorm}
The capability of the suspension calculus to simulate
$\beta$-reductions
in the conventional $\lambda$-calculus is justified
in~\cite{nadathur99finegrained}
in two steps. First, it is shown that each well-formed term in the suspension
calculus can be transformed into a de Bruijn term by applying a finite
sequence of reading rules for carrying out the
suspended substitutions. Second, it can be shown that a de Bruijn term $t$
$\beta$-reduces to $s$ if and only if $t$ can be transformed to $s$ by
applying a finite sequence of rules in Figure~\ref{fig:susp_rules}.

As noted already, it is beneficial to interleave the performance of
substitutions also with the process of comparing terms. To justify
this at a formal level, it is necessary to lift the notion of head
normal forms to the suspension calculus. The following definition does
this after restating the definition for such forms in the de Bruijn
setting.

\begin{defn}
A de Bruijn term is in head normal form if it has the structure
\begin{tabbing}
\squad $(\lambdadb\ \ldots (\lambdadb\ (\ldots (h\ t_1)\ \ldots\ t_m))\ldots)$,
\end{tabbing}
where $h$ is a constant or a de Bruijn index. As before, we call $t_1,
\ldots, t_m$ the arguments of such a term, we call $h$ its head, we
call the abstractions in the front its binder and we refer to the
number of such abstractions as the binder length. By a harmless abuse
of notation, we permit the number of arguments and the binder
length to be $0$ in such a form. The notion of a head normal form is
extended to the suspension calculus setting by allowing the arguments
of such a form to be arbitrary suspension terms.
\end{defn}

The algorithm that we have previously described for comparing two
terms in the named calculus has an obvious adaptation to the de Bruijn
setting; the essential difference is, in fact, that the adjustment to
names of bound variables using $\alpha$-conversions is obviated. The
following proposition, proved in \cite{nadathur99finegrained}, allows
this algorithm to be adapted to the suspension calculus context.

\begin{prop}
Let $t$ be a de Bruijn term and suppose that the rules in
Figure~\ref{fig:susp_rules} allow $t$ to be rewritten to a head normal
form in the suspension calculus with $h$ being the head, $n$ being the binder
length and $t_1,\ldots ,t_m$ being the arguments. Let $\rnf{t_i}$ be
the de Bruijn term obtained from $t_i$ by a series (maybe empty) of
applications of the reading rules.
Then $t$ has the term
\begin{tabbing}
\squad$(\lambdadb\ \ldots( \lambdadb\ (\ldots (h \ \rnf{t_1})\ \ldots\rnf{t_m}))\ldots)$
\end{tabbing}
with a binder length of $n$ as a head normal form in the context of
the de Bruijn notation.
\end{prop}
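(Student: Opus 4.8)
The plan is to combine the two facts imported from \cite{nadathur99finegrained} at the start of this section: first, that every well-formed suspension term can be driven to a de Bruijn term by a finite sequence of reading rules; and second, that between de Bruijn terms the rewriting of Figure~\ref{fig:susp_rules} coincides with ordinary $\beta$-reduction. To begin, I would simply record the rewriting sequence hypothesized in the statement. Namely, $t$ rewrites, using the rules of Figure~\ref{fig:susp_rules}, to the suspension head normal form
\begin{tabbing}
\squad $H = (\lambdadb\ \ldots( \lambdadb\ (\ldots (h \ t_1)\ \ldots t_m))\ldots)$,
\end{tabbing}
whose binder length is $n$, whose head $h$ is a constant or a de Bruijn index, and whose arguments $t_1,\ldots,t_m$ are arbitrary suspension terms.

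Next I would normalize each argument. By the first cited fact, each $t_i$ admits a finite sequence of reading-rule rewrites that produces the de Bruijn term $\rnf{t_i}$. Since these rewrites act on the arguments, which sit strictly inside the binder and below the head, they lift to a rewriting of $H$ that leaves the binder and the head untouched, giving
\begin{tabbing}
\squad $H \lras H' = (\lambdadb\ \ldots( \lambdadb\ (\ldots (h \ \rnf{t_1})\ \ldots\rnf{t_m}))\ldots)$.
\end{tabbing}
Because $h$ is a constant or a de Bruijn index and each $\rnf{t_i}$ is a de Bruijn term, $H'$ is itself a de Bruijn term, and its binder length is still $n$.

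The final step chains the two segments. The concatenation $t \lras H \lras H'$ is a finite rewriting in the suspension calculus between the two de Bruijn terms $t$ and $H'$, so the second cited fact yields that $t$ $\beta$-reduces to $H'$ in the de Bruijn calculus. It then remains only to observe that $H'$ matches exactly the shape demanded by the preceding definition of a de Bruijn head normal form—$n$ leading abstractions, head $h$, arguments $\rnf{t_1},\ldots,\rnf{t_m}$—so $H'$ is in head normal form with binder length $n$, and is therefore a head normal form \emph{for} $t$, as claimed.

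I expect the one delicate point to be the lifting step in the second paragraph: one must verify that applying reading rules to the arguments of $H$ is sound as a rewriting of the whole term and cannot disturb the already-exposed binder or head. Granting that the rewriting relation of Figure~\ref{fig:susp_rules} is a genuine congruence—that is, closed under formation of applications and abstractions, as is built into how the rules are applied to subterms—this lifting is immediate, and the remainder of the argument is a direct composition of the two imported results.
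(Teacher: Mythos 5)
Your proposal is correct, but it is worth noting that it takes a different route from the paper for the simple reason that the paper offers no proof at all: the proposition is stated as ``proved in \cite{nadathur99finegrained}'' and deferred entirely to that reference. What you have done instead is derive the proposition as a corollary of the two simulation facts that the paper itself imports from that same reference at the start of the section, and the derivation is sound. The hypothesized rewriting $t \lras H$ in the suspension calculus, extended by reading-rule rewrites that take each argument $t_i$ to $\rnf{t_i}$, forms one finite rewriting sequence from the de Bruijn term $t$ to the de Bruijn term $H'$; the lifting step you flag as delicate is indeed unproblematic, since the rule schemata of Figure~\ref{fig:susp_rules} apply to subterms---exactly the closure the paper builds into its own definition of head reduction sequences---and such rewrites inside the arguments cannot disturb the binder or the head. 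The quoted equivalence between suspension-calculus rewriting and $\beta$-reduction on de Bruijn terms then yields that $t$ $\beta$-reduces to $H'$, and $H'$ has precisely the shape required of a de Bruijn head normal form (note the definition imposes no normality condition on arguments, so residual redexes inside the $\rnf{t_i}$ are harmless). What your route buys is a self-contained, checkable justification from the thesis's own stated results, making explicit that this proposition is an easy consequence of the simulation theorems rather than requiring the detailed analysis of \cite{nadathur99finegrained}; what the paper's citation buys is only brevity, at the cost of leaving the dependence implicit. The one caveat is that your argument is conditional on the imported facts holding exactly as quoted---in particular, that the rewriting sequence witnessing the second fact may pass through arbitrary suspension terms, with only its endpoints required to be de Bruijn terms; this is the natural reading and the one your composition step needs.
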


A critical part of using the comparison algorithm is that of
generating a head normal form for a term. Such a form is best
generated by rewriting a {\it head redex} of the term at each stage;
a sequence of such rewritings is what is referred to as a {\it head
  reduction sequence}. In the de Bruijn setting, a term that is not in
head normal form has a unique head redex that is identified as follows:
\begin{enumerate}
\item If the term is a $\beta$-redex, then the term itself is its head redex;
\item Otherwise, if the term is of form $(\lambdadb\ t)$ or $(t\ s)$, then
its head redex is that of $t$.
\end{enumerate}
In this setting it is also a fact that a head reduction sequence
will always succeed in producing a head normal form for a term
whenever it has such a form \cite{Bar81}.

In the suspension calculus, there is one more kind of term and
there is also a larger set of rewriting rules. Moreover, the use of an
environment to record substitutions also leads to the possibility of
sharing subparts of terms. Taking these aspects into account, we can
generalize the notion of head redex and defines the head reduction
sequence in  the context of the suspension calculus as the following.
\begin{defn}
Let $t$ be a suspension term that is not in head normal form.
\begin{enumerate}
\item Suppose that $t$ has the form $(t_1\ t_2)$.
If $t_1$ is an abstraction, then $t$ is its sole head
redex. Otherwise the head redexes of $t$ are the head redexes of $t_1$.

\item If $t$ is of the form $(\lambdadb\ t_1)$,
its head redexes are identical to those of $t_1$.

\item If $t$ is of the form $\env{t_1,ol,nl,e}$,
then its head redexes are all the head redexes of $t_1$ and $t$ itself
provided $t_1$ is not a suspension.
\end{enumerate}

Let two subterms of a term be considered non-overlapping just in
case neither is contained in the other. Then a {\it head
reduction sequence} of a suspension term $t$ is a sequence $t =
r_0,r_1,r_2,\ldots,r_n,\ldots,$ in which, for $i \geq 0$, there is
a term succeeding $r_i$ if $r_i$ is not in head normal form
and, in this case, $r_{i+1}$ is obtained from $r_i$ by
simultaneously rewriting a finite set of non-overlapping subterms
that includes a head redex using the rule schemata in
Figure~\ref{fig:susp_rules}. Obviously, such a sequence terminates
if for some $m \geq 0$ it is the case that $r_m$ is in head normal form.
\end{defn}

The usefulness of this definition is based on the proposition below:
\begin{prop}
A term $t$ in the suspension calculus has a head normal
form if and only if every head reduction sequence of $t$ terminates.
\end{prop}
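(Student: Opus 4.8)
The plan is to prove the two implications separately and to reduce the non-trivial one to the fact for the de Bruijn calculus already recalled in the excerpt, namely that a de Bruijn term has a head normal form if and only if its head reduction terminates. First I would dispose of the direction asserting that if every head reduction sequence of $t$ terminates then $t$ has a head normal form. A routine structural induction along the three clauses of the head-redex definition shows that every suspension term that is not in head normal form possesses at least one head redex; hence a head reduction sequence always exists and can be extended so long as its current term is not in head normal form. By the definition, such a sequence terminates precisely when it reaches a head normal form. Thus if every head reduction sequence terminates, some sequence reaches a head normal form, exhibiting one for $t$.

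For the converse---if $t$ has a head normal form then every head reduction sequence terminates---I would argue by contraposition, assuming an infinite head reduction sequence $t = r_0, r_1, r_2, \ldots$ and showing that $t$ has no head normal form. The idea is to project the sequence onto the de Bruijn calculus through reading. Two facts established in \cite{nadathur99finegrained} drive this: first, the reading rules (r1)--(r8) alone form a terminating and confluent subsystem, so each $r_i$ has a unique de Bruijn reading $\rnf{r_i}$, with $\rnf{r_0}=\rnf{t}$; second, the suspension rewriting faithfully simulates de Bruijn $\beta$-reduction. Using the first fact, I observe that a step $r_i \lra r_{i+1}$ contracting only reading redexes leaves the reading unchanged, $\rnf{r_{i+1}}=\rnf{r_i}$, whereas a step whose contracted head redex is a $\beta_s$- or $\beta'_s$-redex projects to a head $\beta$-contraction $\rnf{r_i}\lra\rnf{r_{i+1}}$ in the de Bruijn calculus, possibly accompanied by some internal contractions arising from non-head subterms that were rewritten simultaneously.

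By the head-redex definition, the only head redexes are suspensions, which are contracted by the reading rules, and $\beta$-redexes. Since the reading subsystem is terminating, the sequence cannot consist of reading steps from some point onward, so infinitely many $\beta_s/\beta'_s$ head contractions must occur. Projecting, $\rnf{t}$ admits a reduction sequence containing infinitely many head $\beta$-contractions. Invoking the standard postponement of non-head reductions for the de Bruijn calculus, this yields an infinite head reduction sequence of $\rnf{t}$, so by the de Bruijn theorem $\rnf{t}$ has no head normal form. Finally, the proposition preceding this one guarantees that a suspension head normal form of $t$ reads to a de Bruijn head normal form of $\rnf{t}$; hence if $t$ had a head normal form, so would $\rnf{t}$. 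The contradiction completes the contrapositive.

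The step I expect to be the main obstacle is the second projection observation: making precise that a head-redex $\beta_s/\beta'_s$ contraction---performed, in general, simultaneously with non-overlapping reading and argument reductions---corresponds to exactly one head $\beta$-step of the reading. This requires a commutation analysis between the reading rules and the $\beta_s/\beta'_s$ rules, showing that reading commutes with head contraction and that the substitutions recorded through the $ol$/$nl$ annotations and the environment entries $\dum{l}$ and $(s,l)$ are read out correctly at the head. Establishing this commutation, together with checking that the simultaneously contracted non-head subterms genuinely project to internal reductions so that postponement applies, is where the technical weight of the argument lies.
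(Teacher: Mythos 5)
Your overall strategy coincides with the one the paper itself relies on: the text gives no self-contained proof but cites \cite{nadathur99finegrained}, describing that argument as mapping head reduction sequences of suspension terms onto corresponding sequences in the de Bruijn notation. Your easy direction is also correct (every term not in head normal form has a head redex, so sequences exist and a terminating one must end in a head normal form). The difficulty is that the two projection claims on which your contrapositive rests are false as stated, and they are precisely where the cited proof has to do its work. First, a head $\beta_s$/$\beta'_s$ contraction need \emph{not} project to a head $\beta$-contraction of the reading. Consider $T = (\env{(\lambdadb ((\lambdadb a)\ b)), 1, 0, (c,0)::nil}\ u)$: the function part of $T$ is a suspension, not an abstraction, so $T$ is not a redex and its head redexes are those of the suspension, including the inner redex $((\lambdadb a)\ b)$; contracting that redex is a legal head reduction step. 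But reading the enclosing suspension via rule (r6) turns it into an abstraction, so $\rnf{T}$ is a $\beta$-redex at the root, and the projected step contracts an \emph{inner} spine redex of $\rnf{T}$, not its head redex. Symmetrically, the simultaneously contracted non-head subterms need not project to \emph{internal} reductions of the reading: in $\env{\env{\#1,1,0,(\#2,0)::nil},2,0,(c,0)::(((\lambdadb a)\ b),0)::nil}$ the sole head redex is the inner suspension, and the redex $((\lambdadb a)\ b)$ in the outer environment may be contracted alongside it; yet the reading substitutes that very entry to the root, so this ``off-spine'' step projects exactly onto the head redex of the reading. Consequently what your projection produces is an infinite sequence of spine reductions of $\rnf{t}$ interleaved with other reductions, not a quasi-head reduction, and the ``standard postponement of non-head reductions'' cannot be invoked in the form you use it.

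Second, your count of head $\beta$ contractions has a gap. From the assumption that only finitely many head contractions are $\beta_s/\beta'_s$ steps you infer that the tail of the sequence ``consists of reading steps'' and then appeal to termination of the reading subsystem. But the interleaved steps are not pure reading steps: while the contracted head redex is a reading redex, the simultaneously contracted disjoint subterms may be $\beta_s$ redexes --- for instance redexes pumped inside the environment of an outer spine suspension while an inner head redex is being contracted --- and reading steps such as (r4) can later import that pumped material onto the spine. Termination of the reading rules alone therefore does not rule out an infinite tail with no head $\beta$ contraction; excluding it needs a measure argument tailored to these mixed sequences. Closing these two gaps --- a commutation analysis robust enough to handle $\beta$-redexes created by reading, and the bound on reading activity between head $\beta$ steps --- is the technical substance of the proof in \cite{nadathur99finegrained}; you correctly identify where the weight lies, but as written the two bridging claims your argument depends on are refuted by the examples above.
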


A detailed proof of this proposition can be found
in~\cite{nadathur99finegrained}, which essentially
maps the head reduction sequences of suspension terms to the corresponding
ones in the context of the de Bruijn notation. By virtue of this
proposition, we can base the comparison of terms on a procedure that
exploits the suspension form to delay substitutions and that
essentially picks a head reduction sequence to try and reduce a given
term to a head normal form. Notice that, unlike in the case of de
Bruijn terms, there can actually be a choice in the head redex to
rewrite at each stage. This non-determinism
provides a flexibility that can be exploited by practical
reduction procedures, a topic that we elaborate on
in Chapter~\ref{chp:machineTermRep}.

\section{The Suspension Calculus and $\eta$-conversions}\label{sec:suspeta}

\begin{figure}
\begin{tabbing}
\quad\=(r11)\ \=\dquad\dquad\dquad\=\kill
\> ($\eta_s$)\> $t \ra {\displaystyle \underbrace{\lambdadb\ldots
  \lambdadb}_n} (\env{t, 0, n, nil}\app \#n\app \ldots\app \#1)$\\
\>\> provided $n > 0$.
\end{tabbing}
\caption{The $\eta$-rule in the suspension calculus.}\label{fig:suspeta}
\end{figure}

In comparing terms, we have also to take into account that our
equality notion includes $\eta$-conversions. In the conventional
setting, this fact is accommodated by allowing the comparison
procedure to use $\eta$-conversions to adjust binder lengths in case
the reduction process yielded two head normal forms for which these
were unequal. A similar adjustment can be carried out also when the
suspension calculus is used. The basis for such an adjustment is a
special form of the $\eta$-rule for this setting. The relevant rule is
presented in Figure~\ref{fig:suspeta}. This rule has an additional proviso when
types are associated with terms: $t$ must have a function
type that has at least $n$ argument types. Notice also that some of
the reading rules can also be compiled into the application of this
rule when it is used to adjust the binder length in a head normal
form. Thus, the head normal form
\begin{tabbing}
\qquad\=\kill
\>${\displaystyle \underbrace{\lambdadb \ldots \lambdadb}_k} (h\app
t_1 \app t_m)$
\end{tabbing}
can be rewritten to the form
\begin{tabbing}
\qquad\=\kill
\>${\displaystyle \underbrace{\lambdadb \ldots \lambdadb}_{k+n}}
(h\app \env{t_1,0,nl,nil} \app \env{t_m,0,nl,nil}\app \#n \app
\ldots\app \#1)$
\end{tabbing}
if $h$ is a constant and to the form
\begin{tabbing}
\qquad\=\kill
\>${\displaystyle \underbrace{\lambdadb \ldots \lambdadb}_{k+n}}
(\#j\app \env{t_1,0,nl,nil} \app \env{t_m,0,nl,nil}\app \#n \app
\ldots\app \#1)$
\end{tabbing}
where $j$ is $i+n$ if $h$ is the de Bruijn index $\#i$.

\chapter{An Abstract Interpreter for $\lambda$Prolog}\label{chp:interpreter}
A high-level description of the computation model of the
$\lp$ language has been provided in Chapter~\ref{chp:language}. This
description is helpful for understanding $\lp$ programs,
but is not quite suitable as a basis for implementation.
For the latter purpose, concrete mechanisms have to be provided
first for deciding proper instances for existentially
quantified variables in solving goals of form $\exists{x}G$ and for universally
quantified variables in clauses for solving atomic goals,
second for selecting clauses for solving atomic goals in the presence
of multiple candidates as well as for picking the disjunct to solve when
processing disjunctive goals and third for controlling the scopes of
constants and program clauses with respect to
generic and augment goals.

In this chapter, we refine the computation model appearing in
Section~\ref{sec:hohh} into an abstract
interpreter for $\lp$ that includes solutions to all the issues mentioned
above.
We begin in Section~\ref{sec:general_unif} with the issue of finding
instances for variables existentially quantified in goals and
universally quantified in clauses. Towards this end, we introduce a
new category of variables, the {\it logic} variables, into the term
representation and we generalize term comparison into an equation
solving operation called {\it unification} that is based on the new
representation.
Section~\ref{sec:interpreter} presents an abstract
interpreter for $\lp$ that uses this operation.
In Section~\ref{sec:pattern_unif}, a particular form of the
unification problem for $\lambda$-terms is described and a practical
algorithm is presented for solving such problems. Problems in this
class are what are referred to as higher-order pattern
unification problems. This thesis is concerned only with solving
such problems completely and we assume a refinement of the abstract
interpreter that uses only the algorithm that we present for solving
such problems in the rest of the thesis.

\section{Logic Variables and Unification}\label{sec:general_unif}
The problem of deciding suitable instances for existentially
quantified variables in goals and universally quantified variables
in clauses is one that is also faced in the implementation of
{\it Prolog}. It is solved in that setting by delaying
the selection of an instance till a later point in computation
when enough information is available for making the ``right''
choices.
We adopt this solution also in our context.
Specifically, when a goal $\exists{x}G$ is encountered, a new
variable $X$ that can be instantiated
in the course of computation is introduced to replace
$x$ in $G$; this variable, that is different from traditional
variables in logic in that it can actually be instantiated
in the search for a proof, is what is known as a {\it logic
  variable}. Note that in a setting where types are present, $X$
should have the same type as the quantified variable it replaces. Once
this variable is introduced, computation proceeds to solve the goal
$G[x:=X]$.
The actual instantiation of $X$ is determined
at the point of solving the atomic goals contained by $G$ through
{\it unification}. This is a process or computation that allows us to
pick instantiations for logic variables so as to make two terms
equal. Thus, suppose that we have reached a point where the atomic
goal $A'$ has to be solved.
We then look for a clause of the form
$\forall{x_1}\ldots\forall{x_n}A$ or
$\forall{x_1}\ldots\forall{x_n}(\impxy{G}{A})$ such that by
replacing the universally quantified variables in the front of
this clause with new logic variables $X_1,\ldots,X_n$, we get
an expression of the form $A''$ or $\impxy{G''}{A''}$ that has the
characteristic that $A''$ and $A$ can be unified; in the second case,
this leads to a subsequent attempt to solve the corresponding instance
of $G''$.

The unification operation generalizes the usual term comparison
in the sense that we are also allows to compute substitutions for
logic variables to make the terms under consideration equal.
There is a proviso in our context that the substitution computed for a
variable by this process should be of the same type as the
variable. Further, in the context
of unifying $\lambda$-terms, substitutions for such variables should
also make sure that the free variables in the terms being introduced
do not get accidentally bound.
A correct characterization of such substitution can be provided
by using the equality of $\lambda$-terms.
A substitution is typically given by a set of pairs of the form
$\{\dg{X_i}{t_i}|1\leq i\leq n\}$ where the first element of the pair
is the variable being substituted for and the second element is the
term that it should be replaced with. The application of such a
substitution to the term $t$ can be given by the term
$(\lambdax{X_1}\ldots\lambdax{X_n}\ t)\ t_1\ \ldots\ t_n$.

Since we have to eventually deal with logic variables in an
implementation, we extend the suspension calculus to accommodate
them. As we have already noted, these variables have a different
character from the usual variables in $\lambda$-terms and so we
include a new category for them in the syntax. We shall write such
variables with a starting uppercase letter. We also add a special
rewrite rule pertaining to such variables that is shown in
Figure~\ref{fig:susp_r9}.
\begin{figure}
\begin{tabbing}
\quad\=(r11)\ \=\dquad\dquad\dquad\=\kill
\>(r9)\>$\env{X,ol,nl,e} \ra X$, provided $X$ is a logic variable.
\end{tabbing}
\caption{The rewriting rule in the suspension calculus for logic
variables.}\label{fig:susp_r9}
\end{figure}
The rule is justified by the fact that substitutions for logic
variables cannot be captured by enclosing abstractions and hence
cannot be affected by any reduction or renumbering substitutions.
With the addition of logic variables, we have also to extend our
definition of head normal forms to include the case where the head is
also such a variable. We shall say now that a head normal form is {\it
  flexible} if it has a logic variable as its head and that it is {\it
  rigid}   if the head is a constant or de Bruijn index.

A unification problem in the context of the typed $\lambda$-calculus
is known as a {\it higher-order unification problem}. Such a
problem can be represented by a {\it disagreement set} that is a
finite collection of pairs of $\lambda$-terms, known as the {\it
  disagreement pairs}, in which the two terms in each pair have equal
types. A solution to, or a {\it unifier} for, the problem is
substitution for logic variables---also represented as a set
of pairs of terms as discussed earlier---that is such that it makes
the two terms in each pair in the disagreement set equal when it is
applied to them.
A useful notion in the context of unification is that of a {\it most
general unifier}. This is a unifier for a disagreement set from which
any other unifier for the set can be obtained through further
substitutions for logic variables. Unfortunately higher-order
unification does not admit of most general unifiers. Particular
problems may, in fact, have an infinite set of unifiers none of which
can be obtained from others in the set through further substitutions.
A further observation is that no procedure can be provided that
computes a covering set of unifiers in a non-redundant way. However, a
non-redundant search can be carried out to determine
unifiability. Huet has in fact described a procedure that carries out
such a search \cite{Huet75TCS}. This procedure computes initial
portions of unifiers that are known as {\it pre-unifiers}. In several
instances, the pre-unifiers that it computes turn out actually to be
complete unifiers for the problem under consideration.

Huet's procedure consists of two phases, which are
repetitively invoked on a given disagreement set to transform it into
a form from which it can be decided that no unifier exists or for
which unifiability
is evident. Since equality is based on the rules of
$\lambda$-conversion, we can assume that the two terms in each
disagreement pair in a unification problem are in head normal form and
that their binders have been adjusted to have the same length. Now,
the first phase of Huet's unification procedure handles
pairs in which both terms are rigid, \ie,
{\it rigid-rigid} pairs, in a way similar to
term simplification in first-order unification: depending on whether
or not
the two heads are equal, the unification problem is simplified to one
consisting of pairs formed out of the arguments or non-unifiability is
determined. The second phase of Huet's algorithm
considers {\it flexible-rigid} pairs, and attempts to bind logic
variables as the heads of the flexible terms. In particular, assuming
the logic variable $X$ is the head of the flexible term, a substitution
of form
$\dg{X}{\lambdadb\ldots\lambdadb(r\ (H_1\ \#n\ \ldots\#1)\ \ldots\ (H_m\ \#n\ \ldots\ \#1))}$
is produced, where $H_1\ldots H_m$ are new logic variables of proper types.
The substituted term has the binder length $n$ that is decided by the number
of arguments in the type of $X$.
The head $r$ can be a de Bruijn index $\#i$, for $1\leq i\leq n$,
when the ith argument of the type of $X$ has $m$ arguments and has target
being the same as the that of the type of $X$,
or a constant $c$ when the rigid term has $c$ as its head,
and the type of $c$ has $m$ arguments.
Observations that are important to our discussions should be made on the
following two issues.
First, multiple bindings can be found for the same logic variable during the
binding phase, and they cannot be obtained from each other by performing
further substitutions. Second, the types of logic variables
play an important role in determining the structures of the bindings;
in particular, it is used to decide whether a de Bruijn index can be made
the head of the binding term. The details on unifying flexible-rigid pairs
in Huet's algorithm is beyond the scope of this thesis
and we refer interested readers to~\cite{Huet75TCS}.
In addition to the rigid-rigid and flexible-rigid cases,
pairs containing {\it flexible-flexible} terms may also occur during
unification. A pair of this sort is known as always unifiable, but a
complete search for the unifiers can be
unconstrained~\cite{Huet75TCS}.
Huet's algorithm treats a set consisting of only such pairs as a success
without further exploring the underlying unifiers.

The iterative use of the term simplification and binding phases in
Huet's algorithm naturally forms a branching search. If the searching
process terminates, either non-unifiability is determined or a
finite complete set of unifiers up to flexible-flexible pairs
for the given disagreement set is produced.

The undecidability property of higher-order unification manifests
itself in the fact that the search conducted by Huet's procedure may
not find a success at any finite depth, \ie, the search may go on for
ever. Even when successes are found at finite depth, the search still
not terminate because the number of successes to be found may be
infinite.

The theoretical properties of higher-order unification and the
branching search that must be conducted make it seem as if such
unification cannot be used effectively in a practical setting.
However, the actual
utilization of Huet's procedure in several programming systems,
including an implementation of $\lp$ that is known as {\it Teyjus
  Version 1} and that is based on attempting to solve the complete
set of higher-order unification problems, has demonstrated a
practical usefulness for this kind of computation. In particular, it
has been revealed that there is a wide collection of
application tasks in which the
unification problems that need to be solved in fact have unique
solutions.
Based on a study of the usage of higher-order unification in
these examples, Dale Miller has identified a subset
of the general problem that is known as the $\Ll$ or the higher-order
pattern class~\cite{Miller91jlc, Nipkow93}.
The problems in this subset occur when existential variables in
queries and universal variables in program clauses are used in a
restricted way. Unifiability for this subset is known to be decidable
and it is also known that a single most general unifier can be provided
in any of the cases where a unifier exists.
An empirical study conducted by Michaylov and Pfenning~\cite{MP92}
shows that even if we do not restrict the syntax of programs at the
outset to ensure that unification problems outside the $\Ll$ class are
not generated,
95\% of the unification problems occurring in the
computations underlying practical $\lp$ applications are first-order, and the
remaining evolve into problems belonging to the $\Ll$ subset once
substitutions determined by looking at other disagreement pairs are
made for logic variables.

When applied to higher-order pattern problems, Huet's procedure
is guaranteed to terminate and will do so with a unique successful
branch.
However, by fully taking advantages of the $\Ll$ restriction,
Huet's procedure can be further improved.
First, the unique solution to a problem in the $\Ll$ subset may be
found by Huet's procedure through a branching search, which is
known to be expensive in performance.
Second, Huet's procedure only partially computes the solutions for
flexible-flexible pairs, whereas the complete solution for such
pairs in the $\Ll$ subset can be found in a controlled way.
Third, it is known that the types of logic variables have no impact
on the structures of the unifiers of $\Ll$ problems, and consequently
the maintenance and examination effort required by Huet's algorithm
for such information becomes completely redundant.
Improvements of this sort have already been proposed by Dale Miller,
which lead to a simpler and more efficient approach for solving
pattern unifications. This approach is, however, described at
a high level in a non-deterministic manner. Research
conducted by Nadathur and Linnell in~\cite{NL05HOP} further
refines Miller's algorithm into one that is suitable to be used
as the basis of actual implementations by seriously taking the efficiency
of the algorithm into account. This approach is adopted in the
implementation scheme for $\lp$ underlying this thesis.

A critical part of defining higher-order pattern unification problems
and the algorithm for solving them is paying attention to the scopes
of quantifiers that give rise to logic variables and constants. The
logic underlying $\lp$ has the capability to mix such scopes
richly---for example, existential and universal quantifiers can be
used in arbitrary order over goals. However, to develop the discussion
in a way that leads naturally into an implementation of $\lp$, it is
useful to have available a particular approach to encoding and treating
quantifier scopes. We include this mechanism in an abstract
interpreter in the next section before explicitly taking on the
discussion of higher-order pattern unification in
Section~\ref{sec:pattern_unif}.

\section{An Abstract Interpreter}\label{sec:interpreter}

The model of computation presented in Section~\ref{sec:hohh} can be
refined into a state transition system whose purpose is essentially to
simplify a set of goals till they all are completely solved. The
reason for considering a set of goals in a state as opposed to a
single goal is that we allow for conjunctions in $G$ formulas: to
solve such a goal, we have to solve both goals. Another thing to note
is that in the presence of augment goals it is necessary also to
include the available program as a component of a state. However,
programs must parameterize the solution of particular goals and not
the entire set: in trying to solve the goal $D \supset G$, we get to
use the clause $D$ in solving $G$ but not in solving all the other
goals present in the set. We would also like to include in the
treatment a realistic model for finding substitutions for
existentially quantified variables in goals. For this reason, we
add to the state a disagreement set representing a unification
problem that still has to be solved and a substitution $\theta$ that
is proposed as a solution to parts of the overall unification
problem. Finally, we need to keep track of the constants and logic
variables that have already been introduced in the search up to this
point so as to make sure we do not reuse them.

Universal goals will be treated in the abstract interpreter in a
similar manner to that in the high-level description of computation:
they will be instantiated with new constants. For existential
quantifiers, we will use the idea of instantiating with logic
variables as already indicated. However, we
have to be careful to take into account the order in which these
quantifiers are encountered for correctness. As a concrete example,
consider an attempt to solve the goal $\somex{y}\allx{z}(p\ y\ z)$
given a program that contains the
clause $\allx{x}(p\ x\ x)$. Following the expected approach leads to
the disagreement set $\{\dg{p\ Y\ c}{p\ X\ X}\}$ being given to the
unification procedure; $Y$ and $X$ are logic variables here that have
been introduced for the purpose of instantiating the existential
quantifier in the goal and the universal quantifier in the program
clause and $c$ is a new constant introduced when the universal
quantifier in the goal is processed. Now, if we proceed naively with
unification, this disagreement set can be solved by instantiating $Y$
and $X$ to $c$. Unfortunately, this solution is incorrect because
instantiating $X$ with $c$ corresponds to producing a computation
sequence according to the high-level description in which the constant
introduced for the universal quantifier in the goal is not new.

The particular point that we have to pay attention to in order to
avoid bad solutions like that discussed above is that logic variables
can only be instantiated with terms from a signature that is in
existence at the time when these variables are introduced. A practical
way to realize this constraint in unification is to think of the term
universe as growing in stages, with each universal quantifier
introducing a new stage \cite{N93JAR}. Calling each stage a universe
level, we can think of labeling each constant with the universe level
at which it enters the signature. We can then also label logic
variables with universe levels to indicate the maximum level that can
be attached to a constant that appears in a term instantiating the
variable.

To realize this scheme within our abstract interpreter, we shall
include with each state a labeling function that assigns universe
levels to (finite sets of) constants and variables associated with the
state. Since the domain of this function is finite, we will sometimes
depict it by its graph, \ie, we will show it as a set of ordered
pairs. We further associate with each
goal the value of the universe level at the start of the processing of
that goal; this universe level will be manipulated by embedded
universal goals and will be used to label logic variables that are
generated in processing. We shall also need to make sure that
substitutions for logic variables are consistent with labeling
functions. Unification must produce substitutions that respect
labelings and they may lead to modifications to labelings needed to
ensure that subsequent substitutions will not violate the dependencies
generated by earlier ones. The following definition introduces the
notions needed to formalize these requirements:

\begin{defn}
A labeling function $\st{L}$ is a mapping from a finite collection of
logic variables and constants to natural numbers.
Let $\theta = \{\dg{X_i}{t_i}|1\leq i \leq n\}$ be a substitution, and let
$\st{L}$ be a labeling function. Then $\theta$ is proper with respect
to $\st{L}$ if for $1 \leq i \leq n$ it is the case that $\st{L}(c) \leq
\st{L}(X_i)$ for any constant $c$ appearing in $t_i$.
The labeling induced by $\theta$ and $\st{L}$ in this case is a
labeling function that is written as $\st{L}_\theta$. This function
behaves identically to $\st{L}$ on constants and on logic variables it
is such that
\begin{tabbing}
\dquad$\st{L_\theta}(X) =
min(\{\st{L}(X_i)|\dg{X_i}{t_i}\in\theta$ {\it and
  $X$ appears in $t_i$}$\})$
\end{tabbing}
if the variable is new, \ie, does not have a universe index already
assigned to it and is
\begin{tabbing}
\dquad$\st{L_\theta}(X) =
min(\{\st{L}(X)\}\cup\{\st{L}(X_i)|\dg{X_i}{t_i}\in\theta$ {\it and
  $X$ appears in $t_i$}$\})$
\end{tabbing}
otherwise.
\end{defn}

Having provided the intuition behind the abstract interpreter
structure, we now begin to present it formally. The first aspect to be
made precise is the structure of a state within the interpreter.

\begin{defn}
A computation state is a tuple of form
$\tuple{\st{G}, \st{D}, \st{C}, \st{V}, \st{L}, \theta}$ where
\begin{enumerate}
\item $\st{G}$ is a set of triples of the form $\tuple{G, \st{P}, N}$ where
  $G$ is a goal, $\st{P}$ is a collection of program clauses and $N$
  is a natural number,
\item $\st{D}$ is a disagreement set,
\item $\st{C}$ and $\st{V}$ are (finite) sets of constants and
logic variables respectively,
\item $\st{L}$ is a labeling function whose domain is $\st{C}\cup
  \st{V}$, and
\item $\theta$ is a substitution for logic variables.
\end{enumerate}
\end{defn}

The syntax that we have used for program clauses in the logic
underlying $\lp$ allows them to have a conjunctive structure. This is
useful, for instance, in writing augment goals but in describing
computation it is preferable to be dealing only with clauses of the
form $\allx{x_1}\ldots\allx{x_n}A$ or
$\allx{x_1}\ldots\allx{x_n}(\impxy{G}{A})$ where $A$ is an atomic
formula. We describe a function on program clauses that allows us to
extract a set of clauses in this reduced form from them.
\begin{defn}
The elaboration of a program clause $D$, denoted by {\it elab$(D)$}, is
the set of formulas defined as the follows:
\begin{enumerate}
\item If $D$ is an atomic formula $A$ or of the form
$\impxy{G}{A}$, then it is $\{D\}$.
\item If $D$ is $\andxy{D_1}{D_2}$,
then it is {\it elab$(D_1)$} $\cup$ {\it elab$(D_2)$}.
\item If $D$ is $\allx{x}{D_1}$
then it is $\{\allx{x}{D_2}\ |\ D_2 \in {\it elab}(D_1)\}$.
\end{enumerate}
The elaboration of a program $\mathcal{P}$ is the union of the
elaboration of all the clauses in $\mathcal{P}$.
\end{defn}

% rules
We now formalize the notion of state transitions that underlies our
abstract interpreter for the logic underlying $\lp$.
\begin{defn}\label{def:interpreter}
A state $\tuple{\st{G}_2, \st{D}_2, \st{C}_2, \st{V}_2, \st{L}_2,
\theta_2}$ is derivable from another state of form $\tuple{\st{G}_1,
\st{D}_1, \st{C}_1, \st{V}_1, \st{L}_1, \theta_1}$ if one of the
following holds.
\begin{enumerate}
\item $\tuple{\top,\st{P},N}\in \st{G}_1$, $\st{G}_2 = \st{G}_1-\{\tuple{\top,\st{P},N}\}$,
    $\st{D}_2 = \st{D}_1$,  $\st{C}_2=\st{C}_1$, $\st{V}_2=\st{V}_1$,
    $\st{L}_2 = \st{L}_1$ and $\theta_2 = \emptyset$.
\item $\tuple{\andxy{G_1}{G_2},\st{P},N}\in \st{G}_1$,  $\st{G}_2 =
(\st{G}_1-\{\tuple{\andxy{G_1}{G_2},\st{P},N}\})\cup\{\tuple{G_1,\st{P},N}, \tuple{G_2,\st{P},N}\}$,
    $\st{D}_2 = \st{D}_1$,  $\st{C}_2=\st{C}_1$, $\st{V}_2=\st{V}_1$,
    $\st{L}_2 = \st{L}_1$ and $\theta_2 = \emptyset$.
\item $\tuple{\orxy{G_1}{G_2},\st{P},N}\in \st{G}_1$, for $i=1$ or $i=2$,
   \begin{tabbing}
    \dquad\dquad$\st{G}_2 = (\st{G}_1-\{\tuple{\orxy{G_1}{G_2},\st{P},N}\})\cup\{\tuple{G_i,\st{P},N}\}$,
   \end{tabbing}
   $\st{D}_2 = \st{D}_1$,  $\st{C}_2=\st{C}_1$, $\st{V}_2=\st{V}_1$, $\st{L}_2 = \st{L}_1$ and $\theta_2 = \emptyset$.
\item $\tuple{\somex{x}{G},\st{P},N}\in \st{G}_1$, for a logic variable $X\notin\st{V}_1$,
   \begin{tabbing}
   \dquad\dquad  $\st{G}_2 = (\st{G}_1-\{\tuple{\somex{x}{G},\st{P},N}\})\cup\{\tuple{G[x:=X],\st{P},N}\}$,
   \end{tabbing}
   $\st{D}_2 = \st{D}_1$, $\st{C}_2=\st{C}_1$, $\st{V}_2=\st{V}_1\cup\{X\}$, $\st{L}_2 = \st{L}_1\cup\{\dg{X}{N}\}$ and $\theta_2 = \emptyset$.
\item $\tuple{\impxy{D}{G},\st{P},N}\in \st{G}_1$, $\st{G}_2 = (\st{G}_1-\{\tuple{\impxy{D}{G},\st{P},N}\})\cup\{\tuple{G_1,\andxy{\st{P}}{D},N}\}$,
    $\st{D}_2 = \st{D}_1$, $\st{C}_2=\st{C}_1$, $\st{V}_2=\st{V}_1$,
    $\st{L}_2 = \st{L}_1$ and $\theta_2 = \emptyset$.
\item $\tuple{\allx{x}{G},\st{P},N}\in \st{G}_1$, for a constant $c\notin\st{C}_1$,
   \begin{tabbing}
   \dquad\dquad  $\st{G}_2 = (\st{G}_1-\{\tuple{\allx{x}{G},\st{P},N}\})\cup\{\tuple{G[x:=c],\st{P},N+1}\}$,
   \end{tabbing}
   $\st{D}_2 = \st{D}_1$, $\st{C}_2=\st{C}_1\cup\{c\}$, $\st{V}_2=\st{V}_1$, $\st{L}_2 = \st{L}_1\cup\{\dg{c}{N+1}\}$ and $\theta_2 = \emptyset$.
\item Let $\tuple{A, \st{P}, N}\in\st{G}_1$,
      let $\allx{x_1}\ldots\allx{x_n}A'\in{\it elab}(\st{P})$ and,
      for $1\leq i\leq n$,
      let $X_i$ be a distinct logic variable such that $X_i\notin\st{V}_1$.
      Further, assume
      \begin{tabbing}
      \dquad\dquad$\st{D}' = \st{D}_1\cup\{\dg{A}{A'[x_1:=X_1]\ldots[x_n:=X_n]}\}$ and\\
      \dquad\dquad $\st{L}' = \st{L}_1\cup\{\dg{X_1}{N},\ \ldots,\dg{X_n}{N}\}$.
      \end{tabbing}
      Suppose that a unification procedure applied to $\st{D}'$
      produces a substitution $\sigma$ that is proper with respect to
      $\st{L}'$, and a disagreement set $\st{D}''$.
      Then $\st{G}_2 = \sigma(\st{G}_1-\{\tuple{A,\st{P},N}\})$,
      $\st{D}_2 = \st{D}''$, $\st{C}_2 = \st{C}_1$,
      $\st{V}_2 = \st{V}_1\cup\{X_1,\ \ldots,X_n\}$,
      $\theta_2 = \sigma$ and
      $\st{L}_2 = \st{L'}_\sigma$.
\item Let $\tuple{A, \st{P}, N}\in\st{G}_1$,
      let $\allx{x_1}\ldots\allx{x_n}(\impxy{G}{A'})\in{\it elab}(\st{P})$ and,
      for $1\leq i\leq n$,
      let $X_i$ be a distinct logic variable such that $X_i\notin\st{V}_1$.
      Further, assume
      \begin{tabbing}
      \dquad\dquad$\st{D}' = \st{D}_1\cup\{\dg{A}{(\impxy{G}{A'})[x_1:=X_1]\ldots[x_n:=X_n]}\}$ and\\
      \dquad\dquad $\st{L}' = \st{L}_1\cup\{\dg{X_1}{N},\ \ldots,\dg{X_n}{N}\}$.
      \end{tabbing}
      Suppose that a unification procedure applied to $\st{D}'$
      produces a substitution $\sigma$ that is proper with respect to
      $\st{L}'$, and a disagreement set $\st{D}''$.
      Then
      \begin{tabbing}
      \dquad\dquad$\st{G}_2 = \sigma((\st{G}_1-\{\tuple{A,\st{P},N}\})\cup\{\tuple{G[x_1:=X_1]\ldots[x_n:=X_n],\st{P},N}\})$,
      \end{tabbing}
      $\st{D}_2 = \st{D}''$, $\st{C}_2 = \st{C}_1$,
      $\st{V}_2 = \st{V}_1\cup\{X_1,\ \ldots,X_n\}$,
      $\theta_2 = \sigma$ and
      $\st{L}_2 =\st{L}'_\sigma$.
\end{enumerate}
A sequence of the form
$\tuple{\st{G}_1, \st{D}_1, \st{C}_1, \st{V}_1, \st{L}_1, \theta_1}$,
$\ldots$,
$\tuple{\st{G}_n, \st{D}_n, \st{C}_n, \st{V}_n, \st{L}_n, \theta_n}$
is a derivation sequence if the $(i+1)$th tuple in it is derived from the
$i$th tuple. Such a derivation sequence
terminates if no tuple can be derived from
$\tuple{\st{G}_n, \st{D}_n, \st{C}_n, \st{V}_n, \st{L}_n, \theta_n}$.
\end{defn}

\begin{defn}\label{def:interp}
Let $G$ be a closed goal formula, let $\st{P}$ be a set of closed program
clauses and let $\st{C}$ be the set of constants occurring in $G$ and
$\st{P}$.
Further, let $\st{L}$ be a labeling function of form $\{\dg{c}{0}|c\in\st{C}\}$.
Now assume $\st{G}_1 = \{\tuple{G, \st{P}, 0}\}$, $\st{D}_1 = \emptyset$,
$\st{C}_1 = \st{C}$, $\st{V}_1 = \emptyset$, $\st{L}_1 = \st{L}$ and
$\theta = \emptyset$. Then a derivation sequence of the
form $\tuple{\st{G}_1, \st{D}_1, \st{C}_1, \st{V}_1, \st{L}_1,
\theta_1}$, $\ldots$, $\tuple{\st{G}_n, \st{D}_n, \st{C}_n,
\st{V}_n, \st{L}_n, \theta_n},\ldots$ is said to be a
$\st{P}$-derivation sequence for $G$. Such a sequence may terminate
because no further rules are applicable to the last tuple in it. If
such termination occurs at the $m$th tuple because $\st{G}_m$ is empty
and $\st{D}_m$ is either empty or contains only flexible-flexible
pairs, then the
sequence is called a $\st{P}$-derivation of $G$. A sequence of this
kind embodies a
solution to the query $G$ in the context of the program $\st{P}$ and
the answer substitution corresponding to it is obtained by composing
$\theta_m\circ\ldots\circ\theta_1$ with any unifier for $\st{D}_m$
and restricting the results substitutions to the logic variables
corresponding to the top-level existentially quantified variables in
$G$.
\end{defn}

An abstract interpreter for our language can be described as one that
searches for a $\st{P}$-derivation of $G$ for any closed goal $G$ and
closed program $\st{P}$.
The soundness and completeness of such an interpreter with respect to
the high-level description of computation in
Chapter~\ref{chp:language} is demonstrated in \cite{N93JAR}.
Notice that our abstract interpreter still has elements of
non-determinism in it. In particular, it has to select the next goal
to try from the collection of goals in the state, it has to make a
choice between the two disjuncts when solving a disjunctive goal and
it also needs to pick the program clause to try from the elaboration
of the program when it reaches an atomic goal.
These issues are present in the setting of a first-order logic
language as well and similar  solutions can be used in our context. In
particular, we impose a left to right order on the goal set and use
this order to determine the next goal to act upon, we use a
left-to-right processing  order in the treatment of disjunctive goals
and we select clauses in solving atomic goals based on the order of
their presentations in the program. There is no need to reconsider the
order in which we select goals from the goal set. In all other cases
we use a depth-first approach with the possibility of backtracking
when faced with alternatives.

Definition~\ref{def:interp} requires that the final disagreement set
consist of only flexible-flexible pairs. The ability to produce a set
satisfying this requirement depends on the unification procedure that
is used. The unification algorithm that we
discuss next is guaranteed to produce an empty disagreement set when
the unification problems that have to be solved all fall within the
higher-order pattern fragment. However, we shall sometimes apply this
procedure to cases where this restriction is not satisfied. In this
case, it is possible that the final disagreement set is not empty and
contains at least one rigid-flexible pair. In this case the original
goal is to be understood to be solvable provided the final
disagreement set has a solution.

\section{Higher-Order Pattern Unification}\label{sec:pattern_unif}
%Llambda
The implementation scheme underlying this thesis specializes the
abstract interpreter that we have described by using a unification
algorithm that completely solves higher-order pattern unification
problems. We say that a unification problem, given by a
disagreement set, is in this class if the following syntactic
constraint is satisfied by every term in the set: for any subterm of the
term that has the the form $(X\ t_1\ ...\ t_n)$ where $X$ is a logic
variable, it must be the case that $t_1$, ..., $t_n$  are distinct
constants or de Bruijn indexes and, further, if they are constants
then they must have originated from the processing of (essential)
universal quantifiers appearing inside the scope of the quantifier
whose processing gave rise to $X$. Given the labeling function
discussed in the previous section, the latter condition can be stated
also in the following way: if $t_i$ is a constant then it must be the
case that $\st{L}(X) < \st{L}(c)$, where $\st{L}$ is the labeling
function associated with the state in which the unification problem is
encountered. As a concrete example, consider the disagreement set
$\{\dg{(X\ c_2)}{(c_1\ c_2)}\}$, where $X$ is a logic
variable and $c_1$ and $c_2$ are constants. If the labeling function
associated with the state is $\{\dg{X}{1}, \dg{c_1}{1},
\dg{c_2}{2}\}$ then this disagreement set constitutes a higher-order
pattern unification problem. However, it is not a higher-order pattern
unification problem if the labeling function is $\{\dg{X}{2},
\dg{c_1}{1}, \dg{c_2}{2}\}$ instead. It is not too difficult to see
that with scopes corresponding to the second labeling function the
problem has two solutions: $\{\dg{X}{\lambdadb(c_1\ \# 1)}\}$
and $\{\dg{X}{\lambdadb(c_1\ c_2)}\}$. The scoping corresponding to
the first labeling function rules out the second of these
unifiers. More generally, it has been observed that unification
problems that are in the higher-order pattern class have unique
most general solutions whenever they are solvable~\cite{Miller91jlc}.

The unification procedure that we will use has two phases, one for
term simplification and another for variable binding. In the first
phase, rigid-rigid pairs are handled in a way similar to that in
Huet's unification algorithm by matching
the heads of terms and progressing into subproblems formed by the
arguments pairwise when the head match succeeds. In the binding phase,
rigid-flexible (symmetrically, flexible-rigid) and flexible-flexible
pairs are examined and a substitution is generated for the variable
head(s) only if the flexible term(s) satisfy the higher-order pattern
restriction. The transformation of the terms to their head normal
forms is assumed implicitly prior to the application of either of
these phases.
We also assume a slight modification of the term representation that
collapses a sequence of abstractions into a consolidated form: in
particular the term $(\lambdadb\ldots\lambdadb\ t)$ with a binder
length $n$ in our previous discussions is now represented as
$(\lambdadb(n, t))$. By an abuse of notation, we shall allow
the binder length to be equal to $0$, viewing $(\lambdadb(0, t))$ as
identical to $t$.

The binding phase of our algorithm utilizes some optimizations over
Huet's procedure that become possible when we restrict attention to
the higher-order pattern case. To understand one of these
optimizations, consider a rigid-flexible disagreement pair of form
\begin{tabbing}
\dquad\dquad\dquad$\dg{(X\ a_1\ \ldots\ a_n)}{(r\ s_1\ \ldots\ s_m)}$,
\end{tabbing}
where $X$ is a logic variable and $r$ is a constant or de Bruijn index
and assume that the higher-order pattern requirements are
satisfied. We do not show binders at the heads of the terms in a
disagreement pair here or below because these can be made identical
and, under the de Bruijn representation, they can then be
ignored. Now, a solution to this pair must rely on a substitution for
$X$. Suppose that the term so substituted has the structure
$\lambdadb(n,\ r'\ t_1\ \ldots\ t_n)$.  If $r$ is a de Bruijn index or
a constant $c$ such that $\st{L}(X)<\st{L}(c)$ where $\st{L}$ is the
relevant labeling function, $r$ cannot appear directly in a
substitution for $X$ that is proper with respect to $\st{L}$.
Consequently, the only way the pair can be solved is if $r$ appears in
the list of arguments for $X$, \ie, in $a_1\ \ldots\ a_n$ and, in this
case we would need to substitute for $X$ a term that projects onto the
corresponding argument.  The other possibility is for $r$ to be a
constant $c$ such that $\st{L}(c)\leq\st{L}(X)$. In this case, $c$
cannot occur in the list $a_1$, ..., $a_n$, and for this reason, $r'$
would have to be identical to $c$. These observations allow us to
uniquely determine the head of the substitution to be generated and to
thereby avoid any of the branching that would be manifest in an
application of Huet's algorithm that is blind to the situation being
considered.

%flex-flex : select; raise
Another place where an optimization is possible is in the treatment of
flexible-flexible pairs. Huet's algorithm does not treat such pairs at
all, as we have noted earlier. However, if the higher-order pattern
restriction is adhered to then it is possible to solve such pairs
in a most general way. For example suppose that the pair under
consideration is of form
\begin{tabbing}
\dquad\dquad\dquad$\dg{(X\ a_1\ \ldots\ a_n)}{(Y\ b_1\ \ldots\ b_m)}$,
\end{tabbing}
where $X$ and $Y$ are distinct logic variables. Let us first assume
that the quantifiers from which $X$ and $Y$ result have (effectively)
the same scopes, \ie, that $\st{L}(X)=\st{L}(Y)$. Then it can be
seen that a most general solution to this pair can be given by
substitutions for $X$ and $Y$ of the form
\begin{tabbing}
\dquad$\dg{X}{\lambdadb(n,\ (H\ t_1\ \ldots\ t_k))}$\quad
and\quad $\dg{Y}{\lambdadb(m,\ (H\ s_1\ \ldots\ s_k))}$
\end{tabbing}
where $H$ is a new logic variable with the same scope as that of
$X$ and $Y$ and $t_1,\ldots,t_k$ and $s_1,\ldots,s_k$ are de Bruijn
indices for variables bound by the abstractions in the binder of the
substitution terms. The purpose of the arguments in the two
substitutions is to preserve parts of the arguments in the terms in
the disagreement pairs that cannot be absorbed into any subsequent
substitutions for $H$. Of course, when this substitution is applied to
the terms that are to be unified, it should produce identical terms.
From this, it is easy to see that $t_1,\ldots,t_k$ and
$s_1,\ldots,s_k$ should be such that they both generate the same
permutation $z_1,\ldots,z_k$ of the common elements of the argument
lists $a_1\ldots a_n$ and $b_1\ldots b_m$ of the terms in the
disagreement pair.

The notation introduced in the following definition is useful in
making the substitutions described above precise.
\begin{defn}
Let $[a_1, \ldots, a_n]$ be a non-empty list of distinct constants
or de Bruijn indexes, and let $z$ be a constant or de Bruijn index
occurring in $[a_1\ \ldots\ a_n]$. Then $z\downarrow[a_1, \ldots, a_n]$
denotes the de Bruijn index $\#(n+i-1)$ where $i$ is such that $z = a_i$.
Suppose that $[a_1\ \ldots\ a_n]$ and $[z_1,\ldots,z_k]$ are two
lists of distinct de Bruijn indices or constants such that
$\{z_1,\ldots,z_k\}\subseteq\{a_1,\ldots,a_n\}$,
then $[z_1, \ldots, z_k] \downarrow [a_1, \ldots, a_n]$ denotes the
list $[i_1, \ldots, i_k]$ such that
for $1 \leq j \leq k$, $i_j = z_j\downarrow [a_1, \ldots, a_n]$. We
include the case where $k = 0$ in this definition by deeming the
result to be the empty list.
\end{defn}
Using the selection operator, we can define a most general unifier for
the pair of (higher-order pattern) terms
$\dg{(X\ a_1\ \ldots\ a_n)}{(Y\ b_1\ \ldots\ b_m)}$ where $X$ and $Y$
are logic variables such that $\st{L}(X) = \st{L}(Y)$ as
\begin{tabbing}
\dquad$\{\dg{X}{\lambdadb(n, (H\ t_1\ \ldots\ t_k))}$,
$\dg{Y}{\lambdadb(m,(H\ s_1\ \ldots\ s_k))}\}$,
\end{tabbing}
where
\begin{enumerate}
\item $H$ is a new logic variable,
\item $[z_1, \ldots, z_k]$ is some listing of the elements of
  $\{a_1,\ldots,a_n\} \cap \{b_1,\ldots,b_m\}$, and
\item $[t_1, \ldots, t_k]=[a_1, \ldots, a_n]\!\downarrow\![z_1, \ldots, z_k]$ and
$[s_1, \ldots, s_k]=[b_1, \ldots, b_n]\!\downarrow\![z_1, \ldots, z_k]$.
\end{enumerate}
As a concrete example, suppose the terms to be unified are
\begin{tabbing}
\dquad\dquad\dquad$(X\ c_4\ c_1\ c_2\ c_3)$\dquad
and \dquad$(Y\ c_5\ c_2\ c_1\ c_3)$,
\end{tabbing}
where $X$ and $Y$ are logic variables such that $\st{L}(X)=\st{L}(Y)=0$,
and $c_i$'s are constants where $\st{L}(c_i) = i$, for $1\leq i\leq 5$.
This pair has a most general unifier
\begin{tabbing}
\dquad\dquad$\{\dg{X}{\lambdadb(4, \ (H\ \#3\ \#2\ \#1))},
\dg{Y}{\lambdadb(4,\ (H\ \#2\ \#3\ \#1))}\}$.
\end{tabbing}
The listing of the common argument elements in the terms to be unified
that produces the sequence of argument elements in the substitution
terms is $[c_1,c_2,c_3]$.

Of course, the labels on the flexible heads of the terms that are to
be unified need not be the same. Let us assume, without losing
generality, that $\st{L}(X)<\st{L}(Y)$. We can describe a most general
unifier in this case as well. This solution can be arrived at in two
steps. The first step, that is called {\it raising}, adjusts the head
of the second term so that its scope is made identical to that of
$X$. At this point, a unifier can be generated as in the case already
considered. The main issue with the label of $Y$ being larger than
that of $X$ is that some of the constants that appear as arguments in
the first term can appear in the substitution term for $Y$. These
constants are the ones that have a label that is less than or equal to
that of $Y$. We introduce the following notation to identify them
collectively:
\begin{defn}
Given a list of distinct constants and de Bruijn indexes
$[a_1, \ldots, a_n]$, a labeling function $\st{L}$
and a logic variable $Y$, let $\{c_1,\ldots,c_k\}$ be the set of
constants in $[a_1,\ldots,a_n]$ whose labels are less than or equal to
$\st{L}(Y)$. Then the expression $[a_1, \ldots, a_n]\Uparrow Y$
denotes some listing of $\{c_1,\ldots,c_k\}$.
Note that the set of constants satisfying the condition may be empty
in which case $[a_1, \ldots, a_n]\Uparrow Y$ is an empty list.
\end{defn}
The raising substitution is identified in this context to be
$\{\langle Y, Y'\app c_1\app \ldots\app c_k\rangle\}$ where $Y'$ is a
new logic variable that is assigned the same label as $X$ and
$[c_1,\ldots,c_k] = [a_1,\ldots,a_n] \Uparrow Y$.

To complete our consideration of the flexible-flexible case, we need
also to deal with the situation where the heads of the two terms are
identical, \ie, where the pair in question is
$\dg{(X\ a_1\ \ldots\ a_n)}{(X\ b_1\ \ldots\ b_n)}$. This differs from
the earlier case in that the {\it same} substitution gets applied to
both terms. From this it follows easily that a most general solution
is one that preserves exactly the common elements of $[a_1\ldots,a_n]$
and $[b_1,\ldots,b_n]$ that also appear in identical positions in the
two lists.

%Overview of Gopalan and Natalie's algorithm
The above discussion provides an overview of the higher-order
pattern unification algorithm that is used as the basis of the
implementation scheme developed in this thesis. The actual algorithm
we use is the one developed by Nadathur and
Linnell~\cite{NL05HOP}. This algorithm uses the fact that the partial
substitutions described above are actually most general to generate a
complete solution for a flexible-rigid pair in one recursive pass over
the rigid term. (The flexible-flexible case is completely treated
already by the substitution discussed.) As is to be anticipated, this
algorithm has two phases, one for term simplification and the other for
binding. The simplification phase is characterized by the rules in
Figure~\ref{fig:hopu_simplification}. In the application of these rules,
a unification problem is assumed to be given by a tuple
$\dg{\st{D}}{\theta}$ where $\st{D}$ is the disagreement set under
consideration and $\theta$ is a set of substitutions which is initially empty.
Further, a labeling function $\st{L}$ is assumed to be available during
the entire unification process as an implicit global component of the
state.
\begin{figure}
\begin{tabbing}
\dquad\=(5) \=$\dg{\dg{\lambdadb(n, t)}{\lambdadb(m, s)}::\st{D}}{\theta}$ \= $\lra$ \=\kill
\> (1)\>$\dg{\dg{\lambdadb(n, t)}{\lambdadb(n, s)}::\st{D}}{\theta}$\> $\lra$
\> $\dg{\dg{t}{s}::\st{D}}{\theta}$, provided $n > 0$. \\
\> (2)\>$\dg{\dg{\lambdadb(n, t)}{\lambdadb(m, s)}::\st{D}}{\theta}$\> $\lra$
\> $\dg{\dg{t}{\lambdadb(m-n, s)}::\st{D}}{\theta}$, \\
\>\> provided $n > 0$ and $m > n$. \\
\> (3)\>$\dg{\dg{(r\ t_1\ \ldots\ t_n)}{\lambdadb(m, s)}::\st{D}}{\theta}$ $\lra$ \\
\>\> $\dg{\dg{((\env{r,0,m,nil}\ \env{t_1,0,m,nil}\ \ldots\ \env{t_n,0,m,nil})\ \#m\ \ldots\ \#1)}{s}::\st{D}}{\theta}$, \\
\>\> provided $r$ is a constant or a de Bruijn index and $m>0$. \\
\> (4)\>$\dg{\dg{(r\ t_1\ \ldots\ t_n)}{(r\ s_1\ \ldots\ s_n)}::\st{D}}{\theta}$ $\lra$
$\dg{\dg{t_1}{s_1}::\ldots ::\dg{t_n}{s_n}::\st{D}}{\theta}$, \\
\>\> provided $r$ is a constant or a de Bruijn index.\\
\> (5)\>$\dg{\dg{(X\ a_1\ \ldots\ a_n)}{t}::\st{D}}{\theta}$ \= $\lra$
$\dg{\sigma(\st{D})}{\sigma\circ\theta}$, \\
\>\> provided $X$ is a logic variable, $(X\ a_1\ \ldots\ a_n)$ is $\Ll$ with respect to $\st{L}$, \\
\>\> and {\it mksubst}$(X,t,[a_1, \ldots, a_n])\lra\sigma$. %, where $\sigma \neq \perp$.
\end{tabbing}
\caption{Term simplification in higher-order pattern unification.}\label{fig:hopu_simplification}
\end{figure}
%mksubst
The binding phase is realized through the function {\it mksubst} that
takes as its arguments the head of (one of the) flexible term(s), the
arguments of this term and the other term in the disagreement pair.
The definition of this function together with those of other two auxiliary
ones {\it bnd} and {\it foldbnd} are given by the rules in
Figures~\ref{fig:hopu_mksubst}, \ref{fig:hopu_bnd} and \ref{fig:hopu_foldbnd}.

\begin{figure}
\begin{tabbing}
\quad\={\it mksubst}$(X, \lambdadb(k, X\ b_1\ \ldots\ b_m), [a_1\ \ldots\ a_n])$ \= $\lra$\kill
%mksubst
\>{\it mksubst}$(X, \lambdadb(k, X\ b_1\ \ldots\ b_m), [a_1, \ldots, a_n])$ \>$\lra$ $\{\dg{X}{\lambdadb(k+n, H\ w_1\ \ldots\ w_l)}\}$, \\
\>\quad \=where $H$ is a new logic variable and $\st{L} = \st{L}\cup\{\dg{H}{\st{L}(X)}\}$, provided \\
\>      \>(1) \=$(X\ b_1\ \ldots\ b_m)$ is $\Ll$ with respect to $\st{L}$ and \\
\>      \>(2) \>for $1\leq i\leq n+k$, $w_i = \#(n+k-i)$, if $al[i]=b_i$ where \\
\>      \>    \>$al = [\env{a_1,0,k,nil}, \ldots, \env{a_n,0,k,nil}, \#k, \ldots, \#1]$. \\
\>{\it mksubst}$(X, t, [a_1, \ldots, a_n])$ $\lra \{[X:=\lambdadb(n, s)]\}\circ\theta$, \\
\>\quad if the head of $t$ is not $X$ and {\it bnd}$(X, t, [a_1\ \ldots\ a_n], 0)\lras\dg{\theta}{s}$
\end{tabbing}
\caption{Top-level control for calculating variable bindings}\label{fig:hopu_mksubst}
\end{figure}

\begin{figure}\small
\begin{tabbing}
%bnd
\={\it bnd}$(X, \lambdadb(m, t), [a_1, \ldots, a_n], l)\lra\dg{\theta}{\lambdadb(m, s)}$, \\
\>\quad if $m > 0$ and {\it bnd}$(X, t, [a_1, \ldots, a_n], l+m)\lras\dg{\theta}{s}$. \\

\>{\it bnd}$(X, r\ t_1\ \ldots\ t_m,[a_1, \ldots, a_n], l)\lra\dg{\theta}{r'\ s_1\ \ldots\ s_m}$ \\
\>\quad provided {\it foldbnd}$(X, [t_1, \ldots, t_m], [a_1, \ldots, a_n], l, \dg{\emptyset}{[]})\lras\dg{\theta}{[s_m, \ldots, s_1]}$ and \\
\>\quad \=(1) \=$r' = r$, if $r$ is a constant such that $\st{L}(r)\leq\st{L}(X)$, or \\
\>      \>(2) \>$r' = r\downarrow al$, if $r$ is a de Bruijn index occurring in $al$ where \\
\>      \>    \>$al = [\env{a_1,0,l,nil}, \ldots, \env{a_n,0,l,nil}, \#l, \ldots, \#1]$. \\

\>{\it bnd}$(X, Y\ b_1\ \ldots\ b_m, [a_1, \ldots, a_n], l)\lra$ \\
\>\squad\dquad$\dg{\{\dg{Y}{\lambdadb(m, H\ c_1\ \ldots\ c_k\ u_1\ \ldots\ u_q)}\}}{H\ w_1\ \ldots\ w_p\ v_1\ \ldots\ v_q }$, \\
\>\quad \= where $H$ is a new logic variable and $\st{L} = \st{L}\cup\{\dg{H}{\st{L}(X)}\}$,\\
\>      \> $[c_1,\ldots,c_k] = al\Uparrow Y$, $[w_1,\ldots w_p] =
        [c_1,\ldots,c_k]\downarrow al$, $[u_1,\ldots,u_q] =
        zl\downarrow [b_1,\ldots, b_m]$ and\\
\>\>$[v_1,\ldots,v_q] = zl\downarrow al$, with $al = [\env{a_1,0,l,nil},
  \ldots, \env{a_n,0,l,nil}, \#l, \ldots, \#1]$ and\\
\>      \> $zl = [z_1,\ldots,z_q]$ as a permutation of\\
\>      \> $\{\env{a_1,0,l,nil}, \ldots, \env{a_n,0,l,nil}, \#(l-1), \ldots, \#1\}\cap\{b_1,\ldots, b_m\}$, \\
\>      \> provided $X$ and $Y$ are distinct logic variables such that $\st{L}(X)<\st{L}(Y)$, \\
\>      \> and $Y\ b_1\ \ldots\ b_m$ is $\Ll$ with respect to $\st{L}$. \\

\>{\it bnd}$(X, Y\ b_1\ \ldots\ b_m, [a_1, \ldots, a_n], l)\lra$ \\
\>\squad\dquad$\dg{\{\dg{Y}{\lambdadb(m, H\ w_1\ \ldots\ w_p\ v_1\ \ldots\ v_q)}\}}{H\ c_1\ \ldots\ c_k\ u_1\ \ldots\ u_q }$, \\
\>\quad \= where $H$ is a new logic variable and $\st{L} = \st{L}\cup\{\dg{H}{\st{L}(X)}\}$,\\
\>      \> $[c_1,\ldots,c_k] = bl\Uparrow X$, $[w_1,\ldots w_p] = [c_1,\ldots,c_k]\downarrow bl$,
           $[v_1,\ldots,v_q] = zl\downarrow bl$ and \\
\>      \> $[u_1,\ldots,u_q] = zl\downarrow [\env{a_1,0,l,nil}, \ldots, \env{a_n,0,l,nil}, \#l, \ldots, \#1]$ with \\
\>      \> $bl = [b1,\ldots,b_m]$ and $zl = [z_1,\ldots,z_q]$ as a permutation of \\
\>      \>     $\{\env{a_1,0,l,nil}, \ldots, \env{a_n,0,l,nil}, \#(l-1), \ldots, \#1\}\cap\{b_1,\ldots, b_m\}$, \\
\>      \> provided $X$ and $Y$ are distinct logic variables such that $\st{L}(Y)\leq\st{L}(X)$, \\
\>      \> and $Y\ b_1\ \ldots\ b_m$ is $\Ll$ with respect to
$\st{L}$.
\end{tabbing}
\vspace{-0.5cm}
\caption{Calculating variable bindings.}\label{fig:hopu_bnd}
\end{figure}

\begin{figure}
\begin{tabbing}
\quad\=\qquad\qquad\qquad\=\kill
\>{\it foldbnd}$(X, [], al, l, \dg{\theta}{sl})\lra\dg{\theta}{sl}$.\\
\>{\it foldbnd}$(X, [t1,\ldots,t_n], al, l, \dg{\theta}{sl})$\\
\>\>$\lra$
  {\it foldbnd}$(X, [\sigma(t_2),\ldots,\sigma(t_n)],
al, l, \dg{\sigma\circ\theta}{s::sl}),$ \\
\>\quad provided $n > 0$ and {\it bnd}$(X, t_1, al, l)\lras\dg{\sigma}{s}$.
\end{tabbing}
\caption{Iterating the variable binding calculation over an argument list}\label{fig:hopu_foldbnd}
\end{figure}

The pattern unification procedure terminates when none of the
transformation rules can be applied to the disagreement set that has
been produced. If this is because the disagreement set is empty, then
a most general unifier has been computed for the original
problem. On the other hand, if the disagreement set is not empty then
non-unifiability can be concluded in a context where all disagreement
pairs adhere to the higher-order pattern restriction. Such failures
are characterized concretely by the following situations:
\begin{enumerate}
\item there are rigid-rigid pairs left of the form
$\dg{r\ t_1\ \ldots\ t_n}{r'\ s_1\ \ldots\ s_m}$ where $r \neq r'$.
\item the attempt to apply a {\it bnd} rule encounters a tuple of the
  form
\begin{tabbing}
\dquad$(X, r\ t_1\ \ldots\ t_m, [a_1\ \ldots\ a_n], l)$
\end{tabbing}
where $r$ is a de Bruijn index or a constant with $\st{L}(r)>\st{L}(X)$
that does not occur in
$[\env{a_1,0,l,nil}, \ldots, \env{a_n,0,l,nil}, \#l, \ldots, \#1]$.
\item the attempt to apply a {\it bnd} rule encounters a tuple of the
  form
\begin{tabbing}
\dquad$(X, X\ b_1\ \ldots\ b_m, [a_1, \ldots, a_n], l)$.
\end{tabbing}
\end{enumerate}
In analogy with first-order unification, the first of these failures
corresponds to a clash of constants and the latter two constitute
failure because of an ``occurs-check.''

As we have already noted, our implementation of $\lp$ allow for a more
liberal syntax that could lead to disagreement pairs that do not
satisfy the higher-order pattern restriction. Given this, a non-empty
disagreement set may also be a signal of the fact that the unification
process should be suspended till further variable bindings have been
determined in some other way. The actual realization of the
unification procedure should therefore be on the lookout for errant
disagreement pairs and should defer the processing of these to a later
point; the structure of the abstract interpreter already accommodates
such a possibility.

The last issue to be mentioned in this section is the usage
of types in the pattern unification procedure.
Unlike Huet's algorithm, types are not needed during the binding phase
of unification, but have a relevance to the applicability
of rule (4) in term simplification. In particular, two constants with
the same name are viewed as being equal only when they have equal
types and this forces the terms in this situation to have the
same number of arguments.
Based on the type system of our language, the determination of the equality
of types is carried out by first-order unification,
which should interleave with the application of rule (4).
The details of the treatment of types relative to pattern
unification are discussed in Chapter~\ref{chp:types}.

\chapter{Machine-Level Term Representation}\label{chp:machineTermRep}
% Different reduction strategies
% Machine-level encoding of terms
The discussions in the previous chapters have gradually refined the
representation of $\lambda$-terms from a conceptual form into
one more suitable to be used as a basis of implementation.
Two issues still remain to be dealt with in a concrete
implementation. First, we need an actual procedure for converting
terms to head normal form. Second, we still have to discuss the
reflection of the suspension calculus into lower-level machine
structures. We take these issues up in this chapter.
In Section~\ref{sec:reduction_strategies} we discuss
different strategies for producing a head normal form for terms in the
suspension calculus, leading eventually to one that has been shown
empirically to have good time and space
characteristics. Section~\ref{sec:internal_encoding} then describes
the
low-level encoding of $\lambda$-terms used in our implementation
and it also discusses the pragmatic issues underlying our choices.

\section{Implementation of Head Normalization}\label{sec:reduction_strategies}
An efficient implementation of
the normalization of terms is clearly important to the
performance of an overall system realizing the $\lp$ language.
The suspension calculus
serves as a suitable basis for such an implementation by providing a
control over the substitution operation and, hence, a
flexibility in the ordering of the steps involved in reduction.
A high-level, non-deterministic description of the process of
reducing a term to one of its head normal forms has also been
identified in Chapter~\ref{chp:termRep}; this is a process in which
a head normal form is produced by repeatedly rewriting
a head redex. Once a head normal form has been produced, there is
still some flexibility with regard to how to treatment the arguments of the
term. For example,
consider the term
\begin{tabbing}
\dquad\dquad$(\lambdadb(n, \env{(c\ t_1\ldots\ t_m),ol,nl,e}))$,
\end{tabbing}
where $c$ is a constant and $t_1$, ..., $t_m$ are arbitrary
terms. The applications of rule $(r5)$ and $(r1)$ in
Figure~\ref{fig:susp_rules} results in the structure
\begin{tabbing}
\dquad\dquad$(\lambdadb(n, (c\ \env{t_1, ol, nl, e}\ \ldots\ \env{t_m, ol, nl, e})))$,
\end{tabbing}
that is a head normal form.
At this point, there are choices in what to do with the arguments,
whether to leave them as suspensions, or to transform them into de
Bruijn terms or perhaps even to reduce them too to normal
forms. Different reduction strategies can be characterized in terms of
the choices that they make at this stage.

One reduction strategy that can be considered is that which uses the
suspension calculus only as an implementation device, keeping explicit
representations only of terms in de Bruijn form. Within this strategy,
the old and new embedding levels and the environment in a suspension
would be reflected in the parameters of the reduction procedure but
not in terms.
Consequently, the substitutions remaining on the arguments of the head
normal form shown above would have to be carried out eagerly, possibly
combined with additional $\beta$-reductions applied to these terms.
In this strategy, the rewriting steps shown in
Figure~\ref{fig:susp_rules} and Figure~\ref{fig:susp_r9} would be
carried out implicitly and hence would not themselves give rise to
intermediate terms.
An alternative strategy would be one that dispenses with the recursive
structure of the first reduction procedure by actually explicitly
creating the righthand sides of each of the rewrite rules and by using
a stack to provide any additional control. Such a procedure would have
to be complemented by an explicit representation of suspensions and
hence it could also potentially leave the arguments of a head normal
form as suspensions.

A drawback with the second approach is that it requires new terms
to be explicitly created as the result of each rewriting step, even if the
terms only serve as intermediate results of the head normalization process.
For instance, consider the original term in the previous example.
As the result of the applications of the rule $(r5)$ in
Figure~\ref{fig:susp_rules}, this approach requires the explicit creation
of the structure
\begin{tabbing}
\dquad\dquad$(\lambdadb(n, (\env{c, ol, nl, e}\ \env{t_1, ol, nl, e}\ \ldots\ \env{t_m, ol, nl, e})))$,
\end{tabbing}
only to see the head $\env{c, ol, nl, e}$ being rewritten by the
immediately following step through an application of rule
$(r1)$. As another example, it is possible for the term $t$ in the
suspension  $\env{t, ol, nl, e}$ to be a $\beta$-redex, in which case
new suspensions will be created through the use of rules (r5) and (r6)
only to be discarded when the rule ($\beta'_s$) is applied.

The redundancy in the creation of the intermediate terms is
avoided by the first strategy.
However, the eager performance of the substitutions over the arguments of
head normal forms leads to a traversal of these arguments, which may
turn out to be redundant in a context where term comparison can be
interleaved with reduction steps; just exposing the heads may suffice
to show non-unifiability. Performing just substitutions also misses
out on the sharing of walks between different reduction steps.
For example, consider the pair of terms
\begin{tabbing}
\dquad\dquad$\dg{(c\ \env{(\lambdadb t_1)\ t_2, 1, 0, (c,0)::nil})}{(c\ t_3)}$,
\end{tabbing}
where $c$ is a constant and $t_1$, $t_2$ and $t_3$ are arbitrary terms.
The application of the term simplification rule (4) in
Figure~\ref{fig:hopu_simplification} results a new pair of terms
formed by the arguments of the original terms which need to be
head normalized immediately. If the transformation of
$\env{(\lambdadb t_1)\ t_2, 0, 1, (c,0)::nil}$ into a de Bruijn term
is carried out eagerly at the end of the previous invocation of head
normalization, as being
required by the eager substitution strategy, a separate traversal has
to be carried out over the structure of $t_1$ when the redex
$(\lambdadb t_1)\ t_2$ is rewritten. Of course, we could also reduce
such redexes when calculating out suspension terms. However, this
corresponds to always producing $\beta$-normal forms, something that
is costly especially in a setting where failure can be registered by
looking at only parts of terms.

%GN Xiaochu to check and correct/add:
% (1) Why is the comparison only with one of the two ``extreme''
% approaches? Do you not have data for both comparisons?
% (2) The earlier writing said 32% and 81% reduction as if these were
% magic numbers that appeared consistently in all examples. I have
% changed this to ``average.'' If this is incorrect---e.g. if this is
% minimum or maximum improvement instead---Xiaochu should correct that
% too.
The above discussion of the characteristics of the two strategies that
we have considered suggests an intermediate version that combines the
benefits of both: the normalization procedure can use suspensions
implicitly, embedding their components in its arguments rather than in
explicitly constructed suspensions but, in the end leaving the
arguments in the head normal forms it finds in the form of
suspensions. Studies have been conducted in~\cite{LNX03} using the
{\it Teyjus Version 1} implementation of $\lp$ to understand the
performance differences between the combination reduction strategy
just described and the first strategy considered which evaluates
substitutions eagerly on the arguments of head normal forms.  These
studies indicate a significant performance benefit to the combination
strategy: specifically, an average of $32\%$ reduction in execution
time and $81\%$ reduction in memory usage was observed over a set of
practical $\Ll$-style programs with this strategy. We have accordingly
chosen to use this combination strategy in the new implementation of
$\lp$. In the rest of this section, we elaborate on the structure of the
reduction procedure used in the implementation. To keep this
description brief and understandable, we present this structure
through SML style pseudo-code.

\begin{figure}[t]
\begin{verbatim}
datatype rawterm = const of string
  | lv of string
  | db of int
  | ptr of (rawterm ref)
  | lam of (rawterm ref)
  | app of (rawterm ref) * (rawterm ref)
  | susp of (rawterm ref)*int*int*(eitem list)
and eitem = dum of int
  | bndg of (rawterm ref) * int

type env = (eitem list)
type term = (rawterm ref)
\end{verbatim}
\caption{A SML encoding of suspension terms in head normalization.}\label{fig:hnorm_datatype}
\end{figure}

The first task in presenting the procedure is to provide datatype
declarations for the terms in the suspension calculus. These
declarations are contained by Figure~\ref{fig:hnorm_datatype}.
As in usual implementations, a graph-based representation is assumed for
terms. SML expressions of types {\it rawterm} and {\it term} can be
viewed as directed graph, which are assumed to be acyclic
during the reduction process. It in fact can be observed
from the head normalization procedure discussed subsequently
that if the input to the procedure has this property, it is
preserved in the normalization process.

Terms of the suspension calculus are realized as
references to appropriate SML expressions of the type {\it
rawterm}. The environments and environment items in this
calculus are presented as expressions of types {\it env}
and {\it eitem}. An expression of form {\it dum(l)} is used
to encode environment item $@l$, whereas {\it bndg(t, l)}
corresponds to $(t, l)$.
Value constructors {\it fv} and {\it db} are used
to encode logic variables and de Bruijn indexes respectively.
The encoding of abstractions, applications and suspensions
is achieved by supplying constructors {\it lam}, {\it app}
and {\it susp} to arguments of proper types.
The constructor {\it ptr} serves to aid the sharing of reduction results
which means that at certain points in our reduction process, we want to
identify (the representations of) terms in a way that makes the
subsequent rewriting of one of them correspond to the rewriting
of the others. Such an identification is usually realized by
representing both expressions as pointers to a common location
whose contents can be changed to effect shared rewritings. In SML
it is possible to update only references and so the common
location itself must be a pointer. The constructor {\it ptr} is
used to encode indirections of this kind when they are needed.
Complementing this encoding, we use the following functions to,
respectively, dereference a term and assign a new value to a given term.
\begin{verbatim}
fun deref(term as ref(ptr(t))) = deref(t)
  | deref(term) = term
fun assign(t1,ref(ptr(t))) =  assign(t1,t)
  | assign(t1,t2) = t1 := ptr(t2)
\end{verbatim}
In addition, we use the following function to help with looking for
a value in an environment during the reduction process.
\begin{verbatim}
fun nth(x::l,1) = x
  | nth(x::l,n) = nth(l,n-1)
\end{verbatim}

Based on the given SML encoding of the terms suspension calculus,
the main work of the head normalization procedure can be defined as that in
Figure~\ref{fig:hnorm_procedure}. The first four arguments
are used to represent a (possibly trivial) suspension implicitly.
The fifth argument of boolean type is used to control that the rewriting of
head redexes is performed in a left-most and outer-most order: it is set to
true when the term under reduction has been found as the function part of
an application at the outside, and the normalization process stops
rewriting the redexes contained by it once an abstraction structure
is revealed, so that the outer redex can be rewritten first.
(There is in fact one exception to the outer-most order
of rewriting in the presence of nested suspensions, which will
be explained shortly.)
The application of the $\beta_s$ and $\beta_s'$
rules in Figure~\ref{fig:susp_rules} is carried out in the
application case of {\it hnorm}. Further, when the head of a head
normal form is exposed and the head normal form still has an application
structure, the implicitly recorded non-trivial suspensions
over the arguments are explicitly reflected into the term structure.
The value returned by {\it hnorm} is a quadruple
that can be interpreted as an implicit suspension.
In reality, this suspension is a trivial one in all cases
other than when the call to {\it hnorm} has its fifth
argument being set to true, and the term component in
the resulting suspension is an abstraction.

\begin{figure}\footnotesize
\begin{verbatim}
fun hnorm(term as ref(db(i)),0,0,[],_) = (term,0,0,[])
 |  hnorm(term as ref(db(i)),ol,nl,e,whnf) =
      if (i > ol) then (ref(db(i+ol-nl)),0,0,nil)
      else (fn dum(l)   =>(ref(db(nl-l)),0,0,nil)
             | bndg(t,l)=>(fn ref(susp(t2,o,n,e)) => hnorm(t2,o,n+nl-l,e,whnf)
                            | t => hnorm(t,0,nl-l,[],w)) (deref(t))) (nth(env,i))
 |  hnorm(term as ref(lam(t)),ol,nl,e,true) = (term,ol,nl,env)
 |  hnorm(term as ref(lam(t)),ol,nl,e,false) =
      let val (t',ol',nl',e')=if (ol=0) andalso (nl=0) then hnorm(t,0,0,[],false)
                              else hnorm(t,ol+1,nl+1,dum(nl)::e,false)
      in (ref(lam(t')), ol', nl', e') end
 |  hnorm(term as ref(app(t1,t2)),ol,nl,e,whnf) =
      let val (f,fol,fnl,fe) = hnorm(t1,ol,nl,e,true)
      in (fn ref(lam(t))=>
           let val t2' = if ((ol=0) andalso (nl=0)) then t2
                         else ref(susp(t2,ol,nl,env))
               val (t',ol',nl',e') = hnorm(t,fol+1,fnl,bndg(t2',fnl)::fe,whnf)
           in ((if (ol<>0) orelse (nl<>0) orelse (ol'<>0) orelse (nl'<>0) then ()
                else assign(term, t'))); s end
          | t => if ((ol = 0) andalso (nl = 0))
                 then (assign(term, ref(app(f, f2))); (term,0,0,nil))
                 else (ref(app(f,ref(susp(t2,ol,nl,e)))),0,0,nil)) (deref f) end
 |  hnorm(term as ref(susp(t,ol,nl,e)),ol',nl',e',whnf) =
      let val s = mk_explicit(hnorm(t,ol,nl,env,whnf),ol',nl',e')
      in  (assign(term, s);
           if (ol'=0) andalso (nl'=0) then s
           else hnorm(term,ol',nl',env'))  end
 |  hnorm(ref(ptr(t)),ol,nl,env,whnf) = hnorm(deref(t),ol,nl,env,whnf)
 |  hnorm(term,_,_,_,_) = (term,0,0,nil)
\end{verbatim}
\vspace{-0.5cm}
\caption{An environment based head normalization procedure with lazy substitutions.}\label{fig:hnorm_procedure}
\end{figure}

When the call to {\it hnorm} on the inner suspension has its
fifth argument being set to true, it is possible that the
returned value of the call is a non-trivial suspension with
its term component being an abstraction. This suspension should
be made explicit, and further, it should be transformed into
an abstraction using the reading rule (r6) in
Figure~\ref{fig:susp_rules} before computation can proceed.
The described behavior is carried out by the suspension case
of the procedure {\it hnorm}. The effect of making the
suspension returned by the rewriting of the inner suspension
explicit after applying rule (r6) is accomplished by an invocation
of the auxiliary function
{\it mk\_explicit} defined as the following.
\begin{verbatim}
fun mk_explicit(t, 0, 0, nil) = t
  | mk_explicit(ref(lam(t)), ol, nl, e) =
      ref(lam(ref(susp(t, ol+1, nl+1, dum(nl)::e))))
\end{verbatim}

Any given term t may be transformed into a head normal form by invoking
the interfacing procedure {\it head$\_$norm} that is defined as follows:
\begin{verbatim}
fun head_norm(t) = hnorm(t, 0, 0, nil, false).
\end{verbatim}

The correctness of {\it head\_norm} is the content of the following theorem,
whose proof can be found in~\cite{XQ04Master}.
\begin{theorem} Let {\it t'} be a reference to
the representation of a suspension term $t$ that translates via
the reading rules (r1)-(r8) in Figure~\ref{fig:susp_rules} and
(r9) in Figure~\ref{fig:susp_r9} to a de Bruijn term
with a head normal form.
Then {\it head\_norm(t')} terminates and, when it does, $t'$ is a
reference to the representation of a head normal form
of the original term in the suspension calculus.
\end{theorem}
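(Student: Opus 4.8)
The plan is to prove termination and correctness together by showing that a run of \emph{hnorm} traces out a head reduction sequence, in the sense of the suspension calculus, of the (implicit) suspension term it is given, and then to invoke the proposition that a suspension term has a head normal form exactly when every one of its head reduction sequences terminates. The hypothesis says $t$ translates under the reading rules to a de Bruijn term that has a head normal form; combining this with the correctness of the suspension calculus established in~\cite{nadathur99finegrained} (a de Bruijn term $\beta$-reduces to $s$ iff the suspension rules transform it to $s$), it follows that $t$ itself has a head normal form in the suspension calculus. Hence every head reduction sequence of $t$ is finite, and the whole argument reduces to checking that each step \emph{hnorm} takes is a legitimate rewriting of a head redex, possibly preceded by a finite, terminating block of reading-rule propagations, so that the computation cannot outrun the bounded head reduction sequence it follows.

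First I would fix the representation invariant: a call \emph{hnorm}$(t, ol, nl, e, \mathit{whnf})$ acts on the term denoted by the implicit suspension $\env{t, ol, nl, e}$ (trivial when $ol = nl = 0$ and $e = \mathit{nil}$), and I would verify that well-formedness of this suspension is preserved across every clause. This amounts to building a dictionary between clauses and rules: the \emph{db} clause realizes the index reading rules (r2)--(r4), the \emph{lam} clause realizes (r6), the application clause performs $\beta_s$ or $\beta'_s$ once its function part has reduced to an abstraction, and \emph{mk\_explicit} followed by the outer re-invocation in the \emph{susp} clause realizes (r6) together with an (r7)-style recomposition of environments. With this dictionary, each clause is seen to emit a (possibly empty) prefix of reading-rule steps followed by at most one $\beta$-contraction, all applied to the head redex of the current term as defined for the suspension calculus.

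The induction is then on the length of a head reduction sequence of $\env{t, ol, nl, e}$, which is finite by the discussion above; since the reading-rule subsystem (r1)--(r9) is itself terminating, the only potential source of nontermination is $\beta$-contraction, and a lexicographic measure (length of the remaining head reduction sequence, then a reading-rule measure) makes this precise. The base cases are the clauses that return without recursion---a constant, a free de Bruijn index with $i > ol$, a logic variable via (r9), and the abstraction clause when $\mathit{whnf}$ is \emph{true}---each of which already exposes a head and is in head normal form as an implicit suspension, using the extended definition that permits arguments to remain suspensions. For the recursive clauses I would argue that the recursive calls are on terms whose head reduction sequences are strictly shorter: descending through \emph{lam} with $\mathit{whnf} = \emph{false}$, or into the function part $t_1$ of an application, preserves the head-redex path, and the crucial application case, where the function reduces to an abstraction, contracts the outermost head redex via $\beta_s$ or $\beta'_s$ and so removes a step. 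The $\mathit{whnf}$ flag is exactly what enforces this leftmost--outermost discipline, halting the inner reduction at an exposed abstraction so that the outer redex is contracted first.

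The main obstacle I expect is the \emph{susp} clause, the handling of nested suspensions and the noted exception to outermost order. Here I would use the definition of head redexes in the suspension setting: when the body of $\env{t, ol, nl, e}$ is itself a suspension, that inner suspension's head redexes are also head redexes of the whole term, so reducing it first is still a step of a legitimate head reduction sequence, and \emph{mk\_explicit} merely reattaches the outer substitution soundly. A secondary obstacle is the correctness of the destructive sharing performed through \emph{ptr}, \emph{assign} and \emph{deref}: using the already-recorded fact that acyclicity is preserved, I would show each \emph{assign} replaces a subterm by one inter-derivable with it under the suspension rules, so the denoted de Bruijn term is unchanged and shared occurrences are rewritten consistently. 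Assembling these pieces, \emph{head\_norm}$(t') = $ \emph{hnorm}$(t',0,0,\mathit{nil},\emph{false})$ on the trivial suspension representing the original term terminates and leaves $t'$ pointing at an implicit suspension that, made explicit, is a head normal form; that it is \emph{a} head normal form of the original term follows because every rewriting performed was a valid suspension-calculus step, so the denotation is preserved throughout.
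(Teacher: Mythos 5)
Your proposal is correct and takes essentially the approach the paper intends for this theorem: the paper itself relegates the detailed proof to \cite{XQ04Master}, but the scaffolding it erects in Section~\ref{sec:hnorm}---the suspension-calculus notion of head redex, head reduction sequences that permit simultaneous rewriting of finite sets of non-overlapping subterms (exactly what licenses your treatment of destructive sharing via \emph{ptr}/\emph{assign}), and the proposition that a suspension term has a head normal form iff every head reduction sequence terminates---exists precisely so that \emph{hnorm} can be read as tracing such a sequence, which is what you do, including the transfer of the hypothesis into the suspension calculus via the simulation results of \cite{nadathur99finegrained} and the lexicographic measure separating reading-rule steps from $\beta$-contractions. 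The one point to tighten is your claim that the abstraction clause with fifth argument \emph{true} returns something ``in head normal form as an implicit suspension'': in that mode the returned suspension's term component is an abstraction still awaiting the outer contraction, so the invariant must be stated mode-by-mode (head normal form for \emph{whnf} false, head normal form \emph{or} abstraction-bodied suspension for \emph{whnf} true), a refinement your stated plan of verifying the representation invariant clause by clause would surface.
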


\section{Representation of Terms}\label{sec:internal_encoding}
We discuss now the scheme for encoding terms that will become the
basis for their manipulation in the abstract machine for $\lp$.

The most natural encoding for a term is one that uses a memory unit
with a tag  indicating the syntactic category of the term with additional
parts for any other components.
% de Bruijn; const, logic variables (time stamps); suspension; abstraction
These additional components vary according to the specific kinds of term.
For a de Bruijn index, all that is needed is a positive number for the index
itself. As discussed in Chapter~\ref{chp:interpreter}, a labeling function
associating constants and logic variables with their
universe levels is essential to the unification operation.
This information is then succinctly maintained by recording numeric tags
of non-negative integer values along with constants and logic
variables.
In addition to such a label, a reference should also be maintained
with a constant to its descriptor.
An (un-instantiated) logic variable should serve as a place holder occupying
enough space so that the instantiation can be realized by destructively
changing the cell to other sort of terms.
The content of this cell is not important except for its tag and label.
A suspension term $\env{t, ol, nl, e}$ requires the maintenance of
its two embedding levels $ol$ and $nl$, a reference to its term component $t$
and a reference to its environment $e$ which can be represented as a list.
An abstraction cell contains a positive number corresponding to the binder
length and a reference to its body. In this way, nested abstraction
structures can be denoted by a single term.  An alternative in this encoding
is to require each abstraction to be represented separately. However,
considering the term  decomposition requests issued by
the pattern unification algorithm described in
Section~\ref{sec:pattern_unif}, it is
apparent a faster access to the subcomponents of nested abstractions can be
supported by the encoding we have chosen.

% application
The encoding of application terms requires more careful consideration.
Applications in a higher-order setting are best thought of in a
curried fashion, thus making their components (references to) their
function and argument parts,
respectively. However, a curried rendition of applications leads to a
high cost in the most
common form of access to terms needed by unification:
the access to the head of a head normal form with $n$ arguments
requires working through $n$ applications starting from the outermost
one. In addition, it can also be observed that the pattern
unification algorithm discussed in Section~\ref{sec:pattern_unif}
is best supported if the arguments of a flexible term in head normal
form are available
as a vector. The ability to immediately access the heads and argument
vector of an application is also useful when we consider the
compilation of unification. If a curried representation is used,
runtime effort has to be paid to traverse nested applications for the
purpose of exposing their structure in this form before the rest of
the computation can proceed.

A concrete encoding of an application that is reminiscent of their
treatment in conventional logic language implementations is to
use a structure containing three components: a function part,
(a reference to) a vector of arguments and an arity corresponding to
the size
of the vector. Such a representation has especially nice properties in
our setting when
the program at hand is a first-order one. In this case the head normal
form of the term
is already available at compilation time.
With the described representation, the head and the
argument vector information can be simply obtained from the top-level term
structure, which also lets it be determined that reduction is not
necessary. These benefits appear to be important since
efficiency in realizing first-order style computations  is of special
importance  to the overall performance in practical $\lp$
applications~\cite{MP92}. Our low-level representation accordingly
adopts such an encoding for applications.
In the first-order context, term structures can be modified only via
bindings for (first-order) logic variables which cannot appear in the
function position in an application. Thus, applications themselves
have an unchanging structure.  Taking advantage of
this fact, the first-order representation of an application usually
folds the function part and the argument vector into one contiguous
sequence of terms.
This optimization can, however, not be used in our setting where the
heads of applications can also sometimes change. For this reason,
references are maintained in an application referring to the function
part of it and its argument vector respectively.

% reference
The only remaining category of terms provided for in our
representation is that of references. References are necessitated by
the fact that we use a graph-based realization of reduction to foster
sharing, as should be clear from the SML rendition  of the procedure
that we have provided in the previous section.
Thus, the destructive update of the term $(\lambdadb{t_1})\ t_2$ can be
effected by changing the application cell representing this term into
a reference to the representation of the term
$\env{t_1,1,0,(t_2,0)::nil}$. Notice that a reference has the smallest
amount of data amongst all the terms---its encoding needs just a
category tag and a pointer to another term---and so it can be used
conveniently in such destructive updates.
Another use for a reference is in recording the binding of a logic
variable. For example, the binding of
a logic variable $X$ to term $t$ can be registered by changing the
cell for $X$ into a reference to the the representation of $t$.

%Types
As we have noted in Section~\ref{sec:pattern_unif}, types
may sometimes be needed at run-time for the purpose of determining the
identity of constants. Consequently, we need to have an explicit
encoding for them as well in the representation of constants. In
particular, a constant cell gets an extra component in the form of a
reference to its type. As for the
encoding of types themselves, since the computation on them
is solely first-order unification, it is sufficient to
adopt the conventional first-order style encoding.
For the purpose of minimizing the run-time
cost on the maintenance and manipulation of types, this
approach can be further refined by separating a type into
a fixed part that is available during compilation and
a dynamic part that should be decided at run time. The former
information can be combined with the descriptor of a given
constant, and the association of a constant cell with its type is
then reduced to only the dynamic part.
A detailed discussion on this topic appears in
Section~\ref{sec:compile_type} after we have obtained a
concrete understanding of the run-time type processing
scheme in our implementation.

\chapter{An Abstract Machine and Processing Model}\label{chp:compile}
% Lead to the abstract machine and compilation based implementation
We are interested in this thesis in a compilation-based model for
realizing $\lp$. One possible target for a compiler that emerges from
our considerations could be the instruction set for standard
hardware. This is, in fact, the usual choice for conventional
languages. However, the distance between typical machine architectures
and the computational model for $\lp$ is too significant to bridge in
one step. Moreover, these differences make it difficult to visualize
and to state precisely the optimizations that can be performed on
particular instruction sequences that a compiler might generate. For
this reason, we introduce an intermediate level ``abstract machine''
for $\lp$. We describe the structure of this abstract machine in this
chapter, also interleaving with this description a presentation of the
process of compiling $\lambda$Prolog programs into instructions for
this machine. Our abstract machine will inherit its basic structure
from the developments related to compiling Prolog programs that have
resulted in the abstract machine designed by Warren for that language
\cite{Warren83WAM}. We will also make use of a previous machine
designed by Nadathur and colleagues for $\lp$ \cite{KNW94cl,
  N03treatment, NJK95lp} that underlies the {\em Version 1} of the {\em Teyjus}
implementation of this language \cite{NM99cade}. However, unlike this
earlier implementation that tackled full higher-order unification
using Huet's procedure, we will exploit the possibility of using the
higher-order pattern unification algorithm described in
Chapter~\ref{chp:interpreter}.  This
choice simplifies the structure of the abstract machine considerably,
leads to optimizations in the treatment of types as we discuss in the
next chapter and also has the potential for impacting the overall
runtime performance on $\lp$ programs.

This chapter is organized as follows. We introduce the basic
processing model underlying the new abstract machine in
Section~\ref{sec:basic_model}; as mentioned already, this model is
based on the Warren Abstract Machine (WAM) for Prolog
\cite{Warren83WAM}, with which we shall assume that the reader to have
some familiarity.
Section~\ref{sec:ho_control} and Section~\ref{sec:ho_unif}
then discuss the details of the enhancements to this model that are
needed for handling the higher-order features of $\lp$.
Specifically, Section~\ref{sec:ho_control} addresses the treatment
of generic and augment goals, and Section~\ref{sec:ho_unif}
discusses how the higher-order pattern unification is embedded
into the overall processing.
A complete example of a compiled $\lp$ program is presented in
Section~\ref{sec:inst_exp}.
Section~\ref{sec:misc} sketches the treatments to flexible and disjunctive
goals. Significant aspects of the treatment of generic and augment
goals and almost all of the treatment of flexible and disjunctive
goals are inherited from the earlier abstract machine for $\lp$ but
their presentation is, nevertheless, needed here for the sake of
completeness.

Our focus in this chapter will be on
the {\em conceptual structure} of the new abstract machine and the
processing model embodied by it.
This design has been realized in an actual implementation of
$\lp$---{\em Version 2} of the {\it Teyjus} system---a presentation
  of which appears in Chapter~\ref{chp:system}.

\section{The Processing Model}\label{sec:basic_model}
% WAM and basic compilation scheme
The WAM provides a basic framework for compiling the aspects of
control and unification that are part of the computation in {\em
  Prolog}-like languages. These aspects appear also in $\lp$ and so we
can use this structure in our abstract machine as well.
In this context, we note that the compilation of control refers to the
translation of the dynamic analysis of the structure of complex goals
carried out by the abstract interpreter described in
Chapter~\ref{chp:interpreter} into low-level abstract machine
instructions. The compilation of unification, on the other hand,
corresponds to using knowledge of one half of the disagreement
pairs to reduce the amount of work that needs to be done at
runtime. Specifically, this translates into generating instructions for
analyzing the structure of terms that arrive in argument registers
when attempting to match with the head of a clause and for
correspondingly setting up the argument registers when calling
predicates.

% Higher-order feature enhancement
The basic WAM model is enhanced in our implementation in order to
support the richer set of features present in $\lp$.
First, our compilation treatment of control computations should include that
of generic and augment goals in addition to the set of goal structures
contained in the Horn clauses that underlie {\em Prolog}.
Second, in comparison to first-order setting, the unification
operation of interest to us deals with a richer term structure and involves a
more complicated notion of equality. To accommodate this, we make the
following additions. To treat the richer equality notion, we utilize
invocations to a head normalization procedure at relevant points in
the computation. Then we partition the unification computation into
first-order and higher-order parts, so that the former can be handled by
(compiled) WAM style instructions, and the latter by an auxiliary
interpretive procedure that is based on the higher-order pattern
unification algorithm. Notice that this partitioning is something that
must happen dynamically because whether the unification problem is a
first-order or a higher-order one depends also on a term whose
structure is known only at runtime. To deal with this, we build
appropriate machinery into the abstract machine instruction
set that is responsible for recognizing and delaying the higher-order
parts of unification, and for invoking the interpretive phase of
unification at chosen computation points. Further, we provide
devices for delaying the unification problems that are recognized to
be beyond the $\Ll$ subset during the interpretive phase and for
carrying them across goal invocations, to be re-examined when variable
bindings may have altered their status.

% Sketch of the overall model
The details of the additional support that is summarized above are
presented in the next two sections.
The rest of the discussion in this section provides a sketch of the memory
structure of our abstract machine and the underlying processing model
that will be needed in order to explain these details.

% data area
%the (basic components of) data area:
%registers, code area, heap, stack, pdl, trail
The basic data areas in our abstract machine consist of a code area,
a heap, a stack, a collection
of registers, a push down list (PDL) and a trail.
The first four categories of data areas are familiar from conventional
machine architectures although some of them have different actual
purposes in our setting.
The code area contains the compiled forms of clauses that constitute
the definitions of predicates.
The heap is a global memory space for holding data
that is accessible at any point of computation; specifically, this is
where complex terms that survive after the successful completion of a
goal must be placed. One of the uses of the stack, that is similar to
the use made of it in conventional languages, is to record environment
frames for calls to particular clauses that constitute the definition
of a predicate; such frames will store register images and other
relevant data that need to be maintained between calls to goals that
are part of the body of the clause in question.
The stack is also used to store information for handling
nondeterminism, a feature that is peculiar
to logic programming languages.
In particular, when alternatives are available during clause selection,
the contents of relevant registers should be saved,
so that the execution context can be recovered when it is necessary to
attempt a different clause choice, \ie, when backtracking
occurs. Such information is maintained
in structures called  choice points, which are interleaved with
environment frames on the stack.
In our abstract machine, the
stack is also used to maintain information that is needed to support
augment goals. We defer discussion of this usage till the next
section.
Registers are of two kinds: those that store data and those that
are needed for execution control.
Examples of the former include a set of data registers, $A_1$, ..., $A_n$ that
are used for passing arguments across calls to clause definitions, and
the $S$ register that points to the next argument of a complex
first-order term (which is an application with a rigid head).
The set of registers relevant to execution control consist of the program
pointer, {\it P}, the continuation pointer, {\it CP}, the top of the
heap register,
{\it H}, the most recent environment frame register,
{\it E}, the most recent choice point register, {\it B} and the top of
the trail register, {\it TR}.
Both sets of registers will be enriched to support higher-order
features, as we discuss in the next two sections.
The push down list, PDL, is used within the interpretive
unification process for recording the subproblems that are created by
the process of term simplification discussed in
Section~\ref{chp:interpreter}.
The trail area is also used to
assist the branching behavior, which records images of the fragments
of the heap and stack that need to be recovered upon backtracking.

\begin{figure}\footnotesize
\begin{tabbing}
\quad\= c1\ \ : \=\{\ \=Set up a choice point on the top of stack and record that the next candidate clause \ \ \ \=\}\kill
\>{\it /*  copy a a. */}                                                                                     \\
\> {\it C1}:  \>\{  \>Set up a choice point on the top of stack and record that the next candidate clause       \\
\>            \>    \>is available at {\it C2}.                                                            \>\} \\
\> {\it L1}:  \>\{  \>Unify arguments of the incoming goal with those of the clause head.                  \>\} \\
\>            \>\{  \>Return by continuing from the continuation point.                                    \>\} \\
\>{\it /*  copy (app T1 T2) (app T3 T4)} $\pif$ {\it copy T1 T3, copy T2 T4. */}                             \\
\> {\it C2}:  \>\{ \>Recover relevant registers from the information in the latest choice point on the stack,      \\
\>            \>   \>update the choice point and record that the next candidate clasue is available at {\it C3}. \>\} \\
\> {\it L2}:  \>\{ \>Set up an environment frame on the top of the stack.                                        \>\} \\
\>            \>\{ \>Unify arguments of the incoming goal with those of the clause head.                         \>\} \\
%\>            \>   \>({\it copy (app T1 T2) (app T3 T4)}).                                                       \>\} \\
\>            \>\{ \>Set up arguments for {\it copy T1 T3}.                                                      \>\} \\
\>            \>\{ \>Shrink the environment frame, update the continuation point to the next instruction              \\
\>            \>   \>and call {\it copy}.                                                                        \>\} \\
\>            \>\{ \>Set up arguments for {\it copy T2 T4}.                                                      \>\} \\
\>            \>\{ \>Remove the latest environment frame from the stack.                                            \>\} \\
\>            \>\{ \>Call {\it copy}.                                                                            \>\} \\
\> {\it /* copy (abs T1) (abs T2)} $\pif$ $Pi\ c\plam \ (copy\ c\ c\ \pimp \ copy\ (T1\ c)\ (T2\ c))$. {\it */} \\
\> {\it C3}:  \>\{ \>Recover relevant registers from the information in the latest choice point on the stack,        \\
\>            \>  \>and remove the choice point.                                                             \>\} \\
\> {\it L3}:  \>\{ \>Set up an environment frame on the top of the stack.                                    \>\} \\
\>            \>\{ \>Unify arguments of the incoming goal with those of the clause head.                     \>\} \\
\> {\it H1}:  \>\{ \>Carry out control actions for entering a generic goal and then an augment goal.         \>\} \\
\>            \>\{ \>Set up arguments for {\it copy (T1 c) (T2 c)}.                                          \>\} \\
\>            \>\{ \>Shrink the environment frame, update the continuation point to the next instruction          \\
\>            \>   \>and call {\it copy}.                                                                    \>\} \\
\> {\it H2}:  \>\{ \>Carry out control actions for leaving an augment goal and then a generic goal.          \>\} \\
\>            \>\{ \>Remove the latest environment frame from the stack.                                        \>\} \\
\>            \>\{ \>Return by continuing from the continuation point.                                       \>\}
\end{tabbing}
\caption{Compiled computations underlying the program {\it
copy}.}\label{fig:copy_proc_model}
\end{figure}

Computations occur within our abstract machine from executing a sequence
of instructions that are generated from compiling goals, which
correspond to the user query or the bodies of clauses, or from
compiling the selection of a clause for a predicate and the subsequent
unification with the head of the clause.
The compilation of a goal
is organized as follows. First, instructions are generated to realize
the processing of the logical symbols that appear in a complex
goal. Eventually, an atomic goal is reached. At this point,
instructions are produced to set up the arguments of the
goal in the argument registers; if these arguments are complex
terms or variables, they will reside in either the heap or in the
environment frame and the relevant registers will contain references
to these structures. The last instruction for the atomic goal will be
a call to the code for the predicate in question.
Apart from transferring control to the (next) relevant clause for a
predicate, the code for clause selection has the responsibility of
setting up a choice point in the stack to represent the remaining
alternatives.
The first action that the code for unification with the head of a
selected clause must do is set up an environment frame if one is
needed. The remaining instructions are responsible for carrying out
the needed unification between the arguments appearing in the clause
head and the ones passed in the argument registers from the invocation
of the atomic goal. If this unification is successful, computation
passes to the instructions arising from the compilation of the goal
constituting the clause body, whose treatment we have already
described.
If this goal is solved successfully, then computation must return to
the caller and the last instruction for the clause body will have the
effect of realizing this. Notice that the environment frame that was
created for this clause can be released at this point provided it is
not needed for backtracking, in which case it will be protected by a
choice point that appears above it in the stack.
Of course, failure can occur in the course of unification with the
head of a clause. This triggers a backtracking procedure whose first
task is to carry out a resetting of the heap and stack state to what
it was at the current most recent choice point. The information for
such a resetting is stored in the trail and, hence, this process
is referred to as the ``unwinding of the trail.''  Once this is done,
the relevant registers are restored from the information available
from the most recent choice point and computation proceeds
to the next clause definition (also recorded with the choice point),
after updating or discarding the choice point itself depending on
whether or not further alternatives are available.

The control computations are optimized in a manner similar to that
in the WAM within our abstract machine as well.
First, upon making a call, the environment frame of the caller is dynamically
shrunk by discarding permanent variables whose binding information are no
longer needed for solving the goals in the clause body remained to be
processed; this process is referred to commonly as ``environment trimming.''
Second, when a clause body is constructed from a sequence of conjunctions and
the last conjunct is atomic, last call optimization~\cite{Warren80} is
performed.
Essentially, the caller's environment frame is deallocated from the stack
before computation actually proceeds to the callee, and the call is carried
out after setting the continuation point register to the continuation
point passed to the caller, so that the callee can directly
returns to its grand parent in the call graph. This optimization
subsumes the traditional tail recursion optimization in the logic
programming setting.

\begin{figure}\footnotesize
\begin{tabbing}
\quad\= {\it copy}\ : \=\{\ \=\dquad\=variable:\dquad\dquad\= continue with the instruction at xxxxxxxxxxxxxxxxxxxxxxxx \=\}\kill
\>      {\it copy}\ : \>\{  \>Switch on the head of the (head normal form of) the first actual argument of {\it copy}: \\
\>                    \>    \>      \>variable:        \> continue with the instruction at {\it C1}.                   \\
\>                    \>    \>      \>de Bruijn index: \> continue with the instruction at {\it C1}.                   \\
\>                    \>    \>      \>constant:        \> continue with the instruction at {\it S}.       \>\}         \\
\>      {\it S}\ :    \>\{  \>Switch on the given constant:                                                            \\
\>                    \>    \>      \>{\it a}\ :       \> continue with the instruction at {\it L1}.                   \\
\>                    \>    \>      \>{\it app}\ :     \> continue with the instruction at {\it L2}.                   \\
\>                    \>    \>      \>{\it abs}\ :     \> continue with the instruction at {\it L3}.      \>\}         \\
\>      {\it C1}\ :   \>...
\end{tabbing}
\caption{Indexing on {\it copy}.}\label{fig:copy_index}
\end{figure}

% example
We illustrate the compilation model and the associated processing
scheme that we have described relative to the simple $\lp$ program
appearing in Figure~\ref{fig:copy} that defines the {\it copy}
predicate. A high-level pseudo code description of the compiled
program in our implementation is contained in
Figure~\ref{fig:copy_proc_model}.

% indexing; add indexing to the example
A final aspect to be mentioned with regard to the compilation model is
the optimization corresponding to the detection of determinism.
The runtime treatment of nondeterminism involves the manipulation of choice
points that is known to be costly and can often be eliminated by
utilizing the structure of actual arguments of atomic goals to prune choices
early during execution. For this purpose, a special set of instructions are
included that allow clause choices to be indexed by the head arguments.
Taking the {\it copy} example, instructions in Figure~\ref{fig:copy_index} can
be added to those in Figure~\ref{fig:copy_proc_model} for the purpose of
indexing.

\section{Compiling the New Search Primitives in $\lambda$Prolog}\label{sec:ho_control}
% Control
We now consider the extensions to the basic processing model to deal
with generic and augment goals. Our discussion only sketches these
extensions to the extent needed for a complete description of our
abstract machine. A more thorough treatment may be found in
~\cite{NJK95lp}.

% Generic goal and explicit existential goal
% New register: UC
% New ins: increase; decrease; set_univ_tag; tag_variable
% Change: term creation build UC information; properly maintain in choice
%point and environment frame
% Pseudo code
As described in the previous chapters, the presence of generic goals
requires a more careful treatment of unification. More specifically,
to deal with the scoping effect of such goals on names, universe
levels are associated with constants and logic variables and are
examined and adjusted by the unification process. The determination of
the appropriate universe level in our abstract machine is based on a
global universe counter, which starts from $0$ on the top-level query,
and is increased or  decreased upon entering or leaving each generic
goal. This global universe counter is maintained in a new
register called {\it UC}. This register is incremented and decremented
by two new instructions, {\it incr\_universe} and {\it decr\_universe},
respectively.
Some of the actions in the WAM based model are also modified to
facilitate the proper manipulation of the universe counter.
The contents of the {\it UC} register is stored
in choice points so that this register can be restored upon
backtracking. These contents are also recorded in environment frames;
the instructions that create terms corresponding to the arguments of
atomic goals appearing in the body of a clause and possibly embedded
within generic goals may need the old value in this register for
tagging variables that are bound by the implicit quantifiers at the
clause level.

It is necessary also to deal with the direct effects of a generic
goal: such a goal must give rise to a new constant that is tagged with
the (incremented) value of the UC register and that must then be
substituted in the body of the goal for the quantifier variable. In
our abstract machine, we deal with these requirements by assigning a
slot in the environment frame to the quantified variable---thereby
treating it as a {\em permanent} variable in WAM terminology---and by
storing the appropriate constant in this slot. These actions are
carried out by a new instruction called {\it set\_univ\_tag}: as
expected, this instruction takes as operands a displacement in the
environment frame and a constant.
As a concrete example of the design above, the pseudo instructions
from label {\it H1} to {\it H2} in Figure~\ref{fig:copy_proc_model} that
corresponds to the generic goal
$Pi\ c\plam \ (copy\ c\ c\ \pimp \ copy\ (T1\ c)\ (T2\ c))$
can take the following structure.
\begin{tabbing}
\quad\= c1\ \ : \=\{\ \=Set up a choice point on the top of stack and record ssssss\ \ \ \=\}\kill
\> {\it H1}\ :  \>\{ \>{\it incr\_universe}                                                                          \>\} \\
\>              \>\{ \>{\it set\_univ\_tag} $<$offset to the environment frame$>$, $c$                               \>\} \\
\> {\it H3}\ :  \>\{ \>Carry out control actions for entering an augment goal.                                       \>\} \\
\>              \>\{ \>Instructions for $copy\ (T1\ c)\ (T2\ c)$.                                                    \>\} \\
\> {\it H4}\ :  \>\{ \>Carry out control actions for leaving an augment goal.                                        \>\} \\
\> {\it H2}\ :  \>\{ \>{\it decr\_universe}                                                                          \>\}
\end{tabbing}

Goals in $\lambda$Prolog could also have the form
$(Sigma\ x\plam\ G)$, \ie, they could be explicitly existentially
quantified. Such goals may be permitted in Prolog too, but, because of
the simple syntactic structure of goals in that setting, in
particular, the absence of generic goals, such goals can be treated
statically by moving the existential quantifiers out into universal
ones over the entire clause and can then be treated via standard
techniques. In our case, we can almost use the same scheme. There is,
however, one exception: the particular location of the existential
quantifier may have an impact on what universe index is to be
stored with the variable. To accommodate this, we add a further
instruction that is called {\em tag\_exists} to our abstract
machine. This instruction takes a variable, which is eventually a stack
or heap location, as an argument and sets its universe index to the
value currently in the {\em UC} register.

The semantics of an augment goal $D\pimp G$ require the addition of $D$
to the existing set of program clauses before the processing of $G$,
and the retraction of these added clauses upon the successful
solution of $G$.
The searching mechanism used for clause selection has to therefore
support dynamic modifications to the available predicate definitions.
%  Impl point
To realize this, a memory component called an {\it implication point}
is introduced. These implication points are stored on the stack and a
new register, the $I$ register, is introduced to record the most
recent implication point. Each implication point also records the most
recent implication point at the time of its creation; in other words,
the sequence of implication points themselves form a stack. Suppose
that $D$ provides (additional) clauses for the predicates
$\{p_1,\ldots,p_n\}$. Then one of the components contained in the
implication point corresponding to the addition of $D$ is a search
table that will ultimately yield a pointer to the compiled form of the
code for each of these predicates. If no entry is found for a
particular predicate when searching from this implication point, the
search continues from the implication point that this one points to;
thus, the overall program context existing at any stage of computation
is completely defined by the contents of the $I$ register.  The
implication point also contains a {\em next clause} table of size $n$
that provides pointers to the definition (or code) for each of the
predicates $p_1,\ldots,p_n$ that existed at the time of its creation
paired with the implication point that corresponds to this
definition. This table complements a special instruction called {\it
  trust\_ext} to complete the compiled form of the code for the
predicates $p_1,\ldots,p_n$ as we describe later. Notice that the
right next clause table to use is determined by the implication point
that added the code currently being tried for the relevant
predicate. To isolate this implication point, we add to the abstract
machine yet another register called {\it CI}.

% New register: I CI CE;
% bookkeeping in choice point: CI, I; environment frame: CI
Two new instructions are introduced to support the compilation of an
augment goal. The {\it push\_impl\_point} instruction is used upon
entering an augment goal for the creation of an implication
point. This instruction is also responsible for setting up the next
clause table for the implication point, something that is done by
searching the program context given by the current contents of the $I$
register for definitions for each of the relevant predicate. The {\it
  push\_impl\_point} instruction takes as argument a pointer to a
compile-time prepared table that contains information about the
predicates for which code is being added and also pointers to the
specific code that needs to be included.
Symmetrically, the instruction {\it pop\_impl\_point} serves to remove
the latest implication point from the stack upon leaving an augment
goal. This action is carried out simply by setting the {\it I}
register to the implication point reference stored in the one that
this register currently points to.
Considering the {\it copy} example, now the pseudo instructions labeled from
{\it H3} to {\it H4} that correspond to the augment goal
$(copy\ c\ c\ \pimp \ copy\ (T1\ c)\ (T2\ c))$
can take the following form.
\begin{tabbing}
\quad\= c1\ \ : \=\{\ \=Set up a choice point on the top of stack and record ssssss\ \ \ \=\}\kill
\> {\it H3}\ : \>\{ \>{\it push\_impl\_point t}                                                                    \>\} \\
\>             \>\{ \>Instructions for $copy\ (T1\ c)\ (T2\ c)$.                                                 \>\} \\
\> {\it H4}\ : \>\{ \>{\it pop\_impl\_point}                                                                     \>\}
\end{tabbing}
We assume above that {\it t} is a pointer to a table prepared for the
addition of the clause $copy\ c\ c$ to the existing collection of
predicate definitions.

Code that is added dynamically for a predicate must allow for the
possibility that it is extending an already existing definition. To
support this situation, the code that is normally generated from the
clauses for the predicate is enclosed within a
{\it try\_me\_else} and a {\it trust\_ext} instruction. The leading
{\it try\_me\_else} sets up a choice point with the indication that
the alternative definition starts from the {\it trust\_ext}
instruction at the end of this segment of code.
The {\it trust\_ext} instruction takes as argument an index into a
next clause table.
The {\it trust\_ext} instruction first retrieves a pointer to the next
clause to try for the predicate from the next clause table stored in
the implication point referenced by the {\it CI} register and it
resets this register to the associated implication point also obtained
from this table. It then transforms the rest of the computational
context as needed for backtracking by using the contents of the
current choice point, which it then discards.

A subtle but important point to be noticed about the clauses that
appear in augment goals is that these may contain free variables in
them. For example, consider the following generic goal that appears in
one of the clauses for the {\it copy} predicate:
\begin{tabbing}
\qquad\=\kill
\>$Pi\ c\plam \ (copy\ c\ c\ \pimp \ copy\ (T1\ c)\ (T2\ c))$
\end{tabbing}
Recall that the quantified variable $c$ is treated as a variable for
which space is allocated in the environment record for the parent copy
clause. Further, the processing of the universal quantifier results in
a constant (with appropriate universe index) being bound to this
variable. When interpreting the embedded clause $copy\ c\ c$,
therefore, it is important to have available the environment record of
the parent clause in order to interpret the ``variable'' corresponding
to the occurrences of $c$. In short, we treat clauses as closures, to
be interpreted relative to an environment that is pointed to by a
special register called {\it CE}. Use is made of a new instruction
called {\it init\_variable} whenever it is necessary to get the
binding for a variable from the ``parent'' environment. This
instruction takes two arguments: a register or an environment slot
designating the location of the variable local to the clause being
considered and the environment slot for the parent clause from which
the binding must be obtained. The instruction uses its two arguments
to tie these two variables together.

As an illustration of the discussion of the compilation of embedded
clauses, the clause {\it copy c c} that occurs within the generic goal
just considered would be compiled into the following sequence of
(pseudo-)instructions:
\begin{tabbing}
\quad\= c1\ \ : \=\{\ \=Set up a choice point on the top of stack and record ssssss\ \ \ \=\}\kill
\> {\it D1}\ :  \>\{ \>{\it try\_me\_else} {\it D2}                                            \>\} \\
\>              \>\{ \>{\it init\_variable} $\langle${\it local location of
c}$\rangle$, {\it Yi} \>\} \\
\>              \>\{ \>Code for unifying first two argument registers \\
\>              \>   \> with variable denoting $c$ local to this environment.                                      \>\} \\
\>              \>\{ \>Return control to the continuation point.  \>\} \\
\> {\it D2}\ :  \>\{ \>{\it trust\_ext} {\it 1}                                             \>\}

\end{tabbing}
Here {\it Yi} denotes the location of the slot assigned to the
universally quantified variable corresponding to $c$ in the
environment record pointed to by the {\it CE} register. It is, of
course, necessary to set this register appropriately for each clause
that is being tried. To facilitate this, a pointer to the relevant
environment record is stored in the implication point at the time that
it is set up. Notice also that the index for the {\it trust\_ext}
instruction here is {\it 1} because there is code for exactly one
predicate that is added by the associated augment goal.

A final point concerns the instructions for invoking the code
for predicates. As we have noted in this section, the entry
point into such code can change during execution. For this reason, we
need a special set of calling instructions that will
initiate the search for appropriate code from the implication point
referenced by the {\it I} register. These instructions will,
for instance, have to be used for any calls to the {\it copy}
predicate whose compilation we have just considered.
Note, however, that the old WAM style calling instructions are also
retained in our abstract machine. These can be used for predicates
whose code cannot be altered dynamically. Moreover, it is preferable
to use them wherever possible because the address to which control
needs to be transferred then does not need to be calculated at
runtime.

\section{Compilation of Higher-Order Pattern Unification}\label{sec:ho_unif}
% Unification
% New data facilities: register Head, Argvec, NumArgs, new use of PDL, LL
We now turn our attention to providing support for higher-order
pattern unification. We first consider extensions for this purpose to
the data areas present in the original structure of the WAM.
These extensions are of two kinds: the introduction of new devices and
enhancements and modification to the ones already present in the WAM.
The specifics of these changes are as follows. First, we add new
registers called {\it Head}, {\it
  ArgVector}, {\it NumArgs} and {\it NumAbs} that provide access to
the head, the arguments, the number of arguments and the binder length
of a head-normal form right after it has been computed.
Second, in addition to the role it plays in realizing the interpretive
unification process,
the PDL is also used to temporarily maintain higher-order unification
problems that are delayed when executing the compiled form of
unification arising from matching with the clause head.
Third, unification problems that lie outside the $\Ll$ subset need to
be carried as constraints across goal invocations and the heap is used
to maintain such problems in the form of a list of disagreement
pairs. The beginning of this list is recorded in a new register called
{\it LL}.
The heap is further used to store the terms that are created in the
course of head-normalization and in the binding phase of pattern
unification.
In the intended scheme, $\beta$-contractions are carried out
destructively during head-normalization so as to share
the effects of such rewriting steps. Since it may be necessary to undo
these mutations on backtracking, we also change the trail so that it
additionally maintains a record of any such mutations that arise
during processing.

% The compilation model
As mentioned in Section~\ref{sec:basic_model}, the unification
on the arguments of a clause essentially consists of a
first-order and a higher-order part,
whereas WAM style instructions
for unification are only sufficient in handling the former.
Our abstract machine still uses the WAM
style instructions to solve the first-order subproblems, and delays the
higher-order ones by pushing them onto the PDL. The problems left on
the PDL in this way are examined by an interpretive pattern
unification procedure that is invoked as the culminating instruction
in the sequence that realizes unification with the clause head.
The structure of the unification part of the processing model can thus
be described schematically as follows:
\begin{tabbing}
%\quad\= Unification on a clause head: \\
\quad\= \{\ \ \=  For \=each argument in the clause head       \\
\>       \>         \>\{ \ \=Instructions for carrying out the first-order part unification and  \\
\>       \>         \>     \>postponing the higher-order part onto the PDL.     \\
\>       \>         \>\} \\
%\>       \> \} \\
\>       \> Invoke the interpretive pattern unification procedure on the PDL. \\
\> \}
\end{tabbing}

\begin{figure}\footnotesize
\begin{tabbing}
\dquad {\it unify}\=\ $(t,\ ${\boldmath $s$}$)$  \\
                  \> switch on the structure of $t$\ : \\
                  \> case \=$\lambdadb(n, t')$\ : \\
                  \>      \>{\bf create {\boldmath $t$} on the heap}  \\
                  \>      \>{\bf interp\_unify({\boldmath $t,\ s$})} \\
                  \> case \>$(F\ a_1\ ...\ a_n)$, where $F$ is a variable and $n > 0$\ : \\
                  \>      \> let \= $t'$ be a term of form   \\
                  \>      \>     \> $(F\ a_1\ ...\ a_n)$, where $F$ is a new logic variable, \\
                  \>      \>     \> if this is the first occurrence of the variable in the clause;\\
                  \>      \>     \> $(f\ a_1\ ...\ a_n)$, where $f$ is the term to which the variable $F$ is bound, \\
                  \>      \>     \> if this is the subsequent occurrence of the variable in the clause. \\
                  \>      \>{\bf create {\boldmath $t'$} on the heap} \\
                  \>      \>{\bf interp\_unify({\boldmath $t,\ s$})} \\
                  \> case \>$X$, where $X$ is a variable\ : \\
                  \>      \>if this is the first occurrence of $X$ in the clause, then {\bf bind({\boldmath $X,\ s$})}. \\
                  \>      \>else {\bf interp\_unify({\boldmath $t',\ s$})}, where $t'$ is the term to which $X$ is bound. \\
                  \> case \>$(c\ a_1\ ...\ a_n)$, where $c$ is a constant, and $n \geq 0$\ : \\
                  \>      \>{\bf head\_norm(\boldmath $s$)} \\
                  \>      \>{\bf if \boldmath $s$ is \boldmath $(r'\ b_1\ ...\ b_m)$, where \boldmath $r'$ is rigid and \boldmath $m\geq 0$} \\
                  \>      \>{\bf then}\ \={\bf if \boldmath $r' \neq c$ or \boldmath $n \neq m$ then backtrack} \\
                  \>      \>            \> {\bf else } for $1\leq i \leq n$: {\it unify}\ $(a_i,\ ${\boldmath $b_i$}$)$  \\
                  \>      \>{\bf else}  \> \\
                  \>      \>            \>{\bf create \boldmath $t'$ as \boldmath $(c\ X_1\ ...\ X_n)$ on heap, where \boldmath $X_i$ are new variables} \\
                  \>      \>            \>{\bf if \boldmath $s$ is a logic variable \boldmath $X$ }\\
                  \>      \>            \>{\bf then}\ \={\bf if \boldmath $uc(X) \leq uc(c)$ then backtrack} \\
                  \>      \>            \>            \>{\bf else \boldmath bind$(X,\ t')$} \\
                  \>      \>            \>{\bf else} /* $s$ must be a higher-order term */ \\
                  \>      \>            \>            \>{\bf push the pair \boldmath $(t', \ s)$ onto PDL} \\
                  \>      \>            \>for $1\leq i \leq n$: {\it unify}\ $(a_i, X_i)$
\end{tabbing}
\caption{The unification model in our compilation
implementation.}\label{fig:unify}
\end{figure}

%Compile function
Now we consider the compilation
of the unification on each pair of arguments. Compared with what has
to be dealt with by the WAM, the following new issues arise in our
setting.
First, a richer collection of term structures participate in the
computation. Second, a head normalization procedure has to be invoked
to bring terms into comparable forms at the necessary
points. Finally, relevant instructions have to be enhanced with
the ability to properly separate higher-order subproblems from
first-order ones, taking the necessary steps to solve the latter while
pushing the former onto the PDL.
Taking these issues into account, the processing in our implementation can be
described by the {\it unify} procedure in Figure~\ref{fig:unify}.
The first argument to this procedure is the argument from a clause head,
\ie, whose structure is statically known, and is assumed to be normalized
at compilation time. The second argument is the one dynamically appearing
at runtime. It should also be noted that the actions carried out in
compilation and at runtime are both present in this procedure, and we use
bold letters to distinguish the latter.

The auxiliary functions {\it interp\_unify}
and {\it head\_norm} in {\it unify} denote the interpretive
pattern unification and head normalization procedures respectively.
A call to the procedure {\it bind} in a form {\it bind\ $(X,\ t)$}
essentially carries out the action of binding a logic variable $X$ to the
term $t$.
In the situation when $X$ is from a static argument of the clause head,
the logic variable is not explicitly created, but, rather, given by a
data register or a slot in the environment frame.
Binding in this case is carried out by placing a reference to the term
$t$ in the relevant place.
Finally, in the case when the static term $t$ input to the {\it unify}
procedure is a first-order application and the dynamic term $s$ is a logic
variable or higher-order term, the recursive calls to {\it unify} simply
serve to construct the arguments of $t$ on the heap. For this reason, it is
not necessary to actually create the new variables $X_i$'s that are used
in the presentation of the pseudo code. Instead, space is allocated on
the heap for an argument vector of size $n$ and the recursive calls to
{\it unify} enter a term creation mode---known as the WRITE mode in
contrast to the READ mode that is used when term structure needs to be
analyzed---during which the arguments of $t$ are
created and references to them are placed into the relevant slots in the
argument vector.

% new and changed instructions
The conventional WAM style term creation and unification instructions
are categorized into the {\it put}, {\it set}, {\it get}
and {\it unify} classes. Roughly mapping to the {\it unify} procedure in
Figure~\ref{fig:unify}, the {\it get} class of instructions can be used to
carry out the actions required by the cases where the static term is a
first-order application and where it is a constant or variable that appears
directly as an argument of the clause head.
When the {\it unify} procedure is invoked recursively over the
arguments of the (static) applications, the unifications over the embedded
variables and constants can be handled by the set of {\it unify} instructions.
The {\it put} and {\it set} instructions are used in the WAM solely for
setting up the the actual arguments of atomic goals and do not get used
in head unification. In our context, when the static
term has a higher-order structure, it has to be first
created and then handed to the interpretive unification process.
The term creation actions are carried out by
the {\it put} and {\it set} classes of instructions, \ie, these
instructions may be interleaved with {\it get} and {\it unify}
instructions in the compilation of head unification.

Within this picture, now we start to examine the enhancements to each
category of instructions for supporting the higher-order aspects
of unification.
Since the {\it set} category of instructions are in
fact a light-weight form of those in the {\it unify} class, \ie,
their actions are the same as those carried out
by the {\it unify} instructions in the WRITE mode, we do not discuss
these separately in what follows.

% put, unify (write mode)
In contrast to the first-order setting, term creation in our context
has to deal with a
richer collection of structures. First, the head of (a head normal form of)
an application can be a de Bruijn index or a logic variable in
addition to being a constant.
For this reason, the {\it put\_structure} instruction in the WAM is
generalized into {\it put\_app}. This instruction gets three
arguments: a data (argument) register $A_i$, a data register or an
offset into an environment frame $X_j$ and a positive number $n$.
This instruction first creates an application term on the
heap with its head being the term referred to by $X_j$ and an empty argument
vector of size $n$. Then $A_i$ is set to refer to the new application term
and the $S$ register
is prepared to refer to the beginning of the argument vector for the subsequent
instructions to actually fill in the arguments.
The second source of higher-order structures is the appearances of de Bruijn
indexes and abstractions. For the creation of the former, new instructions
\begin{tabbing}
\dquad {\it put\_index\ $A_i$, $n$} \dquad and\dquad {\it unify\_index $n$}
\end{tabbing}
are introduced. The first one is used for a de Bruijn index
that is not directly an argument of an application. Its execution
constructs a term corresponding to the de Bruijn index $n$ on the top
of the heap and sets the data register $A_i$ to refer to it.
The {\it unify\_index} instruction corresponds to an application argument.
It can be only invoked in the WRITE mode and its effect is to create
a term corresponding to the de Bruijn index $n$ in the
heap location given by the register $S$ and to increment $S$ to point
to the next argument vector slot.
Similarly, the creation of an abstraction $\lambdadb(n, t)$ is realized by
the pair of new instructions
\begin{tabbing}
\dquad{\it put\_lambda\ $A_i$, $X_j$, $n$} \dquad and\dquad {\it unify\_lambda\ $X_j$, $n$},
\end{tabbing}
depending on whether the abstraction appears directly as an argument of an
application. A reference to the term $t$ is assumed to be contained by the data
register or environment offset $X_j$.

% globalize
The instructions constructing compound terms assume that the head of an
application and the body of an abstraction are given by data registers.
However, these components can in particular situations correspond to
permanent variables which reside in environment frames
on the stack. In these situations, the relevant permanent variables have to be
globalized prior to use. To facilitate this,
our abstract machine include the instructions
\begin{tabbing}
\dquad{\it globalize\ $Y_i$, $A_j$} \dquad and \dquad {\it globalize\ $A_i$}.
\end{tabbing}
The first one dereferences the permanent variable $Y_i$ given by an
offset to an environment frame. If the resulting term still resides on the
stack, it is copied to the top of the heap and then sets both that
stack cell and the data register $A_j$ to refer to the newly created
heap cell. Otherwise $A_j$ is made to be a reference to the
dereferenced result.
The second instruction simply dereferences the given
$A_i$, carries out the globalizing actions described before if necessary and
leaves a reference to the appropriate heap term in $A_i$.

% get, unify (read mode)
The {\it get} and {\it unify} instructions are used for carrying out
compiled unification. These instructions are enhanced to handle terms
whose structures may be revealed to be higher-order at runtime.
Changes are made for the instructions
\begin{tabbing}
\dquad{\it get\_structure\ $A_i$, $f$, $n$},\quad {\it
  get\_constant\ $A_i$, $c$ }\quad and\ {\it unify\_constant\ $c$},
\end{tabbing}
in which $A_i$ is required to be a data register referring to the
incoming term, $f$ and $c$ are required to be constants and $n$ is a
number denoting the arity of the application.
Executing these instructions (in the READ mode for the last
instruction) first invokes the interpretive
head normalization procedure on the term referred to by $A_i$ for the
first two instructions and the one referred to by the $S$ register for
the last. Let the resulting term be $s$; as already explained, its
decomposition will be given by the contents of the registers  {\it Head},
{\it ArgVector}, {\it NumArgs} and {\it NumAbs} at the end of head
normalization. If $s$ has a higher-order
structure, \ie, if it is an abstraction or a flexible application, a
disagreement  pair with the first term being (a reference to) $s$ and
the second referring to the current top
of heap or to the location given by $S$ is created on the PDL.
In the situation when {\it get\_constant} or {\it unify\_constant}
is executed, the constant $c$ is then created as the second term of the
disagreement pair. When the executed instruction is {\it get\_structure},
the term pushed onto the top of heap is then an application with an empty
argument vector of size $n$ and with its head referring to a new constant term
corresponding to $f$. Further, the $S$ register is set to the first entry of
the argument vector, and execution proceeds to the following
{\it unify} instructions in WRITE mode. The {\it unify\_value\ $X_i$}
instruction is also changed so that when it is executed in the READ
mode, it causes the pattern unification procedure, rather than the
first-order unification procedure, to be invoked in interpretive mode
on the pair of terms given by the register or environment offset $X_i$
and the $S$ register.
In addition, a new instruction
\begin{tabbing}
\dquad{\it pattern\_unify $X_j$,\ $A_i$}
\end{tabbing}
is introduced as a variant of {\it unify\_value} in the READ mode.
This instruction appears at the end of a sequence of {\it put} and
{\it unify} (in the WRITE mode) instructions that serves to create a
higher-order term appearing in a clause head. This instruction also
invokes the higher-order pattern unification procedure in interpretive
mode to unify the created term that is referenced by $X_j$ and the
incoming term that is given by the argument register $A_i$.

For a concrete example of the usage of our unification and term creation
instructions, we can consider the compilation of the term
$(app\ X\ (abs\ (y\plam \ X)))$ as an argument within a clause head,
assuming that $app$ and $abs$ are the constants that we encountered in
the {\it copy} program.
The instructions resulting from a compilation of this term are shown
in Figure~\ref{fig:unify_exp}.
\begin{figure}
\begin{tabbing}
\dquad\dquad\= {\it get-structure-}\dquad\={\it A22,}\ \={\it A22,}\ \={it A22} \= \kill
\>{\it get\_structure}  \> {\it A1,}\> {\it app,}\>{\it 2} \> \%\ {\it A1 = (app\ }\= \\
\>{\it unify\_variable} \> {\it A2}\>          \>        \> \%                   \>{\it X\ }\=         \\
\>{\it unify\_variable} \> {\it A3}\>          \>        \> \%                   \>         \> {\it A3)} \\
\>{\it get\_structure}  \> {\it A3,}\> {\it abs,}\>{\it 1} \> \%\ {\it A3 = (abs\ } \\
\>{\it unify\_variable} \> {\it A4}\>          \>        \> \%                   \> {\it A4)} \\
\>{\it put\_lambda}     \> {\it A5,}\> {\it A2,} \>{\it 1} \> \%\ {\it A5 = $\lambdadb(1, X)$} \\
\>{\it pattern\_unify}  \> {\it A4,}\> {\it A5} \>        \> \%\ {\it A4 = A5}
\end{tabbing}
\caption{Compiled unification over a head argument $(app\ X\ (abs\
(y\plam \ X)))$.}\label{fig:unify_exp}
\end{figure}

% finish_unify
The instruction set for our abstract machine includes a new
instruction called {\it finish\_unify} that is used at the end
of the processing of the entire clause head. This instruction invokes
the interpretive
pattern unification procedure over the disagreement pairs that have been
pushed onto the PDL during the head processing. Further, if bindings to logic
variables have actually occurred during head unification,
the global disagreement set recording non-$\Ll$ problems generated
from computation steps prior to the processing of the current clause is also
examined at this stage with the expectation that some of them could actually
become $\Ll$ after the bindings.
% Mention the freeze-wake scheme
It is interesting to note that this way of examining the global disagreement
set could in theory lead to bad performance: if a large number of non-$\Ll$
pairs are carried along across the solutions of atomic queries and only a
relatively small portion of it actually becomes $\Ll$ after the processing
of each clause head, then the repeated examination on the contained
disagreement pairs will be mostly redundant. This conceptual problem can be
solved by using a sophisticated {\it freeze-wake} mechanism proposed
by~\cite{MP93ppcp}. Within this scheme, a unsolvable disagreement pair
is directly associated with the logic variables contributing to it, and
the re-examination is triggered only when the binding of the logic
variable actually occurs. However, the ``extreme'' case described
above in fact rarely occurs in the context that we are interested in: in most
practical $\lp$ programs, it is either the case that
all the disagreement pairs are $\Ll$ the first time they are looked
at, usually because the program itself has been written to adhere to
the $\Ll$ style, or the case that a non-$\Ll$ pair is
transformed into an $\Ll$ one at the end of the processing of the
clause head in which the pair was encountered.
Based on this observation, the simple processing scheme that we have
chosen for delayed disagreement pairs seems justified.

%head_normal
A final new instruction for our abstract machine is
{\it head\_normalize\ $X_i$}, which carries out the head normalization of a
term referred to by the data register or
environment offset given by $X_i$. This instruction is used in the term
creation process
needed for setting up the arguments of atomic goals when it is obvious that a
higher-order structure
has been created. The purpose of enforcing head normalization over such
structures at an early stage is to reduce the overhead of backtracking.
The actual arguments have to be in head normalized form
during the unification operations carried out during the clause
selection. If this normalization is done before a choice point
corresponding to clause selection is created, then the process of
undoing and then redoing it because of a backtracking internal to this
selection process can be avoided.

A comparison between the processing model we have described here and
the one underlying the implementation of {\it Version 1} of the {\it
  Teyjus} system is in order. We focus here only on the issues that
have been discussed so far; more differences will arise when we
consider the treatment of types in the next chapter.
In the earlier abstract machine, the higher-order part of the
unification problems are separated from the first-order ones in a way similar
to our scheme and are also handed to an interpretive unification
procedure for their solution.
However, due to the branching nature of the unification procedure
dealt with in that abstract machine, a more sophisticated (and more
costly) control mechanism has to be considered. In particular, in
addition to the choice point, a
structure known as {\it branch point} had to be introduced for the
purpose of recording choices in the incremental steps taken to solve
rigid-flexible pairs \cite{N03treatment}. Further, these branch points
have to be
examined during backtracking for attempting the next alternative. This
also introduces further complexity in treating choice points at least
in that they have to be differentiated from branch points so that it
is clear what action needs to be taken in the relevant cases.
To avoid the storage of redundant control information for affecting
backtracking caused by the
branching of unification, special attention was paid in the design of
that abstract machine to the precise structure of a branch point.
The creation and the maintenance of branch points is carried out in
that machine by an instruction that is also called {\it finish\_unify}.
The necessity of branch points is entirely eliminated
in our context because we simply delay unification on any pairs that
could cause branching. This has lead to a considerable simplification
of the processing model and is also expected to lead to improvements
in the execution behavior over practical $\lp$ programs.

\section{An Complete Example of Compilation}\label{sec:inst_exp}
We are now in a position to show the complete sequence of instructions
that would be generated for the {\it copy} clauses shown in
Figure~\ref{fig:copy}. The code that we expect a compiler to generate
corresponding to the first two clauses is shown in
Figure~\ref{fig:copy_1}, the code for the last clause is appears in
Figure~\ref{fig:copy_2} respectively, and Figure~\ref{fig:copy_3}
contains the instructions for the embedded clause in the body of the
last clause for the predicate.

\begin{figure}\footnotesize
\begin{tabbing}
\dquad\=copy\ :\dquad\=switch-on-constant\dquad\=L2,\ L1,\ fail,\ fail,\ \dquad\dquad a\= \kill
\> {\it copy}\ :     \>{\it switch\_on\_term}  \>{\it L2,\ \ L1,\ \ fail,\ \ fail}                \\
\> {\it L1}\ :       \>{\it switch\_on\_constant}\>{\it 3,\ \ ht}                               \\
\> {\it L2}\ :       \>{\it try\_me\_else}     \>{\it 2,\ \ L4}  \> \%\ {\it copy}\  \=         \\
\> {\it L3}\ :       \>{\it get\_constant}     \>{\it A1,\ \ a}    \> \%\              \>{\it a}  \\
\>                   \>{\it get\_constant}     \>{\it A2,\ \ a}    \> \%\              \>{\it a}  \\
\>                   \>{\it finish\_unify}                                           \\
\>                   \>{\it proceed}                                                 \\
\> {\it L4}\ :       \>{\it retry\_me\_else}   \>{\it 2,\ \ L6}  \> \%\ {\it copy}\  \=                                       \\
\> {\it L5}\ :       \>{\it allocate}          \>{\it 3}                                                                     \\
\>                   \>{\it get\_structure}    \>{\it A1,\ \ app,\ \ 2}\> \%              \>{\it (app\ }\=                     \\
\>                   \>{\it unify\_variable}   \>{\it A1}            \> \%              \>           \>{\it T1}\ \=          \\
\>                   \>{\it unify\_variable}   \>{\it Y1}            \> \%              \>           \>          \>{\it T2)} \\
\>                   \>{\it get\_structure}    \>{\it A2,\ \ app,\ \ 2}\> \%              \>{\it (app\ }\=                     \\
\>                   \>{\it unify\_variable}   \>{\it A2}            \> \%              \>           \>{\it T3}\ \=          \\
\>                   \>{\it unify\_variable}   \>{\it Y2}            \> \%              \>           \>          \>{\it T4)} \\
\>                   \>{\it finish\_unify}     \>                    \> \%  $\pif$        \\
\>                   \>{\it head\_normalize}   \>{\it A1}            \> \%  {\it A1 = T1} \\
\>                   \>{\it head\_normalize}   \>{\it A2}            \> \%  {\it A2 = T3} \\
\>                   \>{\it call\_name}        \>{\it 2,\ \  copy}    \> \%  {\it copy A1 A2, } \\
\>                   \>{\it put\_value}        \>{\it Y1,\ \ A1}      \> \%  {\it A1 = T2} \\
\>                   \>{\it head\_normalize}   \>{\it A1}            \\
\>                   \>{\it put\_value}        \>{\it Y2,\ \ A2}      \> \%  {\it A2 = T4} \\
\>                   \>{\it head\_normalize}   \>{\it A2}            \\
\>                   \>{\it deallocate}        \\
\>                   \>{\it execute\_name}     \>{\it copy}          \> \%  {\it copy A1 A2.}
\end{tabbing}
\caption{Instructions for the first two clauses of {\it copy}.}\label{fig:copy_1}
\end{figure}

\begin{figure}\footnotesize
\begin{tabbing}
\dquad\=copy\ :\dquad\=switch-on-constant\dquad\=L3,\ L1,\ fail,\ fail\ \dquad\dquad a\= \kill
\> {\it L6}\ :      \>{\it trust\_me}          \>{\it 2}                 \> \% {\it copy\ }\= \\
\> {\it L8}\ :      \>{\it allocate}           \>{\it 2}                 \\
\>                  \>{\it get\_structure}     \>{\it A1,\ \ abs,\ \ 1}    \> \%             \> {\it (abs\ }\= \\
\>                  \>{\it unify\_variable}    \>{\it A3}                \> \%             \>             \> {\it T1)} \\
\>                  \>{\it get\_structure}     \>{\it A2,\ \ abs,\ \ 1}    \> \%             \> {\it (abs\ }\= \\
\>                  \>{\it unify\_variable}    \>{\it A4}                \> \%             \>             \> {\it T2)} \\
\>                  \>{\it finish\_unify}      \>                        \> \% $\pif$          \\
\>                  \>{\it incr\_universe}     \>                        \> \% {\it (}\={\it Pi\ }\=    \\
\>                  \>{\it set\_univ\_tag}     \>{\it Y1,\ \ c}           \> \%        \>     \> {\it c$\plam$\ }\= \\
\>                  \>{\it push\_impl\_point}  \>{\it 1,\ \ t}            \> \%        \>     \>                 \> {\it (}\={\it (copy c c) $\pimp$} \\
\>                  \>{\it put\_app}           \>{\it A1,\ \ A3,\ \ 1}     \> \%        \>     \>                 \>        \>{\it A1 = (T1\ }\= \\
\>                  \>{\it globalize}          \>{\it Y1,\ \ A255}        \\
\>                  \>{\it set\_value}         \>{\it A255}              \> \%        \>     \>                 \>        \>                \>{\it c)} \\
\>                  \>{\it head\_normalize}    \>{\it A1}                \\
\>                  \>{\it put\_app}           \>{\it A2,\ \ A4,\ \ 1}     \> \%        \>     \>                 \>        \>{\it A2 = (T2\ }\= \\
\>                  \>{\it set\_value}         \>{\it Y1}                \> \%        \>     \>                 \>        \>                \>{\it c)} \\
\>                  \>{\it head\_normalize}    \>{\it A2}                \\
\>                  \>{\it call\_name}         \>{\it 1,\ \ copy}         \> \%        \>     \>                 \>        \> {\it copy A1 A2}  \\
\>                  \>{\it pop\_impl\_point}   \>                        \> \%        \>     \>                 \> {\it )}\\
\>                  \>{\it decr\_universe}     \>                        \> \% {\it ).}\\
\>                  \>{\it deallocate}         \\
\>                  \>{\it proceed}
\end{tabbing}
\caption{Instructions for the last clause of {\it copy}.}\label{fig:copy_2}
\end{figure}

\begin{figure}\footnotesize
\begin{tabbing}
\dquad\=copy\ :\dquad\=switch-on-constant\dquad\=L3,\ L1,\ fail,\ fail\ \dquad\dquad a\= \kill
\>{\it copy}\ : \> {\it try\_me\_else}    \> {\it 0,\ \ L9}   \> \% {\it copy\ }\= \\
\>              \> {\it init\_variable}   \> {\it A3,\ \ Y1}  \\
\>              \> {\it pattern\_unify}   \> {\it A3,\ \ A1}  \> \%             \> {\it c} \\
\>              \> {\it pattern\_unify}   \> {\it A3,\ \ A2}  \> \%             \> {\it c.} \\
\>              \> {\it finish\_unify}    \\
\>              \> {\it proceed}          \\
\>{\it L9}\ :   \> {\it trust\_ext}       \> {\it 2, \ \ 1}
\end{tabbing}
\caption{Instructions for the dynamic clause of {\it copy} in the augment goal in its last definition.}\label{fig:copy_3}
\end{figure}

% Indexing
The instructions {\it switch\_on\_term} and {\it switch\_on\_constant}
in Figure~\ref{fig:copy_1} are used for indexing clause choices in a way
described in Section~\ref{sec:basic_model}.
% switch_on_term
Specifically, the former
takes the form
\begin{tabbing}
\dquad{\it switch\_on\_term\ \ V,\ \ C,\ \ L,\ \ BV}
\end{tabbing}
where $V$, $C$, $L$ and $BV$ are instruction addresses to which control must
be transferred to when head normal form of the term referred to by $A1$
is a flexible term, a rigid term with a constant head other than $::$,
a nonempty list and a bound variable head respectively. The label {\it fail}
is assumed to be the location of code that causes backtracking.
% switch_on_constant
The other instruction {\it switch\_on\_constant} carries out the second-level
indexing among different constant heads. The first argument of it is a positive
number indicating the number of constants under consideration and the second
argument refers to a hash table in which the mapping from the constants to
the addresses of the corresponding clause definitions are stored.

% Control
% try_me_else; retry_me_else; trust_me
Among the control instructions appearing in the figures,  {\it try\_me\_else},
{\it retry\_me\_else} and {\it trust\_me} are used for the manipulation
of choice points, and the former two have their second argument being
the address of the clause definition that should be attempted upon
backtracking.
Their first numeric argument is used to indicate the number of argument
registers that are to be saved or retrieved as relevant.
% allocate; deallocate
The instructions {\it allocate} and {\it deallocate} are used for the
creation and deletion of environment frames on the stack. The argument
of the former contains a positive number corresponding to the number of
permanent variables that are to be allocated on the frame.
% call_name; execute_name
The calls to clause definitions that need to be dynamically determined
are handled by the instructions
{\it call\_name} and {\it execute\_name}, whereas the return from a
clause definition is effected by the instruction
{\it proceed}. The instruction
{\it execute\_name} is specially intended for the last call optimization
mentioned
in Section~\ref{sec:basic_model}. The numeric argument of the call instructions
is used to indicate the number of variables that remain on the
caller's environment frame at the time of the call.
% trust_ext
The instruction {\it trust\_ext\ n, i} in Figure~\ref{fig:copy_3} is used
to search for dynamically extended clause definitions in a way described in
Section~\ref{sec:ho_control}.
The first argument
{\it n} is the number of argument registers that should be recovered
before the control is transferred to the found clause definition.
Figure~\ref{fig:copy_2} also illustrates the usages of the higher-order
control instructions {\it push\_impl\_point}, {\it pop\_impl\_point},
{\it incr\_universe} and {\it decr\_universe},
the computations underlying which are described in
Section~\ref{sec:ho_control}.

% Unification
% Y_i as permanent variable
Following the WAM convention, in the instructions shown in the
figures, we have used the name {\it Yi} to depict the $i$th variable
that is allocated in the environment frame. Also, the instructions
{\it unify\_variable} and {\it put\_value} are identical to the ones
with the same name in the WAM and the instruction {\it set\_value}
is used as a special case of {\it unify\_value} in the WRITE mode.

\section{Treatment of Flexible and Disjunctive Goals}\label{sec:misc}
Up to this point, we have provided a conceptual picture of our
abstract machine and compilation model insofar as these related to
the treatment of higher-order pattern unification. There are two
issues that are remained to be explained. First, the implementation
discussed so far assumes a monomorphic type system for our language,
within which no runtime processing of types is necessary. This
restriction has to be removed in the presence of the first-order
polymorphic types, on which our language is actually based. A
treatment of this aspect is deferred to the next chapter. Second, it
is not clear yet on how the flexible and disjunctive goals are
handled. We discuss these aspects in this section.

The appearance of flexible goals, \ie, of goals of form
$(P\ t_1\ \ldots\ t_n)$, where
$P$ is a variable, embodies the ability to mix in our language
meta and object level usages of predicate expressions.
% An example of the usage of flexible goals (mappred)
A predicate definition that exploits this ability is shown below:
\begin{tabbing}
\dquad\dquad\={\it kind}\dquad\dquad\={\it i}\dquad\dquad\={\it type}.\\
            \>{\it type}\>{\it mappred}\>{\it (list i) $\ra$ (i $\ra$ i $\ra$ o) $\ra$ (list i) $\ra$ o. } \\[5pt]
            \>{\it mappred}\>{\it nil P nil.} \\
            \>{\it mappred}\>{\it (X :: L1) P (Y :: L2)} $\pif$ {\it P X Y, mappred L1 P L2.}
\end{tabbing}
Let {\it bob, john, mary, sue, dick} and {\it kate} be constants declared
with type {\it i}, and let {\it parent} be a constant of type
{\it i $\ra$ i $\ra$ o}. Then the following additional clauses
define a ``parent'' relationship between different individuals.
\begin{tabbing}
\dquad\dquad\={\it parent}\dquad\dquad\={\it bob}\dquad\={\it john}.\\
            \>{\it parent}\>{\it john}\> {\it mary}.\\
            \>{\it parent}\>{\it sue}\> {\it dick}.\\
            \>{\it parent}\>{\it dick}\> {\it kate}.
\end{tabbing}
In this context, a query of form
\begin{tabbing}
\dquad\dquad{\it ?-}\quad{\it mappred}\dquad{\it (bob :: sue :: nil)}\dquad {\it parent}\dquad {\it L}
\end{tabbing}
can be asked, and can be solved with the answer
substitution $\{\dg{L}{john::dick::nil}\}$.
Following the operational semantics of our language specified in
Section~\ref{sec:interpreter}, it can be observed that in the course of
solving this query, two new goals
\begin{tabbing}
\dquad\dquad{\it parent  bob  Y1}\dquad and\dquad {\it parent  sue  Y2}
\end{tabbing}
will be dynamically formed and solved. Another example of a query is
\begin{tabbing}
\dquad{\it ?- mappred (bob :: sue :: nil) (x$\plam$ y$\plam$ (Sigma z$\plam$ (parent x z, parent z y))) L}.
\end{tabbing}
This goal asks for the grandparents of {\it bob} and {\it sue} and has as
its solution the substitution $\{\dg{L}{mary::kate::nil}\}$. Finding this
answer requires two new goals of complex structures---each with an
embedded conjunction and existential quantifier---to be constructed
dynamically and then solved.

As illustrated by the {\it mappred} example, flexible goals may be
instantiated by terms containing predicate constants and with complex
logical structures, thereby dynamically
reflecting object-level occurrences of quantifiers and connectives into
positions where they function as search directives.

% solve
The problem faced in supporting flexible goals is that instantiations of
their heads can change their structure  dynamically, and so it is
impossible to know at compile time the specific control action that
they would give rise to during computation.
However, we can provide a partial compilation in that we can use the
top level structure of these goals at runtime to pick between
different compiled treatments of control structure. In particular,
flexible goals
can be compiled into calls to a special procedure named {\it solve} to which
(the instantiated version of) the goal is provided as an argument. In the case
that the incoming goal has a complex structure, the behavior of {\it solve}
can be envisaged as of it were based on a compilation of the following
clauses:
\begin{tabbing}
\dquad\dquad\={\it solve (G1 , G2)}\dquad\=$\pif$\quad\={\it solve G1, solve G2}.\\
\>{\it solve (G1 ; G2)}\>$\pif$\>{\it solve G1; solve G2}.\\
\>{\it solve (Sigma G)}\>$\pif$\>{\it solve (G X)}.\\
\>{\it solve (Pi G)}\>$\pif$\>{\it Pi x$\plam$ (solve (G x))}.
\end{tabbing}
When the argument given to {\it solve} is an atomic goal with a rigid head,
then its arguments are loaded into appropriate data registers and the head is
used to determine the code to be invoked subsequently. The only other situation
that could possibly arise is that the actual argument passed to {\it
  solve} remains a flexible atomic goal; the syntactic restriction on
the appearance of logical symbols in terms makes it impossible for
any other case to arise. In this last case---when the argument of {\it
  solve} is a flexible goal---we follow the suggestion in
\cite{NM98Handbook} and solve the goal immediately with a substitution
of the form $\lambda x_1\ldots\lambda x_n\top$ for the variable that
appears as the head of this goal.

In our implementation, the {\it solve} predicate is
treated as a builtin one whose realization is ``hard-wired'' into
the abstract machine.

% disjunct goals -- compiler only
Our treatment of disjunctive goals is based on a compile-time
pre-processing of clauses to eliminate such disjunctions. Upon seeing
a goal of the form {\it (G1 ; G2)}, the compiler creates a new
predicate definition consisting of the following clauses:
\begin{tabbing}
\dquad\dquad\=
{\it new\_pred  X1 ... Xn}\quad $\pif$\quad {\it G1.}\\
\>
{\it new\_pred  X1 ... Xn}\quad $\pif$\quad {\it G2.}
\end{tabbing}
Here, {\it new\_pred} is a name chosen such that it is distinct from
any other name used in the program and $\{X_1, ..., X_n\}$ is the
set of variables occurring free in {\it (G1 ; G2)}. After generating
and adding these clauses to the program, the compiler replaces the
disjunctive goal with the atomic goal {\it (new\_pred X1 ... Xn)}.
As a concrete example, a clause presented in the form
\begin{tabbing}
\dquad\dquad{\it foo X}\quad $\pif$\quad {\it bar1 U V , (bar2 (f X) U ;  bar3 (f X) V)}.
\end{tabbing}
will be transformed into the sequence of clauses
\begin{tabbing}
\dquad\dquad\={\it foo X}\dquad\dquad\dquad\= $\pif$\quad\= {\it bar1 U V , new\_pred X U V}.\\
\>{\it new\_pred X U V}\>$\pif$\>{\it bar2 (f X) U}. \\
\>{\it new\_pred X U V}\>$\pif$\>{\it bar3 (f X) V}.
\end{tabbing}
by the pre-processing pass just described.

An alternative treatment to disjunctive goals is possible: we could
build in mechanisms for creating choice points in the bodies of
clauses. Thus, in the example just considered, we could use the
following structure to compile the body of the clause for {\it foo}:
\begin{tabbing}
\dquad\=\dquad\= \{\ \ \= Instructions for {\it (bar1 U V)}\ \ \}\\
      \>       \>\>{\it try\_me\_else\_disj L} \\
      \>       \>\{\ \ Instructions for {\it (bar2 (f X) U)}\ \ \} \\
      \>{\it L:}\>\>{\it trust\_me\_disj} \\
      \>       \>\{\ \ Instructions for {\it (bar3 (f X) V)}\ \ \}
\end{tabbing}
Here, the instructions {\it try\_me\_else\_disj} and {\it
  trust\_me\_disj} are like the WAM instructions {\it try\_me\_else}
and {\it trust\_me} except that it is the free variables occurring on
the disjunctive goal that are recorded and used by these instructions
rather than the argument registers. In the above example, instead of
the contents of registers
{\it A1} and {\it A2}, the actual information
recorded in the choice point should be the bindings of the variables
{\it X} and {\it V}.
Notice that we do not need to keep the information about {\it U} in
this example. In general, the compilation process would have to carry
out a ``usefulness'' analysis on the free variables that appear in
disjunctive goals to determine the ones that really have to be
remembered.

Compared with the approach of creating new predicates, this
alternative direct compilation of disjunctive goals has some
advantages. First, it obviates the call to the additional predicate
{\it new\_pred} and consequently avoids the runtime overhead for such
calls. Second, it provides a framework for analyzing which variables
really need to be stored and hence for avoiding redundant
book-keeping.
For these reasons, the direct compilation of disjunctive goals is
something that might be explored further as an improvement to our
implementation ideas.

\chapter{Efficient Support for Runtime Type Processing}\label{chp:types}
The processing model that we have developed for $\lp$ in the previous
chapter has ignored the presence of types in the language and the
impact these might have on computations. This model is accurate if the
language uses a monomorphic type system, \ie, one in which all types
are determined at compile time and do not subsequently
change. However, this is not the true situation in $\lp$ as we have
discussed in Section~\ref{sec:types_in_computation}; $\lp$ uses a
first-order polymorphic type system that leads to the possibility that
the types associated with variables and constants may evolve during
execution. Given this situation, it is important to determine the
exact manner in which the evolution of types may impact on computation
and to take account of this in the processing model. As we shall see
in this chapter, the place at which the identity of types is
needed is in comparing constants. In particular, two constants may
actually share a name but may be different in reality because their
types are distinct and, moreover, do not even have a common
instance. Unification must fail in this situation. To be able to
determine failure, however, it is necessary to bring types along into
the computation at relevant places and to actually check them for
compatibility.

We discuss the impact of polymorphic typing in detail in this chapter
to make the above picture explicit and we develop the needed machinery
for treating types appropriately. In the first section, we indicate
the refinement that is needed to the basic higher-order pattern
unification algorithm from Chapter~\ref{chp:interpreter} to account for
types. A straightforward solution to this problem would simply
construct types at runtime to attach them to constants and to pass
them as additional arguments to predicates. However, types can be
large in practice and constructing them explicitly each time they are
needed can be costly both in time and space. In
Section~\ref{sec:type_skel}, we describe an approach to using
information available at compile time to reduce the type analysis
needed at runtime; this approach has the additional benefit of
reducing the amount of type information that has to be garnered at
runtime. Unfortunately, the approach cannot be used to eliminate type
information to be associated with predicates in some situations when
these are really not necessary. In Section~\ref{sec:pred_type} we
discuss a different form of static analysis that captures these
situations. The work described in Sections~\ref{sec:type_skel} and
\ref{sec:pred_type} has previously been presented in
\cite{NQ05types}. We conclude the chapter by using the approaches we
develop to augment the abstract machine and
compilation structure described in the previous chapter to incorporate
a treatment of types.

Our discussion of the treatment of types pertains only to the
situation where the processing model is based on the use of
higher-order pattern unification. The abstract machine and
compilation model underlying {\em Version 1} of the {\em Teyjus}
system had used Huet's procedure for higher-order
unification. We note that considerably more type information
needs to be carried along and this also needs to be analyzed more
carefully in this situation. The choice we have made in this thesis
has therefore resulted in a significant simplification in the abstract
machine structure along this dimension as well.

\section{Types and Higher-Order Pattern Unification}\label{sec:hopu_type}
% Every terms have a first-order polymorphic type.
% Types are declared for constants and inferred by each term during
% compilation.

The term formation rules presented in Section~\ref{sec:lambda_terms}
associate a type with every well-formed term of $\lp$.
To determine this type, it is important to know the types of all the
constants and (bound) variables that appear in the term. The usual
practice, however, is to not specify types with variables. When we
allow for polymorphic types as in $\lp$, it is possible to infer a
most general type for each term even when the types of (some)
variables have not been provided. We assume such a procedure in our
context. Thus at the end of the compilation phase we assume that every
term has been determined to be type correct and that the type of each
term is also known. In a typical programming language, the usefulness
of types would end at this point. However, this is not the case in
$\lp$ as we have discussed in
Section~\ref{sec:types_in_computation}. In particular, constants
and variables may be used within a term at refinements of their
declared types and such refinements may impact on the precise
computation to be carried out.

% The interpreter: when a new constant or a logic variable is
% introduced,
% it should have the same type as that of the corresponding variable
% appearing
% in the program.
Looking naively at the relevance of types to computation, we see that
the abstract interpreter presented in Section~\ref{sec:interpreter}
has use for types in two different forms: first, in rules $4$, $6$,
$7$ and $8$ of Definition~\ref{def:interpreter}, when a logic variable
or a constant is introduced into the computation context, it should have
the same type as the existential or universal variable that is
replaced; second, the unification invoked in rules $7$ and $8$ should be
a typed one. We observe, however, that the types introduced in the
first set of situations do not have a real impact on the steps in
computation. Types are needed in checking identity in unification as
we shall see shortly and, in the case of each of these created
objects, every instance of them share the same type. Thus, when
checking their identity, a simple lookup of the names suffices; the
types would have to match if the names are the same.

The introduction
of types in the higher-order pattern unification can generally be
viewed as
maintaining a type along with every logic variable and constant and
using it to determine computation at necessary points. However, the
types of logic variables are neither examined nor refined in the
process of constructing bindings. Further, the comparison of
constants in this phase are restricted to being between those
appearing as arguments of logic variables in the appropriate
instance of rule $(5)$ in Figure~\ref{fig:hopu_simplification}. The
higher-order pattern constraint requires such constants to have a
larger universe index than
the logic variable as the head, implying thereby that they must have
been introduced by generic goals. Hence every instance of any such
constant must already be known to have the same type. From these
observations, it is evident that types are incidental to the binding
phase of the higher-order pattern unification.

The real substantial usage of types in the pattern unification is in
fact in the simplification phase for determining the applicability
of rule $(4)$ in Figure~\ref{fig:hopu_simplification}: the identity
checking on the rigid heads of the pair of terms may also require
the matching of their types. Observe, however, that if these heads
are matching de Bruijn indexes (abstracted variables) or constants
introduced by generic goals, then the types must already be
identical. Thus the matching or unification of types is necessary
only for the genuinely polymorphic constants declared at the
top-level in the program.

Based on these observations, rule $(4)$ in
Figure~\ref{fig:hopu_simplification} can now be modified into the
following.
\begin{tabbing}
\dquad\=($5.1''$) \=$\dg{\dg{\lambdadb(n, t)}{\lambdadb(m, s)}::\st{D}}{\theta}$ \= $\lra$ \=\kill
\> ($4.1$)\>$\dg{\dg{(c_{\tau}\ t_1\ \ldots\ t_n)}{(c_{\sigma}\ s_1\ \ldots\ s_n)}::\st{D}}{\theta}$ \\
\>      \>\dquad\dquad$\lra$\dquad $\dg{\dg{t_1}{s_1}::\ldots ::\dg{t_n}{s_n}::\st{D}}{\phi\circ\theta}$, \\
\>      \> provided $c$ is a constant such that $\st{L}(c) = 0$ and \\
\>      \>$\phi$ is the most general unifier of $\tau$ and $\sigma$. \\
\> ($4.2$)\>$\dg{\dg{(r\ t_1\ \ldots\ t_n)}{(r\ s_1\ \ldots\ s_n)}::\st{D}}{\theta}$ $\lra$ $\dg{\dg{t_1}{s_1}::\ldots ::\dg{t_n}{s_n}::\st{D}}{\theta}$, \\
\>\> provided $r$ is a constant such that $\st{L}(r)>0$ or a de Bruijn index.
\end{tabbing}
In the rules $(4.1)$ and $(4.2)$, the type association to relevant
constants is represented as a subscript. Further the labeling function
$\st{L}$ of the abstract interpreter is used to help differentiating
between constants from the top-level and those introduced by the
execution of generic goals. Finally, since the polymorphic types in
our language can be essentially viewed as the terms in the
first-order logic, a first-order unification process is assumed to
be invoked on the types of the constant heads in the application of
rule $(4_1)$ to either decide the non-applicability of this rule or
to compute the most general unifier of them. Note also that we might
want to provide the type instantiations back to the user along with
answers. For this reason, we have assumed that our substitutions also
maintain information about the ones made to type variables.

% Only need to be maintained along with constants.
From the above considerations, it is clear that the only sort of
terms with which we need to maintain types at runtime are the
top-level declared constants. Such association of types can be
further reduced to minimize runtime type processing overhead, which
is discussed in the next two sections.

\section{Reducing Type Association for Constants}\label{sec:type_skel}
%Separating type skeletons and type environment.
An obvious solution to making types available with top-level constants
is to add them as a special argument. For example, consider the list
constructors $nil$ of defined type {\it (list A)} and {\it ::} of
defined type {\it (A\ $\ra$\ (list\ A)\ $\ra$\ (list\ A))}. When these
are used in constructing particular lists, the type variable {\it A}
would be instantiated and the resulting type might be added as an
annotation as illustrated by the following terms:
\begin{tabbing}
\dquad\dquad {\it (1\ (::\ int\ $\ra$ (list\ int)\ $\ra$\ (list\ int))\ (nil\ list\ int))}\dquad and \\
\dquad\dquad {\it ("a"\ (::\ string\ $\ra$ (list\ string)\ $\ra$\ (list\ string))\ (nil\ list\ string))}.
\end{tabbing}

This solution is adequate but also contains redundant information. The
declaration of a top-level constant ensures that the type of every
occurrence of the constant in the program has a common skeleton part
that is known at compile-time and that differences arise
between the types of distinct occurrences of that constant only in the
instantiations of variables occurring in the skeleton. Thus, the type
of each
legitimate occurrence of {\it ::} must have a skeletal structure {\it
  (A\ $\ra$\ (list\ A)\ $\ra$\ (list\ A))} that is further refined by
an instantiation for {\it A}. This information can be exploited by
avoiding the construction at runtime of the skeleton that often is the
most complex part of the type. Moreover, compile-time type checking
also ensures that two different occurrences of {\it ::} share this
skeletal structure. Hence the matching of their types can be achieved
simply by matching the particular instantiations of the variable {\it
  A}.

We use the idea above by changing the annotation associated with each
top-level constant from a complete type to a list of types that
instantiate the variables that occur in its skeleton; the annotation
must now be a list of types because there could be more than one
variable appearing in the skeleton. Concretely, the representations of
the two lists considered earlier in this section now become
\begin{tabbing}
\dquad {\it (1\ (::\ [int])\ (nil\ [int]))}\dquad and\dquad {\it ("a"\ (::\ [string])\ (nil\ [string]))}.
\end{tabbing}

Based on this annotation scheme, we modify the transformation rules
$(4.1)$ and $(4.2)$ used in unification to the following:
\begin{tabbing}
\dquad\=($5.1''$) \=$\dg{\dg{\lambdadb(n, t)}{\lambdadb(m, s)}::\st{D}}{\theta}$ \= $\lra$ \=\kill
\>      ($4.1'$)\> $\dg{\dg{(c\ [\tau_1, \ldots, \tau_m]\ t_1\ \ldots\ t_n)}{(c\ [\sigma_1, \ldots, \sigma_m]\ s_1\ \ldots\ s_n)}::\st{D}}{\theta}$   \\
\>      \>\dquad\dquad$\lra$\dquad$\dg{\dg{t_1}{s_1}::\ldots ::\dg{t_n}{s_n}::\st{D}}{\phi\circ\theta}$, \\
\>      \> where $\phi$ is the most general unifier for $\{\dg{\tau_1}{\sigma_1}, \ldots, \dg{\tau_1}{\sigma_1} \}$, \\
\>      \> $n \geq 0$ and $m \geq 0$, if $c$ is a constant.\\
\>      ($4.2'$)\>$\dg{\dg{(r\ t_1\ \ldots\ t_n)}{(r\ s_1\ \ldots\ s_n)}::\st{D}}{\theta}$ $\lra$ $\dg{\dg{t_1}{s_1}::\ldots ::\dg{t_n}{s_n}::\st{D}}{\theta}$, \\
\>\> provided $r$ a de Bruijn index.
\end{tabbing}
Notice that the type annotation for a monomorphic constant, \ie, a
constant whose declared type does not contain variables, and for
a constant introduced by a generic goal is an empty list. These cases
are then uniformly handled
by rule ($4.1'$) as the case where $m = 0$.

% eliminating type variables appearing in target;
The manner in which unification problems are processed actually
allows for a further refinement of type annotations. The use of
the transformation rules in Figure~\ref{fig:hopu_simplification}
begins with a pair of atomic predicates whose heads will first have to
be verified to have the same name and whose types will have to be
matched; the matching of the types can be achieved by adding the
instantiations of the type variables in the skeleton type as explicit
arguments to the predicate and then compiling unification of these
types as we shall see shortly. Once we have checked the matching of
these types, we will then be assured that the actual argument terms
that have to be unified have the same types. Further the
unification transformation rules preserves this relationship between
the terms in
each disagreement pair. Thus, at the time when the types of
different instances of a constant are being unified in the rule
$(4.1')$, their target types are known to be
identical. This fact implies that once we have checked that the
constants heading the two terms have a common name,
there is no need to
perform unification over the instances of type variables that appear
in the target type of their type skeleton. In the case that all
the variables in the declared type also appear in the target type,
\ie, when the constant type satisfies what is known as the {\it type
preservation
property}~\cite{H89lp}, there is really no need to maintain any
type annotations with the constant. This happens to
be the case for both {\it ::} and {\it nil}, for instance, and so
all type information can be elided from lists that are implemented
using these constants. A further observation that can be made is
that when the disagreement pair under consideration consists two
constants only, their types are guaranteed to be identical already,
so that type unification can be completely eliminated in this case.
This leads to the final form of the transformation rules for
simplifying rigid-rigid pairs that we present in
Figure~\ref{fig:typed_simple}.
\begin{figure}
\begin{tabbing}
\dquad\=($5.1''$) \=$\dg{\dg{\lambdadb(n, t)}{\lambdadb(m, s)}::\st{D}}{\theta}$ \= $\lra$ \=\kill
\>      ($4.1''$)\> $\dg{\dg{(c\ [\tau_1, \ldots, \tau_m]\ t_1\ \ldots\ t_n)}{(c\ [\sigma_1, \ldots, \sigma_m]\ s_1\ \ldots\ s_n)}::\st{D}}{\theta}$   \\
\>      \>\dquad\dquad$\lra$\dquad$\dg{\dg{t_1}{s_1}::\ldots ::\dg{t_n}{s_n}::\st{D}}{\phi\circ\theta}$, \\
\>      \> where $\phi$ is the most general unifier for $\{\dg{\tau_1}{\sigma_1}, \ldots, \dg{\tau_1}{\sigma_1} \}$, \\
\>      \> $n > 0$ and $m \geq 0$, if $c$ is a constant.\\
\>      ($4.1''$)\> $\dg{\dg{c\ [\tau_1, \ldots, \tau_m]}{c\ [\sigma_1, \ldots, \sigma_m]}::\st{D}}{\theta}$ $\lra$ $\dg{D}{\theta}$, \\
\>      \> where $m \geq 0$, if $c$ is a constant.\\
\>      ($4'_2$)\>$\dg{\dg{(r\ t_1\ \ldots\ t_n)}{(r\ s_1\ \ldots\ s_n)}::\st{D}}{\theta}$ $\lra$ $\dg{\dg{t_1}{s_1}::\ldots ::\dg{t_n}{s_n}::\st{D}}{\theta}$, \\
\>\> provided $r$ a de Bruijn index.
\end{tabbing}
\caption{The type annotated simplification rules for pattern unification.}\label{fig:typed_simple}
\end{figure}

We now consider the correctness of the rules in
Figure~\ref{fig:typed_simple} relative to the original rule for
simplifying rigid-rigid pairs. We begin with the assumption that the
two terms in any disagreement pair considered by the transformation
rules for unification have the same types. It is easy to see then
that this property is preserved by the transformation rules in
Figure~\ref{fig:hopu_simplification}. The first refinement to rule
$(4)$, \ie, the one contained in the rules $(4.1)$ and $(4.2)$, is
easily seen to be
correct once we note that the identity of a constant is determined
also by its type. The correctness of the subsequent refinements to
this rule that lead to the rules in Figure~\ref{fig:typed_simple} then
relies on the facts that, given two rigid terms of equal types that
have a constant with the same name as their heads, unifying the
instantiations of the variables that appear only in the argument types
of the constant head in its two different occurrences will ensure
that the types of these occurrences are equal and, furthermore, will
make the types of the arguments in the two rigid
terms also equal. The following theorem shows this to be the case.

\begin{theorem}\label{thm:type_const}
Let $c$ be a constant that has as its type skeleton the type $\alpha$
with $n$ argument types. Further, let $\{U_1,\ldots,U_k\}$ be
the set of variables that appear in the target type of $\alpha$ and
let $\{V_1,\ldots,V_l\}$ be the variables that appear only
in the argument types of $\alpha$. Now suppose that $(c\ t_1\ \ldots\
t_n)$ and $(c\ s_1\ \ldots\ s_n)$ are two terms that have the same
type $\beta'$ and let $\alpha_1$ and $\alpha_2$ be the type of $c$ in
these two terms. Obviously, $\alpha_1$ and $\alpha_2$ are generated by
applying substitutions to $\alpha$. We assume that any variables
appearing in the ranges of these substitutions are fresh, \ie, they
have not been used previously in the computation. Let
\begin{tabbing}
\dquad\dquad $\phi_1=\{\dg{V_i}{r^1_i} \vert 1\leq i\leq l\}$\quad and\quad
$\phi_2=\{\dg{V_i}{r^2_i}\vert 1\leq i\leq l\}$
\end{tabbing}
be the restrictions of these respective substitutions to the variables
appearing only in the argument types of $\alpha$. Then $\alpha_1$ and
$\alpha_2$, the types of $c$ in the two terms, are unifiable by a
substitution $\theta$ if and only if $\theta(r^1_i)
= \theta(r^2_i)$ for $1\leq i\leq l$. Moreover, any $\theta$
satisfying this property makes the types of $t_i$ and $s_i$ identical
for $1 \leq i \leq n$.
\end{theorem}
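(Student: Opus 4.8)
The plan is to reduce the entire statement to the freeness (absence of nontrivial equations) of the first-order term algebra of types, in which $\ra$ together with all declared type constructors behave as free constructors. First I would fix notation by writing the type skeleton as $\alpha = \gamma_1 \ra \cdots \ra \gamma_n \ra \delta$, where $\gamma_1,\ldots,\gamma_n$ are the argument types and $\delta$ is the target type. By hypothesis the variables of $\delta$ are exactly $\{U_1,\ldots,U_k\}$, and the variables occurring in $\alpha$ but not in $\delta$ are exactly $\{V_1,\ldots,V_l\}$; hence the variables of $\alpha$ are partitioned as $\{U_1,\ldots,U_k\}\cup\{V_1,\ldots,V_l\}$, and each $V_i$ occurs in at least one $\gamma_j$. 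Let $\psi_1,\psi_2$ be the substitutions producing $\alpha_1=\psi_1(\alpha)$ and $\alpha_2=\psi_2(\alpha)$, so that $r^1_i=\psi_1(V_i)$ and $r^2_i=\psi_2(V_i)$.

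The first key observation I would establish is that $\psi_1$ and $\psi_2$ already agree on the target variables $U_1,\ldots,U_k$. Since $(c\ t_1\ \ldots\ t_n)$ and $(c\ s_1\ \ldots\ s_n)$ are both applied to exactly $n$ arguments and both have type $\beta'$, stripping the first $n$ argument types from $\alpha_1$ and from $\alpha_2$ yields $\psi_1(\delta)=\beta'=\psi_2(\delta)$. Because $\delta$ is built from $U_1,\ldots,U_k$ using only free type constructors, a structural induction on $\delta$ converts the equation $\psi_1(\delta)=\psi_2(\delta)$ into $\psi_1(U_i)=\psi_2(U_i)$ for every $i$; this freeness step is insensitive to whether a $U_i$ is instantiated to a sort or to a function type.

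With that in hand I would prove the biconditional using a single auxiliary lemma: for substitutions $\rho,\rho'$ and a type $\mu$, one has $\rho(\mu)=\rho'(\mu)$ if and only if $\rho$ and $\rho'$ agree on every variable occurring in $\mu$, again by structural induction and constructor freeness. For the forward direction, suppose $\theta$ unifies $\alpha_1$ and $\alpha_2$; matching the corresponding argument components gives $\theta\psi_1(\gamma_j)=\theta\psi_2(\gamma_j)$ for each $j$. Since $\theta\psi_1$ and $\theta\psi_2$ already agree on the $U$'s, the lemma forces them to agree on every variable of $\gamma_j$, in particular on each $V_i$ occurring in $\gamma_j$; as each $V_i$ occurs in some $\gamma_j$, this yields $\theta(r^1_i)=\theta(r^2_i)$ for all $i$. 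For the converse and the final ``moreover'' clause, assume $\theta(r^1_i)=\theta(r^2_i)$ for all $i$; then $\theta\psi_1$ and $\theta\psi_2$ agree on the $V$'s by hypothesis and on the $U$'s by the first observation, hence on all variables of $\alpha$ and of every $\gamma_i$. Consequently $\theta\psi_1(\alpha)=\theta\psi_2(\alpha)$ (that is, $\theta$ unifies the two types) and $\theta\psi_1(\gamma_i)=\theta\psi_2(\gamma_i)$ (that is, the types of $t_i$ and $s_i$ become identical under $\theta$).

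I expect the only delicate point to be the careful bookkeeping of the freeness argument: being explicit that $\ra$ together with the user-declared type constructors form a free first-order signature, so that equality of two instantiations of a skeleton can be pushed down to equality at each variable position, and that this holds uniformly whether target-type variables are instantiated to sorts or to function types. The freshness assumption on the ranges of $\psi_1,\psi_2$ is used only as a hygiene condition ensuring that the two occurrences share no type variables, so that $\theta$ may legitimately be regarded as a unifier of $\alpha_1$ and $\alpha_2$; no substantive use is made of it beyond this.
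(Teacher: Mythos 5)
Your proof is correct, and its skeleton coincides with the paper's: both arguments rest on the observation that the shared result type $\beta'$ forces the instantiations of the target variables $U_1,\ldots,U_k$ to coincide, and then use freeness of the type constructors (with $\ra$ treated as an ordinary binary constructor) to push equality of instantiated skeletons down to agreement at the variable positions, so that everything reduces to the $V_i$'s. The difference lies in the technical device. The paper factors the two substitutions through a common part, writing $\alpha_1 = \phi_1(\phi(\alpha))$ and $\alpha_2 = \phi_2(\phi(\alpha))$ where $\phi$ handles the $U$'s, and then runs a chain of equivalences; that factorization is precisely where the freshness hypothesis does real work, since it guarantees that no $V_i$ occurs in the range of $\phi$ (otherwise applying $\phi_i$ after $\phi$ would not reproduce the original substitution). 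You instead keep the full substitutions $\psi_1,\psi_2$ intact, \emph{derive} $\psi_1(U_j)=\psi_2(U_j)$ from $\psi_1(\delta)=\beta'=\psi_2(\delta)$ via your agreement lemma (where the paper merely asserts that the $U$-part ``can be assumed to be the same''), and then apply the same lemma in both directions of the biconditional. This buys two things: the freeness induction that the paper compresses into ``it is easy to see'' is made explicit, and your observation that freshness is mere hygiene is accurate for your argument, which never invokes it, whereas the paper's factorization genuinely depends on it. What the paper's route buys in exchange is a compact equational chain from which the criterion can be read off directly. One small nit: in your forward direction, the remark that $\theta\psi_1$ and $\theta\psi_2$ already agree on the $U$'s is superfluous, since equality on each instantiated $\gamma_j$ already yields, via the lemma, agreement on all variables occurring in $\gamma_j$.
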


\begin{proof}
Any substitution $\theta$ that unifies $\alpha_1$ and $\alpha_2$ makes
the argument types of $c$ in the two terms identical. This is the same
as saying that the types of the arguments of $c$ must be identical
under the substitution. Thus, it only remains to show that $\theta$
unifies $\alpha_1$ and $\alpha_2$ if and only if the condition
mentioned in the theorem is satisfied.

Restricting attention to only the variables appearing in $\alpha$, the
substitutions that produce $\alpha_1$ and $\alpha_2$ from $\alpha$ can
be partitioned into substitutions for the variables
$\{U_1,\ldots,U_k\}$ and the substitutions $\phi_1$ and $\phi_2$
respectively. Moreover, since the target types of $\alpha_1$ and
$\alpha_2$ are identical, the former substitution can be assumed to be
the same in both cases. Let us take it to be $\phi$. By assumption,
the domains of $\phi_1$ and $\phi_2$ do not contain any variables in
the range of $\phi$. Thus, we may write $\alpha_1$ and $\alpha_2$ as
$\phi_1(\phi(\alpha))$ and $\phi_2(\phi(\alpha))$, respectively. Now,
for any unifier $\theta$ of $\alpha_1$ and $\alpha_2$ we have the
following:
\begin{tabbing}
\dquad\dquad\=$\Longleftrightarrow$\dquad\= \kill
            \>                          \>$\theta(\alpha^1) = \theta(\alpha^2)$\\
            \>$\Longleftrightarrow$     \>$\theta(\phi_1(\phi(\alpha))) = \theta(\phi_2(\phi(\alpha)))$   \\
            \>$\Longleftrightarrow$
            \>$(\theta\circ\phi_1)(\phi(\alpha)) =
            (\theta\circ\phi_2)(\phi(\alpha))$
\end{tabbing}
Since the range of $\phi$ does not contain $V_1,\ldots,V_l$, it is
easy to see that the last condition holds if and only if $\theta \circ
\phi_1(V_i) =
\theta \circ \phi_1(V_i)$ for $1 \leq i \leq l$. But this clearly
holds if and only if $\theta(r^1_i) = \theta(r^2_i)$ for $1\leq
i\leq l$.
\end{proof}

% append example
The ideas we have described may be applied to the {\it append}
program appearing in Section~\ref{sec:types_in_computation}. In the
type skeleton of the predicate constant {\it append}, {\it (list\ A)
$\ra$ (list\ A) $\ra$ (list\ A) $\ra$ o}, the type variable $A$
appears in the argument types but not in the target. For this
reason, the binding of $A$ should be associated with the occurrences
of {\it append}. We have already seen that type annotations are
dropped from {\it ::} and {\it nil}. Thus the definition of {\it
append} is viewed as the following in our implementation.
\begin{tabbing}
\dquad{\it append [A] nil L L.}\\
\dquad{\it append [A] $($X :: L1 $)$ L2 $($X :: L3$)$ }$\pif\ ${\it append [A] L1 L2 L3}.
\end{tabbing}
Correspondingly, a query of form {\it (append (1 :: nil) (2 :: nil) L)} becomes
\begin{tabbing}
\dquad{\it append [int] (1 :: nil) (2 :: nil) L}.
\end{tabbing}

% type preservation is violated.
The final point to be noticed with regard to our type annotation
scheme is that it is capable also of dealing with the situations
where the type preservation property is violated. For example,
consider a representation of heterogenous list base on the constants
{\it null} and {\it cons} declared as the following.
\begin{tabbing}
\dquad\={\it kind}\dquad\= {\it lst}\dquad\dquad\={\it type.} \\
      \>{\it type}      \> {\it null}           \>{\it lst.} \\
      \>{\it type}      \> {\it cons}           \>{\it A $\ra$ lst $\ra$ lst.}
\end{tabbing}
The list containing the integer $1$ and the string
``list" as its elements would then be represented by the term
\begin{tabbing}
\dquad{\it (cons [int] 1 (cons [string] "list" null))}.
\end{tabbing}
Further, the unification of this term with another term representing a
list would naturally involve unifying the type arguments of {\it cons}
which, by Theorem~\ref{thm:type_const}, would achieve the effect of
checking that the relevant occurrences of {\it cons} actually are (or
can be made) identical.

\section{Reducing Type Annotations with Clauses}\label{sec:pred_type}
%Compilation function; theorem and proof
None of the type variables appearing in the type of a predicate
constant can appear in its target type since this type is $o$.
Thus it is not possible to use the ideas in the previous section to
drop the annotation corresponding to any of these variables. Despite
this, it can be observed that the bindings for some of the variables
appearing in the heads of clauses defining certain predicates
cannot have any impact on the computation. As a particular example,
consider the predicate $append$, an annotated version of whose
definition was presented at the end of the last section. Since the
annotation does not refine the declared type of $append$ in either of
these clauses, the particular type of $append$ in any well-formed goal
that has this predicate as its head will not be the cause for failure
in head unification. Moreover, the instantiation of this variable only
gets used in the annotation of a recursive call to append where,
by the same analysis, it again cannot cause failure in
unification. Thus, if we maintain an annotation for this type variable
with the clauses for $append$, we would be creating a possibly complex
type term only for the purpose of passing it on from recursive call to
recursive call.

To eliminate the redundant type associations with clause definitions,
we describe in this section a systematic process for determining
the elements of the types list associated with a predicate name that
could potentially influence a computation. For the types not in this
list we can conclude that they can be elided.

The process of determining the potentially ``needed'' elements in
the types list is organized around the full set of clauses defining
the predicate constant, including those contained by augment goals.
If the definition of a predicate can be dynamically extended, \ie,
if there are clauses for the predicate embedded in augment goals, we
assume every element in the types list of the predicate is needed:
specific bindings for
type variables appearing in the embedded clause might be determined
when the enclosing clause is used in a backchaining step, and then
these types will be needed in determining the applicability of the
clause.
For a predicate all of whose clauses appear only at
the top-level, our analysis can be more sophisticated. An element in
the types list of the predicate being defined is needed if the value
in the relevant position in the list associated with the particular
predicate constant occurrence at the clause head is anything other
than a variable: unification over this element must be attempted
during clause selection since it has the possibility of failing in
this case. Another situation in which the element is needed is if it
is a type variable that occurs elsewhere in the same types list or
in the type lists associated with a non-predicate constant that
occurs in the clause. The rationale here is that either the variable
will already have a binding that must be tested against an incoming
type or a value must be extracted into it that is used later in a
unification computation of consequence. A more subtle situation for
the variable case is when it occurs in the types list associated
with the predicate head of a clause contained by an augment goal in
the body. In this case the binding that is extracted at runtime in
the variable has an impact on the applicability of the clause that
is added and consequently is a needed one.

The only case that remains to be considered is that where a variable
element in the types list for the clause head appears also in the
types list associated with a predicate constant in a goal position
in the body, either at the top-level or, recursively, in an embedded
clause definition. It can be observed that a precise neededness
information for the head predicate can be determined only after
those of the body predicates are available. For this reason, our
analysis in this case first determines the neededness information for
the predicate constants appearing at the heads of goals in the body
and then uses this information in the analysis for
the predicate that is being defined by the clause. As an example of
how this might work,
consider the following program annotated in the style of
Section~\ref{sec:type_skel}.
\begin{tabbing}
\dquad\=\kill
\>{\it type}\dquad\={\it print}\dquad\dquad\={\it A $\ra$ o}.\\
\>{\it type}\>{\it print\_list}\>{\it (list A) $\ra$ o}.\\
\\
\>{\it print [int] X :- \{code for printing the integer value bound to X\}.}\\
\>{\it print [string] X :- \{code for printing the string value bound to X\}.}\\
\\
\>{\it printlist [A] nil.}\\
\>{\it printlist [A] (X::L) :- print [A] X, printlist [A] L.}
\end{tabbing}
In this code, {\it print} is a predicate that is defined to be
polymorphic in an {\it ad hoc} way and consequently has genuine use
for its type argument. This information can be
used to determine that it needs its type adornment and the following
analysis exposes the fact that {\it printlist} must therefore carry
its type annotation.

\begin{figure}\footnotesize
\begin{tabbing}
\dquad\={\it find\_needed($\st{P}$)}\quad $\{$ \\
      \>\dquad\= {\it init\_needed($\st{P}$)}; \\
      \>      \> repeat  \\
      \>      \>\dquad\=for each top-level non-atomic clause $C$ in $elab(\st{P})$ \\
      \>      \>      \>\dquad{\it process\_clause($C$)}; \\
      \>      \> until (the value of {\it needed} does not change) \\
      \>$\}$ \\
      \\
      \>{\it init\_needed($\st{P}$)}\quad $\{$ \\
      \>      \>for every embedded clause $C$ in $elab(\st{P})$ with $(p\ [\tau_1,\ldots,\tau_k]\ t_1\ \ldots\ t_n)$ as head\\
      \>      \>     \> for $1\leq i\leq k$ \\
      \>      \>     \>\dquad\= {\it needed[p][i] = true} \\
      \>      \>for every top-level clause $C$ in $elab(\st{P})$ with $(p\ [\tau_1,\ldots,\tau_k]\ t_1\ \ldots\ t_n)$ as head\\
      \>      \>     \> for $1\leq i\leq k$ \\
      \>      \>     \>      \> if $\tau_i$ is not a type variable \\
      \>      \>     \>      \>\dquad\={\it needed[p][i] = true;} \\
      \>      \>     \>      \> else \\
      \>      \>     \>      \>      \> if (\=($\tau_i$ occurs in $\tau_j$ for some $j$ such that $1\leq j\leq k$ and $i \neq j$) or \\
      \>      \>     \>      \>      \>     \>($\tau_i$ occurs in  the types list of a non-predicate constant in $C$) or\\
      \>      \>     \>      \>      \>     \>(\=$\tau_i$ occurs in the types list of a predicate constant appearing \\
      \>      \>     \>      \>      \>     \> \>as the head of an embedded clause in the body of $C$))\\
      \>      \>     \>      \>      \>     \> \>\dquad{\it needed[p][i] = true;}\\
      \>$\}$
\end{tabbing}
\caption{The top-level control for determining if a predicate type argument is needed.}\label{fig:pred_const_top}
\end{figure}

\begin{figure}\footnotesize
\begin{tabbing}
\dquad\={\it process\_clause($C$)} \{ \\
      \>\dquad\= let $C$ be of the form {\it (p [$\tau_1, \ldots, \tau_k$] $t_1$ $\ldots$ $t_n$ $:-$ G).}\\
      \>      \> for $1 \leq i \leq k$\\
      \>      \> \dquad\=if {\it needed[p][i]} is {\it false} \\
      \>      \>       \> \dquad\= {\it needed[p][i] = process\_body(G, $\tau_i$)}\};\\
      \>\}\\
      \\
      \>{\it process\_body(G, $\tau$)} : boolean \{\\
      \>      \>switch on the top-level structure of $G$: \\
      \>      \>\quad\= $\allx{G'}$, $\somex{G'}$:\quad\= return {\it process\_body($G'$, $\tau$)};\\
      \>      \>      \>$G_1 \conj G_2$, $G_1 \disj G_2$:\quad return ({\it process\_body($G_1$, $\tau$)} or {\it process\_body($G_2$, $\tau$)}); \\
      \>      \>      \>$D\imp G$:                \> return ({\it process\_body($G$, $\tau$)} or {\it process\_embedded\_body($D$, $\tau$)}); \\
      \>      \>      \>$A$ of the form ($q\ [\sigma_1,...,\sigma_l]\ s_1\ ...s_m$): \\
      \>      \>      \>\dquad\=if $\tau$ occurs in $\sigma_i$ for some $i$ such that $1 \leq i \leq l$ and {\it needed[q][i]} is {\it true}\\
      \>      \>      \>      \>\dquad\=return {\it true}; \\
      \>      \>      \>      \>else \\
      \>      \>      \>      \>      \>return {\it false};\\
      \>\} \\
      \\
      \> {\it process\_embedded\_body($D$, $\tau$)} : boolean \{\\
      \>      \>switch on the top-level structure of $D$: \\
      \>      \>     \>$\allx{D_1}$:\dquad\= return {\it
        process\_embedded\_body($D_1$, $\tau$)};\\
      \>      \>     \>$D_1 \land D_2$:   \> return {\it
        process\_embedded\_body($D_1$, $\tau$) or process\_embedded\_body($D_2$, $\tau$)};\\
      \>      \>     \>$G\imp A$:         \> return {\it process\_body($G$, $\tau$))};\\
      \>      \>     \>$A$:               \>return {\it false};\\
      \> \}
\end{tabbing}
\caption{The clause processing for determining if a predicate type argument is needed.}\label{fig:pred_const}
\end{figure}

The approach suggested above needs refinement to be applicable to a
context where dependencies between definitions can be iterated and
even recursive; at present, it doesn't apply directly even to the
definition of {\it append}. The solution is to use an iterative,
fixed-point computation that has as its starting point the
neededness information gathered by initially ignoring predicate
constants appearing in goal positions in the body of the clause. In
effecting this calculation relative to a given program
$\mathcal{P}$, we employ a two-dimensional global boolean array
called {\it needed} whose first index, $p$, ranges over the set of
predicate constants appearing in $\mathcal{P}$ and whose second
index, $i$, is a positive integer that ranges over the length of the
types list for $p$; this array evidently has a variable size along
its second dimension. The intention is that if, at the end of the
computation, {\it needed}$[p][i]$ is {\it false} then the $i$th
element in the types list associated with $p$ does not have an
influence on the solution of any goal $G$ from $\mathcal{P}$. We
compute the value of this array by initially setting all the
elements of {\it needed} to {\it false} and then calling the
procedure {\it find\_needed} defined in
Figure~\ref{fig:pred_const_top} and Figure~\ref{fig:pred_const} on
the program $\mathcal{P}$.

There are only finitely many elements in the {\it needed} matrix for
any program $\mathcal{P}$ and,
from this, it is clear that the invocation of {\it find\_needed} must
always terminate. Theorem~\ref{thm:pred_const} below shows that, when
it does terminate, it provides us a conservative estimate of the type
annotations that have a role to play in computation. Using this
theorem, we see that we can correctly eliminate those type variable
locations from clause and goal heads that are determined not to be
needed for any given predicate by this procedure.

\begin{theorem}\label{thm:pred_const}
Let $p$ be a predicate constant defined in $\mathcal{P}$ and let it
be the case that when ${\it find\_needed(\mathcal{P})}$ terminates,
{\it needed[p][i]} is set to {\it false}. Then the $i$th element in
the types list of $p$ has no impact on the solvability of any goal
$G$ from $\mathcal{P}$.
\end{theorem}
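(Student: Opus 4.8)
The plan is to show that erasing every type-list position that {\it find\_needed} leaves marked \emph{false} yields a program $\mathcal{P}'$ for which a goal $G$ has a derivation (in the sense of Definition~\ref{def:interpreter}, with the typed simplification rules of Figure~\ref{fig:typed_simple}) exactly when it has one from $\mathcal{P}$. Since type annotations enter computation only through the type unification carried out by rule $(4.1'')$, it suffices to argue that the pairs contributed by an elided position can be dropped from every such type-unification problem without altering either its solvability or the restriction of its solution to the type variables that are still present. A useful preliminary reduction: because {\it init\_needed} marks \emph{every} position of a predicate possessing an embedded (augment-goal) clause as needed, the hypothesis {\it needed[p][i] = false} forces $p$ to be defined solely by top-level clauses, so the more delicate interaction of types with dynamically added clauses never arises for the position under consideration.

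First I would isolate the invariant certified by the fixed point. When {\it needed[p][i]} is \emph{false} at termination, the initialization and clause-processing phases (Figures~\ref{fig:pred_const_top} and~\ref{fig:pred_const}) together guarantee that the $i$th type argument $\tau_i$ at each clause head for $p$ is a type variable that is \emph{dead}: it does not occur in any other head type argument, nor in the type list of any non-predicate constant of the clause, nor in the head of any embedded clause, and wherever it appears in a body goal $(q\ [\sigma_1,\ldots,\sigma_l]\ \ldots)$ it occurs only in positions $j$ with {\it needed[q][j] = false}. I would then prove the semantic counterpart of deadness: since $\tau_i$ is renamed to a fresh variable at each backchaining use of the clause and is instantiated only by the incoming type in that one position, unifying it against that type always succeeds, and its binding is thereafter \emph{write-only}---it is never the subject of a later consequential type unification and does not reach the answer substitution restricted to the query variables.

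With this invariant in hand I would set up a correspondence between $\mathcal{P}$-derivation sequences and $\mathcal{P}'$-derivation sequences and argue by induction on their length. Corresponding states carry identical goal sets and term structure, and their substitutions agree up to the dead type-variable bindings; each transition of Definition~\ref{def:interpreter} is matched step for step, the only difference being that a backchaining step in $\mathcal{P}$ solves one extra (trivially satisfiable, write-only) type pair that is simply absent in $\mathcal{P}'$. The essential point is the preservation of the invariant across a backchaining step into a body predicate $q$: whether $\tau_i$ remains dead after it flows into an argument of $q$ depends on the neededness of $q$'s positions, which may in turn depend recursively on $p$. This is exactly where the fixed-point hypothesis is used---the set of positions left \emph{false} is closed under the propagation performed by {\it process\_body}, so deadness is inherited by the freshly instantiated body goals and the induction goes through.

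The main obstacle I anticipate is precisely this recursive, mutually dependent structure of neededness along the call graph, which defeats any naive structural induction on clause bodies (it already fails for the self-referential definition of {\it append}). The remedy is to read the termination-at-fixed-point hypothesis as asserting that the complement of {\it needed} is a post-fixed point of the neededness operator, and to phrase the derivation invariant so that this condition is exactly what keeps the binding of $\tau_i$ write-only as computation descends into recursive calls. A secondary, more routine technicality is the case analysis verifying write-onlyness against each disjunct of {\it init\_needed} and each goal form handled by {\it process\_body} and {\it process\_embedded\_body}, together with the observation that erasing dead positions commutes with the freshening of clause variables at each use.
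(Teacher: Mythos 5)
Your plan is sound, and it reaches the theorem by a genuinely different route than the paper. The paper argues the contrapositive: assuming the $i$th position of $p$ does influence some computation, it extracts a finite dependency chain of atomic goals $A_1,\ldots,A_n$, clauses $D_2,\ldots,D_n$, and positions $j_1,\ldots,j_n$ with $j_1 = i$, along which a variable type argument is handed from clause head to body goal until it finally does consequential work at $A_n$; it then shows by backwards induction on this chain that {\it init\_needed} marks the final position {\it true} and that each iteration of {\it process\_clause}/{\it process\_body} (Figures~\ref{fig:pred_const_top} and~\ref{fig:pred_const}) propagates the {\it true} marking one step back, using the persistence of {\it true} markings to conclude {\it needed[p][i]} is {\it true}. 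You instead work directly: you erase all {\it false}-marked positions to obtain $\mathcal{P}'$, read termination of the fixed-point loop as saying that the {\it false} set is closed under neededness propagation, and establish equi-solvability through a derivation-by-derivation correspondence in which the dead variable's unification is trivially satisfiable and its binding write-only. The two arguments are essentially dual---your closure invariant is exactly the complement of the paper's chain---but they distribute the burden differently. The paper's proof is short because its semantic step (that any ``impact'' on solvability must be witnessed by such a chain) is asserted by inspection of Definitions~\ref{def:interpreter} and~\ref{def:interp} rather than proved; your proof makes that semantic content explicit via the bisimulation, at the price of the machinery you correctly identify (erasure commuting with the freshening of clause variables, substitution agreement modulo dead bindings, and the embedded-clause cases, which your opening reduction and the observation that dead variables can never flow into predicates possessing embedded clauses both handle). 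Your formulation also yields a somewhat stronger conclusion---a step-for-step correspondence with the erased program rather than mere preservation of solvability---which is closer to what is actually needed to justify the compiler eliding these positions at runtime.
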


\begin{proof}
We shall prove the contrapositive form of the theorem: if the
solvability of $G$ from $\mathcal{P}$ is dependent on the $i$th
element of the types list of a predicate $p$, then {\it needed[p][i]}
must be set to {\it true} by ${\it find\_needed(\mathcal{P})}$.

From an examination of Definitions~\ref{def:interpreter} and \ref{def:interp},
it can be seen that the $i$th element of the types list of $p$ affects
the computation resulting from $G$ relative to $\mathcal{P}$ only if there
is a sequence of atomic formulas of the form $A_1, \ldots, A_n$ with
$A_1$ having the predicate $p$ as its head and there is a sequence
$D_2, \ldots, D_n$ of clauses in the elaboration of $\mathcal{P}$
augmented with type instances of embedded clauses in $\mathcal{P}$ and a
sequence of positive numbers $j_1,\ldots,j_n$ such that
\begin{enumerate}
\item for $1 < i \leq n$, $A_{i-1}$ is an instance of the head of
  $D_i$ and $A_i$ appears as a goal in the body of that instance of
  $D_i$,
\item for $1 < i \leq n-1$, the $j_{i}$th type argument in the head of
  $D_i$ is a variable and, further, it appears in the $j_{i+1}$th type
  argument of the goal in the body of $D_i$ that has $A_i$ as its
  instance,
\item $j_1 = i$, and
\item the $j_n$th type argument of $A_n$ directly affects computation
  either because it has to be unified with a non-variable type
  argument in the head of $D_n$ or because its value imposes a
  structure requirement on some other type argument of the head or on
  the type of an embedded clause or of a constant appearing in a place
  different from the head of an atomic goal in the body.
\end{enumerate}
Letting $p= p_1, \ldots, p_n$ be the predicate heads of the goals in
the sequence $A_1,\ldots,A_n$, we claim that {\it find\_needed} will
result in {\it   needed[$p_i$][$j_i$]} being annotated to {\it true}
for $1 \leq i \leq n$. The desired conclusion follows from this.

We prove the claim by a backwards induction on the sequence.

For the base case, an inspection of the procedure
{\it init\_needed} shows that the possibilities described for the
$j_n$ type argument impacting on the computation can arise only in the
situations in which this procedure causes {\it
  needed[$p_n$][$j_n$]} to be marked {\it true}; the only slightly
tricky situation is that where $D_n$ is a type instance of an embedded
clause but this is handled by noting that {\it needed[$p_n$][$k$]} is
marked {\it true} for {\it all} $k$ in this case. Noting that once an
entry in the {\it needed} matrix has been marked {\it true}, this
marking persists through the rest of the computation of {\it
  find\_needed} then concludes the argument.

Assume now that the claim is true for the sequence
$p_{k+1},\ldots,p_n$. This means in particular that {\it
  needed[$p_{k+1}$][$j_{k+1}$]} must be marked true. If $A_k$ is an
instance of a clause in $elab(\st{P})$, then an
inspection of the procedures {\it process\_clause} and {\it
  process\_body} shows that {\it needed[$p_k$][$j_k$]} must also be
marked {\it true} during some iteration of the loop in {\it
  find\_needed}. If $A_k$ is an instance of a type instance of an
embedded clause on the other hand, then {\it init\_needed} will mark
{\it needed[$p_k$][$j_k$]} {\it true} as a special case of marking
{\it needed[$p_k$][$l$]} {\it true} for all $l$. Since a {\it true}
annotation persists in the computation of {\it find\_needed}, the
claim follows for the sequence $p_k,\ldots,p_n$, thus completing the
inductive argument.

\end{proof}

As a particular example of the use of this theorem, we observe that
the type list argument for the version of {\it append} shown in the
last section can be eliminated, thus reducing the definition of this
predicate that needs to be used at runtime to what is essentially the
untyped form. More generally, if every type argument for the head
predicate of a clause is a variable---a property called {\it type
  generality} in \cite{H89lp}---and every constant is type preserving
and there are no embedded clauses, then types can be eliminated
entirely during computation.

\section{Low-Level Support for Types and their Compilation}\label{sec:compile_type}
%The discussions in the previous sections of this chapter provide a
%type annotation scheme.
%and the corresponding compilation functions.
We now can consider the integration of the runtime processing of
types into our abstract machine based on our annotation scheme.

% low-level representation:
% first-order terms for types.
% associated with constants when necessary.
The first issue to be solved is the low-level representations of
types. As already mentioned, the types in the $\lp$ language can be
essentially viewed as first-order terms. This allows us to use
the usual encoding of first-order terms in the WAM for types in $\lp$.
In particular, a memory cell is used for each type with a tag
indicating its category as one of type variable, type constant and
type structure. For a type variable, the category tag is the only
important information to be maintained. For a type constant, a
reference to its descriptor is kept along with the tag. The
additional information with a type structure consists of a reference
to a sequence of cells in which the first corresponds to the type
constructor of a fixed arity and the subsequent ones, in the number
given by the arity, to the arguments.

The association of types with (term) constants is realized as the
following. A new class of constants is introduced to the term
representation described in Section~\ref{sec:internal_encoding} as
those with runtime type annotations. The only extra information
maintained with a constant of this sort is a reference to a type
environment that contains the elements in the types list of the
constant decided by the compiler in the way described in
Section~\ref{sec:type_skel}. The size of this type environment is
stored along with the constant descriptor.

% Abstract machine data areas.
The usages of the data areas of our abstract machine are also
extended. First, the heap and the stack are used to store types in
addition to terms. Second, the bindings of type variables are also
trailed whenever it is necessary to do so. Further, the PDL is also
used in the course of type unifications invoked in an interpretive
mode.
Finally, the data registers $A_1$ to $A_n$ can be used to refer to a
type, and an additional register {\it TS}---similar to the register
{\it S} for terms---is used for the decomposition of type structures.

% Compilation:
% type processing instructions.
Compilation treatment of type unification is also provided by our
implementation. Essentially, such computation can be encountered in
the following two situations. First, it can be the result of unifying
the types list of a predicate constant appearing as a clause head with
the types appearing appearing in an actual goal. Second, it could be
required during term unification when the types of two occurrences of
the a constant of the same name have to be checked for
compatibility. In both cases, the
elements in the types list are viewed as additional arguments of the
given constant and are handled by the conventional {\it get} and
{\it unify} instructions respectively.

We consider the compilation of the definition of {\it printlist}
provided in the previous section to illustrate  the use of type
unification instructions to handle the types argument of a predicate
constant.
The instructions generated for the clause
\begin{tabbing}
\dquad\dquad{\it printlist [A] (X::L) :- print [A] X, printlist [A] L.}
\end{tabbing}
take the following structure.
\begin{tabbing}
\dquad\dquad\= {\it get-structure-}\dquad\={\it A22,}\ \={\it A22,}\ \={it A22} \= \kill
\>{\it allocate}\>{\it 3} \\
\>{\it get\_type\_variable}\>{\it Y1,}\>{\it A2}\>\dquad\dquad\=\%\quad\= {\it Y1 = A2 = A}\\
\>{\it get\_list}\>{\it A1}                    \>            \>\>\%\> {\it A1 = (:: }\= \\
\>{\it unify\_variable}\>{\it A1}              \>            \>\>\%\>                \>{\it X}\dquad\= {\it (A1 = X)}  \\
\>{\it unify\_variable}\>{\it Y2}              \>            \>\>\%\>                \>{\it L)}     \> {\it (Y2 = L)} \\
\>{\it finish\_unify}\\
\>{\it put\_type\_value}\>{\it Y1,}\>{\it A2}   \>            \>\%\> {\it A2 = A} \\
\>{\it call\_name}\>{2,}\>{\it print}           \>            \>\%\> {\it print A1 A2} \\
\>{\it put\_type\_value}\>{\it Y1,}\>{\it A2}   \>            \>\%\> {\it A2 = A} \\
\>{\it put\_value}\>{\it Y2,}\>{\it A1}         \>            \>\%\> {\it A1 = L} \\
\>{\it deallocate}\\
\>{\it execute\_name}\>{\it print\_list}       \>            \>\>\%\> {\it print\_list A1 A2}
\end{tabbing}
The instructions  {\it get\_type\_variable}, {\it
put\_type\_variable} and {\it put\_type\_value} used here correspond
to the {\it get\_variable}, {\it put\_variable} and {\it
put\_value} instructions of the WAM. From the compiled form, it should
be evident that the variable $A$ in
the types lists of {\it print} and {\it print\_list} is treated as
an additional argument of these predicate constants.

% get_p_structure
To deal with the situation where it is necessary to compile the
matching with a constant that has a non-empty types list associated
with it, new instructions are
introduced to transit from term unification to type
unification. One of these instructions is
\begin{tabbing}
\dquad\dquad{\it get\_typed\_structure $Ai$, f, n}
\end{tabbing}
that is a variant of the {\it get\_structure} instruction that is used
for compiling a first-order application term whose constant
head has type associations. The action underlying this instruction
differs from its ``untyped'' version in the manipulation of the
constant head given by $f$. If it is in the situation where $f$
should be created on the heap, a typed constant cell is constructed
with an empty type environment and set to be referred to by the
register {\it TS} with the assumption that this type environment
will be filled in by the execution of the subsequent {\it
unify\_type} instructions in the WRITE mode. Alternatively, if the
term referred to by $Ai$ is a first-order application of head $f$,
the {\it TS} register is set to refer to its type environment, and
it is assumed that the actual unification against the types in the
environment will be carried out by the following {\it unify\_type}
instructions executed in the READ mode.

For a concrete example, assume we have a kind {\it pr} corresponding
to the set of tuple types. Further, assume the constants {\it pair}
and {\it first} are used to denote functions returning a pair
consisting of the given two arguments and returning the first
argument of the given pair respectively.
\begin{tabbing}
\dquad\dquad\=\kill
\>{\it kind}\dquad\={\it pr}\dquad\dquad\={\it type $\ra$ type $\ra$ type.}\\
\>{\it type}\>{\it pair}\>{\it A $\ra$ B $\ra$ (pr A B)}.\\
\>{\it type}\>{\it first}\>{\it (pr A B) $\ra$ A}.
\end{tabbing}
Then the compilation of the term {\it (first [B] (pair X Y))}
appearing in a clause head results in the following sequence of
instructions:
\begin{tabbing}
\dquad\dquad\= {\it get-typed-structure-}\dquad\={\it A2222,}\ \={\it A2222,}\ \={it A} \= \kill
\>{\it get\_typed\_structure}\>{\it A1,}\>{\it first,}\>{\it 1}\>\%\ \= {\it A1 = (first }\= \\
\>{\it unify\_type\_variable}\>{\it A2}\>\>\>\%\>\>                            {\it [B]}\\
\>{\it unify\_variable}\>{\it A3}\>\>\>\%\>\>                                  {\it A3)}\\
\>{\it get\_structure}\>{\it A3,}\>{\it pair,}\>{\it 2}\>\%\>{\it A3 = (pair }\= \\
\>{\it unify\_variable}\>{\it A4}\>\>\>\%\>\>{\it X}\\
\>{\it unify\_variable}\>{\it A5}\>\>\>\%\>\>{\it Y)}
\end{tabbing}
The instruction {\it unify\_type\_variable} used above corresponds to
the {\it unify\_variable} instruction in the WAM.

% get_p_constant; unify_p_constant : special L.
Typed variants of the {\it get\_constant} and {\it unify\_constant}
instructions are also included. These are specifically the following:
\begin{tabbing}
\dquad{\it get\_typed\_constant Ai, c, L}\dquad\quad and\dquad\quad {\it unify\_typed\_contant c, L}.
\end{tabbing}
As in the case of {\it get\_typed\_structure}, when the constant
$c$ is created by these instructions, a typed constant cell
associated by an empty type environment referred to by $TS$ is
constructed. However, in the situation when the term referred to by
$Ai$ is a constant of the same name, the elements in
the types lists of the two instances of $c$ must already be
identical, and so unifications over them can be safely elided. For
this purpose, an additional argument $L$ is used in these
instructions to indicate the address of the instruction immediately
following those for constructing the types list of $c$, so that
execution can jump to the location $L$ in the described situation.

% put_p_constant; set_p_constant.
Additions are also made in the {\it put} and {\it set} classes of
instructions to support the creation of typed constants in a
similar manner to that in the {\it get} and {\it unify} classes
described above. Specifically, the new instructions
\begin{tabbing}
\dquad {\it put\_typed\_constant Ai, c}\dquad\quad and\dquad\quad {\it
  set\_typed\_constant c}
\end{tabbing}
are added. Moreover, since the {\it put} and {\it set} instructions for term
creation could interleave with those in the {\it get} and {\it
unify} classes for the purpose of solving the higher-order part of
unification in an interpretive manner, the usages of {\it put\_type}
and {\it set\_type} instructions are also extended to a clause head.

% unify_envty_value: no need to be explicitly talked
%-- it differs from unify_value only in the sense of saving occurs-check in WRITE mode.
The last issue to be clarified with regard to types is about the
treatment of the types argument of a constant when it is used both
as predicate and non-predicate in a program: when appearing as a
predicate, the types argument of the constant may be further
reduced, making the number of types argument of such an occurrence
of the constant inconsistent with that of its non-predicate
occurrence. This phenomenon can be illustrated by the following
example, which defines the meta-level application of binary
functions.
\begin{tabbing}
\dquad\quad\={\it type}\dquad\={\it apply}\dquad{\it (A $\ra$ A $\ra$ A $\ra$ o) $\ra$ A $\ra$ A $\ra$ A $\ra$ o.} \\
\>{\it apply}\quad{\it P}\quad{\it Arg1}\quad{\it Arg2}\quad{\it Result}\quad$\pif$\quad{\it P Arg1 Arg2 Result}.
\end{tabbing}
Using {\it append} defined before as the ``function'' that is to be
applied, the following query can be asked
\begin{tabbing}
\dquad\quad{\it ?- apply (append [A]) (1 :: nil) (2 :: nil) R.}
\end{tabbing}
Note that the occurrence of {\it append} in the above query should
be associated with the type variable {\it A} based on our type
annotation scheme. The computation of this query requires the
solution of
\begin{tabbing}
\dquad\quad{\it solve (append [A] (1 :: nil) (2 :: nil) R)},
\end{tabbing}
in the course of which the usage of {\it append} is transformed into
the head of a goal, and is decided by the compiler as one without
type annotations.

To solve this problem, the types list of a predicate constant is
carefully organized in our implementation in the way that those
required by a predicate usage of this constant but not by a
non-predicate usage should always appear before the others, and
their lengths are also recorded along with the descriptor of the
constant. This information is then taken into account by {\it solve}
in loading the arguments of the predicate constant into registers:
the types that are not needed for the predicate usage of the
constant are simply discarded.

It is interesting to contrast the treatment of types we have described
in this chapter with the one used in {\it Version 1} of the {\it
  Teyjus} system. In the latter system, types have to be maintained
not only with constants but also with logic variables; this
is necessary because the types of such variables play a role in
determining the structures of bindings calculated in unification.
Among the different ideas that we have described in this chapter for
reducing runtime type computations, the only one that is applicable in
that setting is the one based on separating a type into a skeleton and
type environment part. This optimization is actually also employed by
{\it Version 1} of the {\it Teyjus} system. From an implementation
standpoint, that system also provides
a means for representing types and it includes suitable term and type
unification instructions to support the compilation of relevant
type-related computations.  creation and unification on them.
At a detailed level, there is a difference between the representation
used for function types in our setting and in {\it Teyjus
  Version 1}. In the latter context, it is important to be able to
access the argument and target types quickly and to determine the number of
arguments in the function type; these attributes are used in
generating unifiers. To facilitate such an examination, function types
are represented in ``un-curried'' form, \ie, a type such as
$\alpha_1\ra\ldots\ra\alpha_n\ra\beta$ is represented as a pair of a
vector containing the types $\alpha_1,\ldots,\alpha_n$ and the type
$\beta$. While this representation works well in most instances, is
can occasionally cause problems. In particular, consider the situation
when $\beta$ is a type variable. In this case, it could be
instantiated with a function type, thereby allowing the vector of
arguments to become longer. Having to consider this possibility
complicates the unification computation on types and also leads to
several special instructions to facilitate the compilation of
unification with function
types. In our setting, types do not have a role to play in term
unification and hence it is not important to be able to see the
arguments and target type of a function type in any special
way. Moreover, we expect types themselves to be infrequently accessed
and, when they are accessed we expect them to be even more
infrequently complicated function types; the latter is especially true
because there is never a need in our context to look at skeleton
types whereas this is needed in the setting of {\it Teyjus Version
  1}. Consequently, we have treated the function type constructor as
just another binary function symbol with no special properties in our
representation. This also has the benefit of further simplifying our
already simple adaptation of the instruction set underlying type
unification in {\it Teyjus Version 1}.

\chapter{An Implementation of $\lambda$Prolog}\label{chp:system}
We have, at this point, presented a complete picture of an abstract
machine and compilation model that could underlie an implementation of
$\lp$. As part of this thesis, we have undertaken such an
implementation. This implementation is referred to as {\it Version 2}
of the {\it Teyjus} system or {\it
  Teyjus Version 2} for short.\footnote{As is typical of a software
  project of significant size, {\it Teyjus Version 2} has involved
  contributions from others. However, the underlying implementation
  ideas for all parts except the treatment of modularity notions in
  $\lp$ have derived from this thesis and the bulk of the compiler and
  the abstract machine emulator is also attributable to it.}
 There are three purposes for undertaking this implementation.
  First, we have wanted to provide
researchers interested in experimenting with the specification and
prototyping capabilities of $\lp$ a concrete and efficient vehicle to
use in such endeavors. {\it Teyjus Version 2} already serves this
purpose by forming a suite together with the {\it Abella} system
\cite{gacek08ijcar} that is freely distributed by our research group
to support specification, prototyping and reasoning about
specifications \cite{gacek-abella-website, teyjus.website}. Second, we
want to evaluate the design ideas that we
have developed and for this an actual implementation is
essential. Finally, we believe that there are several language related
issues that can be experimented with relative to $\lp$ and having a
concrete implementation provides the means to do this in a more
comprehensive fashion.

 In this chapter, provide a high-level description of {\it Teyjus
   Version 2}. The particular motivations for building this system
 have imposed additional conditions on its structure. For example, the
 need to make it widely accessible has meant that we pay special
 attention to its portability to different architectures and operating
 systems. Similarly, if {\it Teyjus Version 2} is to be useful for
 evaluation and language extension experiments, then it must have an
 open and easy to modify structure as a software system. Our
 discussion below highlights the impact of such considerations in the
 overall system that we have constructed.

\section{The Language Implemented}\label{sec:general_lang}
% 1. The Language Implemented
%      Here you should discuss briefly the features of the modules
%      language super-imposed on the syntax described in the earlier
%      chapters. You do not need to write a whole lot about
%      this---that is not a contribution of this thesis---but you
%      should give examples and make clear the language structure
%      *without* talking about the implementation at all.
The $\lp$ language also encompasses a notion of modularity for organizing
large programs. The support of this feature is orthogonal to
the issues considered by this thesis, but a brief discussion of it is
nevertheless to providing a proper description of {\it Teyjus Version
  2}.

The notion of module underlying $\lp$ permits the space of names and
predicate definitions to be decomposed into smaller units. The
interface of each such unit is provided by a signature, which includes
the names, \ie, type constructors and constants, that are publicly
visible. The implementation of this interface constitutes an
accompanying module, that comprises the predicate definitions as well
as the declarations of the global and local names needed in the
module.  An important interaction between $\lp$ program units takes
place through the medium of module or signature {\it accumulation}
that allows the set of names and the definitions of predicates
available in a particular unit to be extended by using the
declarations in another unit.  The meaning of this construct can be
understood as inlining the contents of the accumulated signature or
module at the place of its occurrence, but only after affecting a
renaming of non-global names to avoid inadvertent and illegal
confusion.

\begin{figure}\footnotesize
\begin{tabbing}
\quad\quad\={\it accum\_sig}\quad\={\it term, form}\quad\quad\={\it type}. \kill
%File {\it logic\_base.sig}:\\
\>{\it sig} \> {\it logic\_base}. \\
\>{\it kind}\> {\it term, form}\> {\it type}. \\
\>\% {\it  Followed by the declarations for other logical connectives and quantifiers.}\\
\\
%File {\it logic\_vocab.sig}:\\
\>{\it sig} \> {\it logic\_vocab}. \\
\>{\it accum\_sig}\>{\it logic\_base}. \\
\>\% {\it Followed by the declarations for the constants, functions and predicates in the logic.}\\
\\
%File {\it syntax\_properties.sig}:\\
\>{\it sig} \> {\it syntax\_properties}. \\
\>{\it accum\_sig} \>{\it logic\_base}. \\
\>{\it exportdef} \> $\qf$, $\ia$ \quad$\arrxy{form}{o}$. \\
\>{\it exportdef} \> $\itm$ \>$\arrxy{term}{o}$. \\
\\[1pt]
%File {\it syntax\_properties.mod}:\\
\>{\it module}\> {\it syntax\_properties}. \\
\>\% {\it Followed by the definitions of $\qf$, $\ia$ and $\itm$.} \\
\\
%File {\it pnf.sig}:\\
\>{\it sig} \> {\it pnf}.\\
\>{\it accum\_sig} \>{\it logic\_base, logic\_vocab}.\\
\>{\it exportdef} \>{\it prenex}\> $\arrxy{form}{\arrxy{form}{o}}$.\\
\>{\it useonly}   \>{\it $\qf$, $\ia$} \quad$\arrxy{form}{o}$. \\
\>{\it useonly}   \> $\itm$ \>$\arrxy{term}{o}$. \\
\\[1pt]
%File {\it pnf.mod}:\\
\>{\it module} \> {\it pnf}. \\
\>{\it accumulate} \>{\it syntax\_properties}.\\
\>{\it accum\_sig} \>{\it logic\_base, logic\_vocab}.\\
\>{\it type} \>{\it merge}\>$\arrxy{form}{\arrxy{form}{o}}$.\\
\>\% {\it Followed by the definitions of prenex and merge.}
\end{tabbing}
\caption{A module based organization of $\qf$.}\label{fig:module_prenex}
\end{figure}

As a concrete example, we can examine how the program
{\it prenex} introduced in Section~\ref{sec:language_example}
can be organized into different modules. A conceptual consideration of
the problem to be solved leads naturally to the following four components:
\begin{enumerate}
\item a general framework for representing first-order logics, \ie,
  one that identifies the term and formula categories of expressions
  and that defines the logic connectives and quantifiers under
  consideration;
\item the specification of the vocabulary of particular versions of
  the logic, \ie, a component that identifies the sets of constant,
  function, and predicate symbols of interest;
\item a specification of syntactic properties of first-order
  formulas, such as quantifier-freeness, that are of general use in
  addition to being useful in defining the prenexing transformation;
  and
\item a specification of the particular transformations for
  calculating a prenex normal form of a given formula.
\end{enumerate}

These logical components can be mapped into the specification of the
four signature with names {\it logic\_base}, {\it logic\_vocab}, {\it
  syntax\_properties}, and {\it pnf} and the module with name {\it
  pnf} shown in Figure~\ref{fig:module_prenex}. The reading of the
displayed program
should be based on a understanding of new syntactic constructs in
the following way. First, the key word {\it sig} or {\it module}
followed by a name indicates the start of the specification of the
signature or module, respectively. Next, the accumulation of a
signature is denoted by using {\it accum\_sig} followed by the name
of the signature, whereas {\it accumulate} is used to indicate that of
the specification of the module with the name following the keyword.
Finally, {\it exportdef} and {\it useonly} combine a type declaration
with a ``boundary" description for predicate definitions: the former
indicates that {\it all} the definitions of a predicate are contained
by this module, and it is illegal to extend them in any context into
which this module is accumulated; the latter is a directive that
complements the former by specifying that the module corresponding to
the signature in which it appears (or, more directly, the module in
which it appears) may use the predicate identified but guarantees not
to extend its definition.

\section{Structure of the Implementation}\label{sec:impl_structure}
% 2. The Structure of the Implementation
%      Here you explain the three components of the system, explaining
%      the relevance of the loader and the compiler to realizing
%      separate compilation. This is also the place that you can say
%      something about the scope of code accessible to
%      optimization. This is also the place where you might introduce
%      the fact that you have picked two different languages for the
%      compiler and abstract machine and justify this choice; this
%      will set up the context for the fourth subsection.

% General:
% Compiler + simulator
% loader -- distribute bytecode
% Linker -- seperate compilation, optimization -- eg, type: useonly
The abstract machine is realized in our implementation through a
software emulator. Thus, the overall software system has at least two
components: a compiler and an emulator. We have also chosen to channel
the interaction between the compiler and the emulator through a
bytecode file that is written to and read from memory. The support of
reading this file into the emulator so as to set the emulator in a state
where it is ready to respond to user provided queries is
realized by a third system called a loader.

An important issue to consider is what constitutes the
appropriate unit for compilation. One simple possibility, in the
context of the module system described in the previous section, is for
the compiler to inline all the accumulated signatures and modules
directly into the module being processed and to produce a bytecode
file from this (large) collection. This is, in fact, the approach used
in {\it Version 1} of the {\it Teyjus} system. However, this approach
does not provide true support of modularity, particular aspects of
which are the ability to compile and test modules separately and to
reuse the results of compilation of common modules in different
systems. In light of this fact, {\it Teyjus Version 2} supports the
ability to compile component modules separately and to realize the
combination inherent in accumulation through a separate linking
phase. Consequently, the overall system includes a fourth
component. This is a linker that has the task of looking at a
collection of (partial) bytecode files and producing from this one
complete bytecode file based on the relevant accumulation
information also contained in the starting files.

Separate compilation generally introduces difficulties in performing
global compiler optimizations because the visibility of code is
limited. In our context, at least one of the optimizations that is
directly impacted is the reduction of runtime type
associations with predicate occurrences at the heads of clauses and at
the heads of goals: the analysis discussed in
Section~\ref{sec:pred_type} for this purpose requires knowledge of the
the complete set of defining clauses for relevant predicates, but this
is not possible to have if the definition could be extended by the
code in an accumulated module that is not being looked at during
compilation of the parent module. However, the {\it exportdef}
annotation discussed in the previous section provides a
partial solution here. In particular this annotation tells the
compiler that the complete set is in fact available in relevant cases
so that it can still perform the optimization in question.

The primary function of the compiler is to translate $\lp$ modules
into bytecode form. However, it has the capability to examine $\lp$
syntax relative to the name declarations contained in a module and
this functionality is useful in one more place: in parsing user
queries. Conceptually this process works in the following way in {\it
  Teyjus Version 2}. When requested to set up for queries against the
declarations in a particular module, the top-level interface invokes
the loader to prime the emulator with the declarations in that
module. Simultaneously, the loader creates relevant symbol tables
for the compiler to use in
parsing queries relative to the vocabulary provided by the
module. Once the loading is complete, an interaction mode is
entered. In this mode, each time a user provides a query, the compiler
is invoked to parse it. The resulting structure is then returned to
the top-level system which wraps it within the {\it solve} predicate
described in Section~\ref{sec:misc} and then passes this along to the
emulator which proceeds to solve it. A fine point to note about this
scheme is that it means that top-level queries are treated in an
interpreted manner. It is also possible to compile the structures
resulting from parsing queries into bytecode form. A realization along
these lines actually has advantages over the interpretation based one
but its  development is left to future work.

We conclude this section with a discussion of two considerations that
have impacted the form of the actual implementation.

The first consideration is that we have wanted an implementation that
is easy to read and modify. This means that it is best to use a
genuinely high-level language---such as a functional or a logic
programming language---wherever this choice does not impact adversely
on efficiency. This condition holds for all those parts of the system
in which closeness to the underlying machine architecture does not
dictate the quality of performance. Specific parts that satisfy this
requirement are the compiler and the top-level interface. These
components have therefore been developed in the functional language
{\it OCaml}. On the other hand, the efficiency of the emulator does
depend on having access to aspects of the machine architecture. For
this reason the language {\it C} has been chosen for implementing this
component.\footnote{The linker and loader might well have been
  implemented in {\it OCaml} but they have in fact been implemented in
  {\it C}.} The decision to use different languages for different
components brings certain complexities to the overall
implementation. For example, the top-level interface has to rely on
the functionality of both the compiler and the emulator and hence
language inter-operability is a concern. Similarly, knowledge of
aspects such as the set of machine instructions needs to be shared
between the compiler and the emulator and such sharing should be
explicit for the ease of modification. We discuss the way
in which we have dealt with such complexities in
Section~\ref{sec:OCaml_C}.

The second consideration is the portability of our
system to different actual machine architectures. Although the {\it OCaml}
implementation naturally relieves this burden from the compiler
development, special attention is still needed on the C based
realization of the emulator to meet this goal: the low-level data
structures should be designed in a way that is not particularized to
any actual machine architecture. This topic is discussed in details
in Section~\ref{sec:term_reps_deployment}.

An interesting statistic is the sizes of the different components of
our system. The compiler comprise roughly 20,000 lines of {\it OCaml}
code whereas the emulator, the linker and the loader comprise about
26,000, 4,500 and 2,000 lines of {\it C} code, respectively.

\section{Term Representation and Portability}\label{sec:term_reps_deployment}
% 3. Term Representation and Portability
%      This is where you explain the issues related to portability,
%      the way you have decided to realize this while substantially
%      preserving the kind of mapping you want and the remaining
%      tradeoffs between portability and efficiency. The point to make
%      here is that one can decide to work explicitly at the level of
%      bits using macros to realize the extraction of particular
%      information about terms. The difficulty with this is that your
%      implementation get particularized to at least a word size. The
%      other possibility is to ignore actual machine structure and to
%      simply use record structures. Your choice has been to use
%      record structures but to also exploit information about how C
%      compilers map these to computer words to realize a close to
%      desirable mapping. You might explain this fact by showing some
%      C declarations and indicating the resulting mapping. You should
%      discuss at this point the fact that using this kind of mapping
%      means that you cannot hardwire extraction functions thereby
%      adding some overhead. Eventually you might want to make the
%      claim that the implementation is modular enough so that if one
%      wants to change the representation to a hardwired form it is
%      still possible, i.e. there is a clear interface that makes this
%      possible. (I am assuming there is such an interface, else we
%      should talk about why not.)

% Portability vs Efficiency; bit pattern + macros; structures
Portability is an important property of our system, the
consideration of which directly affects the design of the C based
emulator, in particular the realization of term and type
representations introduced in Section~\ref{sec:internal_encoding}
and Section~\ref{sec:compile_type} respectively. An conventional C
approach to realizing such encodings is to give explicit control
over the layout of the corresponding memory units by specifying bit
patterns within a word. For example, in {\it Version 1} of the {\it
  Teyjus} system that assumes that words are 32-bits long, the higher-end
4 bits of a word are used to record the category tags of terms,
additional numeric properties such as the universe indexes of
logical variables and constants are encoded by 10 bits, and the
addresses of subterms take the lower 28 bits of a word.
However, the hard-coded
bit patterns make the implementation heavily depend on the
underlying machine architecture: {\it Teyjus Version 1},
for instance, cannot run on 64-bit machines.

A natural way to eliminate this sort of hardware dependency is
to use a high-level data structure provided by the implementation
language to fulfill the encoding task, so that the decision of
actual machine memory layout can be decided by the underlying
compiler. In the context of {\it C}, structures are an encoding facility
of this sort. Based on the understanding of the alignment
rules of {\it C} compiler, the structure types corresponding to terms
and types can be
designed into a form from which the actual memory deployment closely
resemble that of the bit pattern method. For instance, a field of
unsigned 8 bit integer type can be used to encode the category tag
of terms, and by positioning this field as the first in the
structure declarations, the first 8 bits of an encoded term can be
controlled to always contain the category information; fields of
suitable types can be used for the additional information of each
term category and among them, addresses can be directly encoded as {\it C}
pointers; finally, a generic term can be used to control the minimum
size of terms so that they are always aligned to the word boundary
of the underlying machine architecture, as well as to indicate the
position of the category tag. The above discussion can be visualized
through the declarations and the corresponding space allocations shown
in Figure~\ref{fig:layout}.

\begin{figure}
\includegraphics{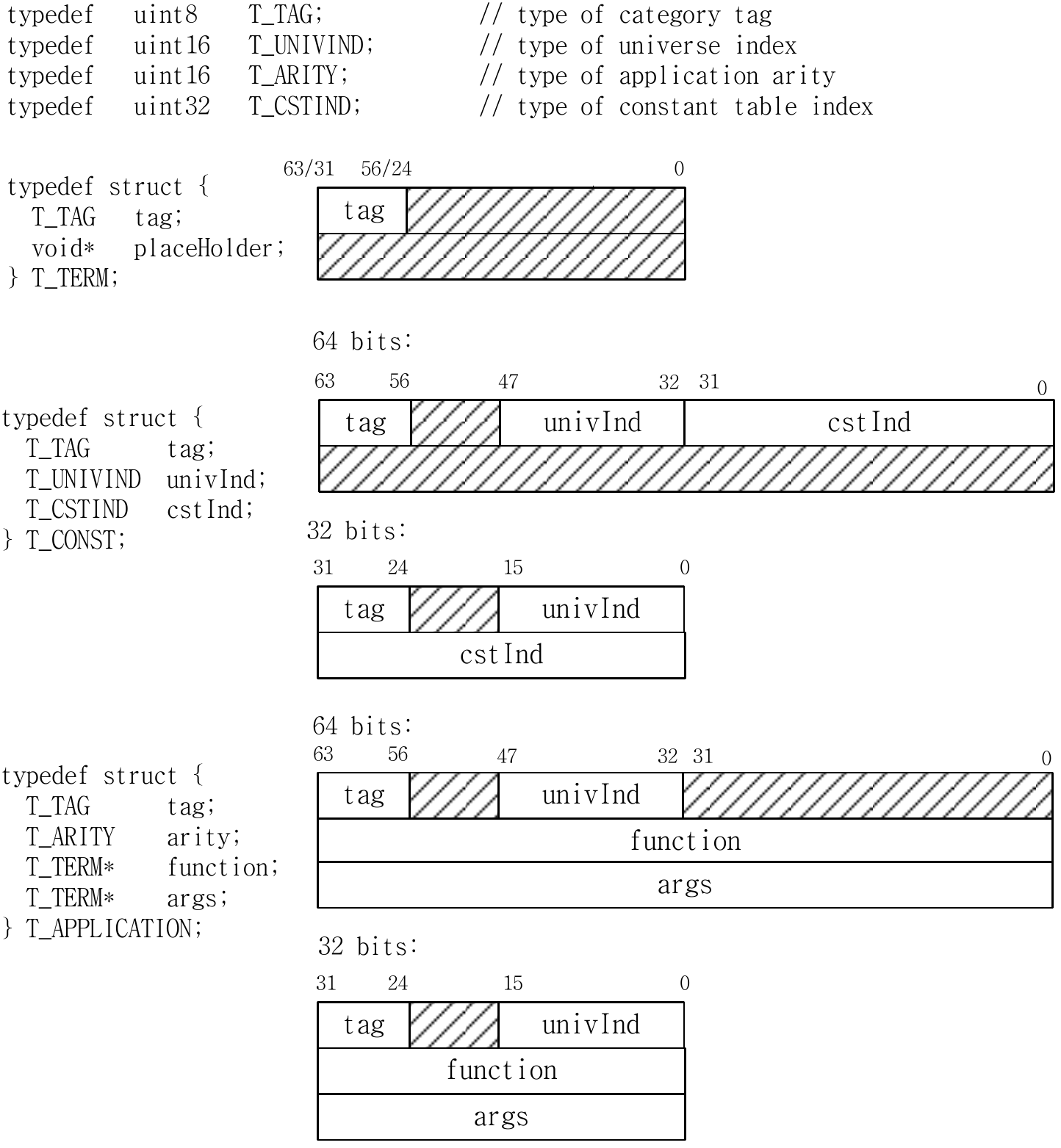}
\caption{Examples of data layout on actual machine architectures.}\label{fig:layout}
\end{figure}

% Disadvantage on using structures.
The utilization of structures in {\it C} for data encodings eliminates the
dependency from our system on the word lengths of actual machine
architectures. However, this method may have undesired impacts on
the performance of the emulator. First of all, it can be observed
that the alignment of structure fields carried out by {\it C} compilers
can potentially result in gaps between useful information within a word
and makes the encoding less compacted compared with the bit-pattern
based one. Second, the recognition and decomposition of terms now
have more overhead: as opposed to simple bitwise operations, these
computations now require access to structure fields, which thereby
obtain more complicated formation and consume more CPU cycles.

% Modularity design -- have the potential to switch to low-level controls.
The structure based approach is adopted in the realization of data
encoding in {\it Teyjus Version 2}. This approach has made our system
portable to different machine architectures, but could potentially
incur additional performance costs. Based on the primary usage of
our system, which is to serve as an experimental framework for
assessing the efficacy of implementation ideas of $\lp$, we argue
that system portability is a more important concern compared
with the possible efficiency improvement that can be obtained from
code tuning at the software development level. Moreover, it should
also be observed that the conceptual design of term and type
representations in our abstract machine does not prohibit the bit
pattern approach. When the system is used in a performance critical
context, this approach can still be adopted to hard-wire the system
to a particular machine architecture. In our software
implementation, the representations of data are encapsulated into a
separate module. The adjustments needed for changing their actual
realization is thus limited to this module and can be made without
affecting its interface and usage.

\section{Issues Related to Multiple Implementation Languages}\label{sec:OCaml_C}
% 4. Issues Related to Multiple Implementation Languages
%      There are two issues to be discussed here. One of them is
%      simply the fact that functions in the two languages have to
%      communicate. In this discussion you should isolate the points
%      where there needs to be a communication and then explain how
%      you manage this. The second issue is that there has to be a
%      shared view of certain components such as opcodes, etc, and for
%      consistency purposes these must not be replicated. Here yous
%      should explain concretely what these kinds of data are and then
%      describe your scheme of ensuring consistency: you have designed
%      one common presentation language from which the required C code
%      and OCaml code are automatically generated.

As discussed in Section~\ref{sec:impl_structure}, driven by the
flexibility requirement, our compiler is realized in a high-level
language that differs from the one chosen for the other system
components. This discrepancy, however, poses implementation
challenges with regard to realizing the communication between the
compiler and the emulator and maintaining the integrity of the
software. Discussions in this section are focused on these
difficulties and our solutions to them.

The interaction between the compiler and emulator can occur in two
ways. First, the compilation result of a program has to be
eventually interpreted by the emulator. This sort of communication
is carried out indirectly through bytecode files and is consequently
not affected by the particular language choices of the system
components. However, a direct interaction
between the compiler and the emulator is needed for handling
top-level queries as discussed in Section~\ref{sec:impl_structure}.
Specifically, the runtime execution should pass from the
emulator to the compiler once a query is asked at the top-level;
after performing necessary parsing work, the compiler should
pass the result back and let the emulator take over the control
again. The representation of the query differs in the settings in
which it is needed---it is denoted as an abstract syntax tree
during compilation and should be characterized by the low-level
abstract machine data encoding in the emulator---and consequently
requires a translation from the former to the latter. A
difficulty is then introduced in realizing this process by the
choice of different implementation languages for the compiler and
emulator: the translation has to be carried across the language
boundary between {\it OCaml} and {\it C}.

One way to solve the above problem is to take advantage of the
capability {\it OCaml} has of directly manipulating the memory of {\it
  C}: with an understanding
on the emulator's data representation, the compiler can take the
full control of constructing the relevant terms and types on the
emulator's heap. However, a closer examination reveals that this
choice is not desirable. First, from the
perspective of modularity, this method unnecessarily couples the
implementation of the compiler with that of the emulator by an
agreement on the format of the emulator's data representation.
Second, it also complicates the actual software implementation by
requiring special effort to protect the segment of memory that the
compiler writes to from the garbage collector for {\it OCaml}. For
these reasons, an alternative approach is used in our implementation.
Under this scheme, the task of constructing an emulator term is
separated into smaller steps that are carried out both by the compiler
and the emulator: the compiler is responsible to provide a basic
guidance on term creation with simple information such as the term's
category
and additional numeric properties, for instance the universe index;
the actual deployment of the term into the emulator's memory and the
setting up of references to subcomponents in the graphical
representation of the term is locally maintained by the
emulator. Specifically, for each kind of
term, an {\it OCaml} function is implemented that invokes a corresponding
term creation routine of the emulator (in {\it C}). The parameter passing
between these functions is limited to data of simple types such as
integers. By recursing through the abstract syntax representation of
the term from the top-level, the compiler issues term creation
requests for each subterm through the described {\it OCaml} functions,
which eventually dispatch to the emulator's term construction
routines. When invoked, the emulator's term construction functions
make the decision on the the format of the subterm being created and
connect it to its parent according to the location information
internally maintained on a temporary stack. The actual realization
of the described scheme is based on the foreign language interface
provided by {\it OCaml}. Invocations between {\it OCaml} and {\it C} functions in both
directions are used.

In addition to the interaction issue discussed above, the choice of
multiple implementation languages causes another problem with regard
to maintaining the integrity of our software realization. In
particular, the problem arises in the encoding of concepts that
should be commonly aware by the compiler and other parts of the
system. An example of this sort is the abstract machine
instructions, which are pervasive to all the system components: they
are generated by the compiler, processed by the linker and loader
and eventually interpreted by the emulator. Consequently, a format
for their encoding should be agreed on by the entire system. Specific
information of this sort include the op-code, the number of
arguments and the representations of each kind of argument, such as
the register numbers, the environment frame offsets and the
references to other instructions. The shared view on such data
naturally requires two versions of encoding on them, which, of
course, can be simply hard coded in {\it OCaml} and {\it C} respectively.
However, the duplication of information that is conceptually the same
introduces undesirable costs in maintaining their consistency through
modifications, which could be frequently required in the course of
exploiting new design ideas of our language. To avoid this cost,
an approach based on
automatic code generation is adopted in our implementation.
Specifically, a simple high-level language is designed for the
specification of the conceptual format of instructions with constructs
that can be used to describe the relevant properties of interest. A
translator is then provided, which parses a file written in this
language and automatically generates corresponding {\it OCaml} and {\it C}
source code at the time that the system is installed. As a result, any
addition or modification of the set of instructions or their internal
structures can be made uniformly in the specification file and the
overhead of ensuring consistency between the {\it OCaml} and {\it C}
versions of encoding is eliminated from the software developers.

The issue discussed above is also pertinent to the encoding
of built-in constants (such as the set of logical constants) and type
constructors. Information about these constants such as the names,
arity, and types has to be known both to the compiler (for the purpose
of parsing and code generation) and to the emulator. A similar
translation approach has been adopted in this context as well, thereby
eliminates the replication of such information.

\chapter{Evaluating the Design}\label{chp:expr}
Our focus in this chapter is on assessing the benefits of the ideas we
have described thus far with regard to implementing
$\lambda$Prolog. There is a qualitative aspect to the improvements
these ideas bring about: they have considerably simplified the
structure of the abstract machine and have, in fact, made it possible
to think of using this machine as the target of compilation for other
higher-order logic based languages. However, the impact along this
dimensions can only be gauged indirectly, through factors such as the
relative ease with which the {\it Teyjus Version 2} system has been
developed, the extent to which this implementation is error-free and
the uses that are eventually made of the abstract machine in
implementing other related languages. A more direct and quantifiable
effect of our ideas is on system performance. The availability of two
different implementations makes it possible for us to make comparisons
and to thereby obtain an assessment as we do here.

The key choice underlying this thesis is to orient an
implementation of $\lambda$Prolog around higher-order pattern
unification instead of using the more general procedure described by
Huet. One effect of this choice is to reduce the role of types at
runtime: these types are now only needed for checking the identity of
constants that have the same name. We have also described ideas for
reducing the amount of type information that has to be dynamically
processed even further. One of our goals now is to understand the impact
of these ideas on real programs. We have constructed {\it Teyjus
  Version 2} so that we can turn on and off these type-oriented
optimizations relatively easily. We describe a set of experiments and
the conclusions we draw from doing this in this chapter.

The most interesting aspect is, however, a head-to-head comparison
with {\it Teyjus Version 1} towards gaining an understanding of the
impact on overall performance of the different choices. Some care is
needed, however, in making such a comparison. Certain choices have
been made in the implementation of {\it Teyjus Version 2} that have
the virtues of enhancing its portability and openness at the expense
of performance. A balanced contrasting of the effect of the choice in
unification procedure must factor out the impact of this auxiliary
decision. Towards this end we try first to assess the differences
between the two systems over applications that do not call on
higher-order unification and the mechanisms used to support this and
then use this information to properly understand the differences on
real higher-order applications of the language.

The rest of this chapter is structured as follows. In the first
section, we describe experiments conducted towards understanding the
impact of the choice we have made in low-level term representation. In
Section~\ref{sec:expr_type} we study the benefits of the optimizations
in the treatment of types. Section~\ref{sec:expr_unif} is devoted to a
comparison of the two different versions of {\it Teyjus} on
higher-order applications. Section~\ref{sec:expr_conclusion} concludes
the chapter with a summary of the results of our studies.

Our study in this chapter is based on actual $\lambda$Prolog programs
whose functionality and characteristics are described as relevant. The
code for all these programs can be obtained from the {\it Teyjus} web
site at \verb+http://code.google.com/p/teyjus/+.

\section{The Impact of Low-Level Term Representation}\label{sec:expr_data_reps}

The earlier version of the {\it Teyjus} system uses a highly optimized
form of representation for terms. In particular, that implementation
assumes a 32 bit word and hard-codes the use of particular parts of
such a word to encode specific components of the information contained
in the term. This knowledge is then used to define bit patterns to
extract the relevant information. Finally the use of these bit
patterns is realized through macros in the C code implementing higher
level functionality. While such a low-level encoding has performance
benefits, it also has drawbacks at the level of portability. For
example, {\it Teyjus Version 1} can be run only on 32 bit
architectures and hence cannot take benefit of newer, faster 64 bit
machines that also have larger address spaces. As another
example, since references are encoded using only a fragment of a 32
bit word, the system has to rely on special operating system
capabilities for mapping the heap onto a specific segment of a
larger memory area. A result of this is that the system cannot be
ported to a platform that is running an operating system that does not
provide such mapping capabilities.

Portability has been a major concern within {\it Teyjus Version
  2}. For this reason we have avoided bit patterns and have instead
  relied on using C based structures and a general understanding of
  how a typical C compiler maps such structures onto memory. This has
  also meant using a more expensive structure based decomposition in
  accessing relevant components of a term. Finally, to facilitate
  debugging and code clarity and modifiability, we have used function
  calls rather than macros to realize access to data fields. All of
  these choices impact on performance but none of them are essential
  to the fundamental issue of how we treat higher-order unification;
  our implementation has, in fact, been modularized so that our
  present choices concerning the low-level treatment of terms can be
  replaced by ones closer to those used in {\it Teyjus Version 1} for
  fixed architectures. Thus to get a more accurate assessment of the
  performance impact of our main ideas, it is necessary to factor out
  the effect of this auxiliary aspect.

To assess the impact of the differences in low-level representations,
a comparison was made of the performance of the two versions of the
{\it Teyjus} system on a set of $\lambda$Prolog programs. Care had to
be exercised in choosing the programs for this study. Obviously, these
programs could not be ones that also exercised higher-order aspects of
the language; it is impossible to separate out the differences arising
out of term representation choices and those resulting from the
treatment of high-order unification relative to such programs.
However, first-order programs do provide a suitable means for the
desired comparison. First-order unification obtains the same kind of
compilation and interpretive treatments in the processing model
underlying both of the systems. Moreover, it is a reasonable
hypothesis that the low-level representation choices affect
first-order and higher-order programs in a similar way. Another aspect
that we wished to factor out is the result of optimizing the treatment
of types in {\it Teyjus Version 2}. However, this was easier to do: we
needed simply to turn off the type optimizations in the newer
implementation.

The programs that we chose to use for our study based on the above
considerations are then the ones described below.

\paragraph*{Mono Naive Rev}

This program implements naive reverse on monomorphic lists that are
represented using user-defined constructors. Specifically, a new sort
{\it i} is identified, two new constants {\it mcons} of type {\it i$\
  \ra\ ($list i$)\ \ra\ ($list i$)$} and {\it mnil} of type {\it list
  i} are defined, and the predicates {\it rev} of type {\it $($list
  i$)\ \ra\ ($list i$)\ \ra\ $o} and {\it append} of type {\it $($list
  i$)\ \ra\ ($list i$)\ \ra\ ($list i$)\ \ra\ $o} are defined through
the following set of clauses:
\begin{tabbing}
\dquad{\it rev}\quad{\it mnil}\quad{\it mnil.}\\
\dquad{\it rev}\quad{\it $($mcons X L1$)$}\quad{\it L2}\quad$\pif$\\
\dquad\dquad\dquad{\it rev}\quad{\it L1}\quad{\it L3,}\quad{\it
  append}\quad{\it L3}\quad{\it $($mcons X mnil$)$}\quad{\it
  L2.}\\[5pt]
\dquad{\it append}\quad{\it mnil}\quad{\it mnil.}\quad{\it mnil}.\\
\dquad{\it append}\quad{\it $($mcons X L1$)$}\quad{\it L2}\quad{\it
  $($mcons X L3$)$}\quad$\pif$\quad {\it append}\quad{\it L1}\quad{\it
  L2}\quad{\it L3.}
\end{tabbing}
The actual testing consisted of invoking {\it rev} 30,000 times on a
collection of lists.

\paragraph*{Poly Naive Rev}

This program is a polymorphic version of the
naive reverse described above. In particular, the types of the
predicates {\it rev} and {\it append} in this instance are
\begin{tabbing}
\dquad{\it $($list A$)\ \ra\ ($list A$)\ \ra\ $o}\quad and\quad {\it $($list A$)\ \ra\ ($list A$)\ \ra\ ($list A$)\ \ra\ $o},
\end{tabbing}
An important point concerning this test case is that lists were represented
using user defined constructors called {\it pnil} and {\it pcons}
rather than the system defined list constructors {\it nil} and {\it
  ::}. The actual testing consisted of invoking {\it rev} 30,000 times on a
collection of lists.

\paragraph*{Mono Linear Rev}

This program implements tail recursive reverse on monomorphic lists.
Lists are represented the same way as in {\it Mono
Naive Rev}. The predicate {\it rev} is implemented by the following
code.
\begin{tabbing}
\dquad\={\it type}\dquad\={\it rev}\dquad\quad\={\it $($list i$)\ra($list i$)\ra$o.}\\
\>{\it rev}\quad{\it L1}\quad{\it L2}\quad$\pif$\quad{\it rev\_aux}\quad{\it L1}\quad{\it mnil}\quad{\it L2.}\\ \\
\>{\it type}\>{\it rev\_aux}\>{\it $($list i$)\ra($list i$)\ra($list i$)\ra$o.}\\
\>{\it rev\_aux}\quad{\it mnil}\quad{\it L2}\quad{\it L3.}\\
\>{\it rev\_aux}\quad{\it $($mcons X L1$)$}\quad{\it L2}\quad{\it L3}\quad$\pif$\\
\>\dquad\dquad\dquad{\it rev\_aux}\quad{\it L1}\quad{\it $($mcons X L2$)$}\quad{\it L3.}
\end{tabbing}
Testing in this case consisted of running {\it rev} 100,000 times on a
10 element list.

\paragraph*{Poly Linear Rev}
This program implements tail recursive reverse on polymorphic lists.
The predicates {\it rev} and {\it rev\_aux} have the polymorphic
types
\begin{tabbing}
\dquad{\it $($list A$)\ \ra\ ($list A$)\ \ra\ $o}\quad and\quad {\it $($list A$)\ \ra\ ($list A$)\ \ra\ ($list A$)\ \ra\ $o},
\end{tabbing}
and similar definitions to those in {\it Mono Linear Rev}. As in {\it
  Poly Naive Rev}, lists are represented in this example via user
  defined constructors. Testing consisted of running {\it rev} 100,000
  times on a  10 element list.

\paragraph*{Poly Naive Rev*}

This test case was like {\it Poly Naive Rev} except this time the
builtin representation of lists was used.

\paragraph*{Poly Linear Rev*}

This test case was like {\it Poly Linear Rev} except this time the
builtin representation of lists was used.

\paragraph*{Red Black Tree}

This program implements a polymorphic version of red-black trees. A
kind {\it btreety} of arity one is defined to categorize the
family of the trees. A type {\it color} with the two constants {\it
  red} and {\it black} is also defined. The leafs and nodes in a tree
are encoded by
constants {\it empty} and {\it node} of types
\begin{tabbing}
\dquad{\it btreety A}\quad and\\
\dquad{\it color $\ra$ A $\ra\ ($btreety A$)\ \ra\ ($btreety A$)\ \ra\ ($btreety A$)$}.
\end{tabbing}
The arguments provided to {\it node} represent the color, the left
subtree and the right subtree.
Predicates {\it add} and {\it memb} are defined to implement the
insertion and search operations respectively. Their types are
declared as
\begin{tabbing}
\dquad{\it A $\ \ra\ ($btreety A$)\ \ra\ ($btreety A$)\ \ra\ $o}\quad and \\
\dquad{\it A $\ \ra\ ($btreety A$)\ \ra\ $o}.
\end{tabbing}
The arguments of {\it add} correspond to the value to be inserted,
the original tree and the tree after insertion, respectively. The
predicate {\it memb} takes as its arguments a value and a tree that is
to be searched for this value. The testing consisted of creating a
tree of 1500 integer values and then searching for each of the values
in the tree.

\paragraph*{First-order Copy}
In this test, the program in Figure~\ref{fig:copy} for copying
$\lambda$-terms was used. However, the invocation of {\it copy} were
all restricted to first-order structures, \ie, those constructed
from only the constants {\it a} and {\it app}. Testing in this case
consisted of repeating 100,000 times the solution of the query
{\it $($copy t R$)$}, where $t$ is a first-order term of depth 4.

\begin{table}
\begin{tabular}{|l|c|c|c|} \hline\hline
{\it } & {\it\ Teyjus version 1\ } & {\it\ Teyjus version 2\ } & {\it\ Degradation\ } \\ \hline
{\it Mono Naive Rev}       & 1.51 secs  & 2.27 secs  & 50.3\%   \\
{\it Poly Naive Rev}       & 1.81 secs  & 2.80 secs  & 54.7\%   \\
{\it Mono Linear Rev}      & 1.18 secs  & 1.81 secs  & 53.4\%   \\
{\it Poly Linear Rev}      & 1.47 secs  & 2.24 secs  & 52.3\%   \\
{\it Red Black Tree}       & 2.7  secs  & 4.14 secs  & 53.3\%   \\
{\it First-order copy}     & 1.11 secs  & 1.73 secs  & 55.9\%   \\
{\it Poly Naive Rev*}      & 1.30 secs  & 1.65 secs  & 26.9\%   \\
{\it Poly Linear Rev*}     & 1.05 secs  & 1.31 secs  & 24.8\%   \\ \hline \hline
\end{tabular}
\caption{Timing comparisons on first-order programs.}\label{table:first_order}
\end{table}

Table~\ref{table:first_order} presents the results of running the test
cases described with
\begin{tabbing}
\dquad\quad{\it Teyjus Version 1 (v 1.0-b32)} and \\
\dquad\quad{\it Teyjus Version 2 (v 2.0-b2)} without type optimizations
\end{tabbing}
on a 2.6GHz 32-bit i686 processor. The numbers in the middle two
columns of the table represent the CPU time taken by the execution
of the programs. The last column of numbers denote
the performance difference between the two versions of systems,
which are calculated by the following formula.
\[\frac{execution\ time\ in\ Teyjus\ Version\ 2\ -\ execution\ time\ in\ Teyjus\ Version\ 1}{execution\ time\ in\ Teyjus\ Version\ 1}\]

The first six rows of the table indicate a fairly consistent
degradation arising out of the low-level representation used for terms
in the newer {\it Teyjus} system: averaged across these examples, the
degradation is about 53.3\%. The degradation is substantially less for
the last two cases. This result actually accords with
expectations. The builtin constructors {\it ::} and {\it nil} are
treated in a special way in our implementation model. This treatment
builds in the type optimizations for these constructors in a way that
is infeasible to turn off. Thus, in these cases the actual degradation
due to the unoptimized low-level representation of terms is partially
offset by improvements in the way types are handled. In interpreting
the results of this section, therefore, we shall disregard the data
from the last two rows in Figure~\ref{table:first_order}.

\section{Impact of Type Optimizations}\label{sec:expr_type}
As discussed in Chapter~\ref{chp:types}, there are two ways in which
the type associations that persist into execution are reduced in {\it
  Teyjus Version 2}.
First, the list of types associated with each constant occurring in
terms is reduced by eliminating instantiations for variables that
appear in the target type of the constant. Second, an analysis is
carried out over clause definitions to identify those variables in the
type of the predicates they define that have no effect on runtime
computations; it is redundant to carry along bindings for these
variables and hence these are eliminated.

A measurement of the impact of the two different levels of
types-related optimizations was conducted by turning on and off the
procedures in the compiler that effect the optimizations.
One set of programs over which testing might then be done consists of
those that are genuinely polymorphic in nature.
The test cases {\it  Poly Naive Rev}, {\it Poly
Linear Rev} and {\it Red Black Tree} introduced in the previous
section can be used as examples of this class. Another set of programs
that would be useful to test would be higher-order ones that represent
typical applications of $\lp$. The following programs were included as
representative of this class.

\paragraph*{Typeinf}
This program infers principal type schemes for ML-like
programs~\cite{Liang97let-polymorphismand}. Inside it, the
representation of the object-level types treats quantification
explicitly and utilizes abstractions to capture the binding effect.
A type inference algorithm similar to that in~\cite{DM82POPL} was
used, and the computation is specified in the $\Ll$-style.

\paragraph*{Hcinterp}
This program implements an interpreter for a language based on
first-order Horn clauses~\cite{NM98Handbook}. The declarations in
Figure~\ref{fig:prenex_formula} describe a signature for representing
such formulas. A predicate {\it interp} of type {\it form $\ \ra \
  $form$\ \ra \ $o} is defined for determining whether a given goal
formula is derivable from a conjunction of definite clauses. This
program needs higher-order features because object-level
quantification is encoded within it through abstractions. An
interesting aspect of this program in that it does not statically
fit within the higher-order pattern fragment. However, the standard
usage of this program ensures that it is dynamically in this fragment,
\ie, it is only ever necessary to solve higher-order pattern
unification problems during computation.

Polymorphic lists are used in the two higher-order programs. To focus
attention on the benefits that might be obtained from the type
optimizations, we have replaced the use of the system defined
constructors for representing these lists with the user defined
constructors {\it pcons} and {\it pnil} introduced in the previous
section.

\begin{table}
\begin{tabular}{|l|c|c|c|c|c|} \hline\hline
       & \multicolumn{5}{c|}{\it Teyjus version 2 (v 2.0-b2)} \\ \cline{2-6}
       & {\it\ \ none\ \ } & \multicolumn{2}{c|}{\it top-level } & \multicolumn{2}{c|}{\it top-level and clauses} \\ \hline
{\it Poly Naive Rev}   & 2.80 secs & 2.30 secs & $\ \ $ 17.9\% $\ \ $& 2.27 secs & 18.9\% \\
{\it Poly Linear Rev}  & 2.24 secs & 1.84 secs & 17.9\% & 1.81 secs & 19.2\% \\
{\it Red Black Tree}   & 4.14 secs & 3.80 secs & 8.2\%  & 3.78 secs & 8.7\% \\
{\it Typeinf}          & 1.27 secs & 1.20 secs & 5.5\%  & 1.20 secs & 5.5\% \\
{\it Hcinterp}         & 2.38 secs & 2.14 secs & 10.1\% & 2.14 secs & 10.1\% \\ \hline \hline
\end{tabular}
\caption{Timing comparison on type optimizations.}\label{table:type}
\end{table}

The results of our experiments are present in
Table~\ref{table:type}. The columns with tags {\it none}, {\it
top-level} and {\it top-level and clauses} denote the type
optimization levels as no type reduction, top-level constant type
reduction only, and reductions for both top-level constants and
predicate definitions respectively. The numbers of seconds in the table
correspond to the execution time of programs obtained with different
levels of type optimizations. The data for {\it Poly Naive Rev} and
{\it Poly Linear Rev} are collected from 100,000 invocations of {\it
rev} on a 10 element list of type {\it $($list i$)$}.  In the
case of {\it Red Black Tree}, the times that are measured are for
creating a tree with 1,500 integer elements and searching for each
element subsequently. The numbers in the 4th and 6th
columns indicate the percentage improvement resulting from the
different levels of type optimizations against a base that does not
use any of the optimizations. From the presented data, it
can be observed that type optimizations, especially that for
top-level constants, have a noticeable impact on first-order
polymorphic programs. The improvements in the case of the higher-order
programs is not so marked. This observation also accords with
intuitions. Many $\lp$ programs that use higher-order features
typically do so over monomorphic representations of objects, using
polymorphism only in utility predicates and data structures such as
those implementing lists. Type optimizations provide benefits only in
those situations where there is genuine use of polymorphism.

\section{Impact of Higher-Order Pattern Unification}\label{sec:expr_unif}
We now turn to measuring the effect of orienting the processing model
around higher-order pattern unification rather than using Huet's
general procedure. The testing in this context consists of comparing
the execution times of {\it Teyjus Version 1} and {\it Teyjus Version
  2} on a collection of typical $\lp$ programs. The specific programs
in our suite consisted of {\it Typeinf} and {\it Hcinterp} described
in the previous section and the following additional ones.

\paragraph*{Prenex}
This program implements a transformation from arbitrary formulas in a
first-order logic into ones that are in prenex normal
form. Abstractions in $\lambda$-terms are used to capture the binding
aspects of first-order quantifiers. The essential part of the program
is presented in Figure~\ref{fig:prenex}.

\paragraph*{Compiler}
This program implements a compiler for a small imperative language
with object-oriented features~\cite{Liang02compilerconstruction},
including a bottom up parser, a continuation passing-style
intermediate language, and generation of native byte code.

\paragraph*{Hcsyntax}
Relative to the signature specified in Figure~\ref{fig:prenex_term},
this program defines the predicates {\it goal} and {\it def\_clause}
of type {\it form $\ \ra\ $o} that serve to recognize formulas whose
syntax adhere to that of goal formulas and definite clauses in the
setting of first-order Horn clauses.

\paragraph*{Tailrec}
This program describes the encoding of a simple functional programming
language and implements a recognizer of tail recursive functions of
arbitrary arity~\cite{NM98Handbook}. The concept of scope embodied in
the object level language is explicitly encoded by abstractions, and
augment and generic goals are used to realize recursion over such
structure.

All the programs in this test suite except for {\it Hcinterp} can be
viewed as representatives of the $\Ll$-style programming. With regard
to the usage of types, the following observations can be made. The
examples {\it Prenex}, {\it Hcsyntax} and {\it Tailrec} only use
monomorphic types. Polymorphism is present in {\it Typeinf},
{\it Compiler} and {\it Hcinterp}, but as remarked in the previous
section, such usage is only relevant to the encoding of lists as
auxiliary data structures and is incidental to the essential
computation carried out by these programs. In this set of tests, we
have reverted to the use of built-in representations of lists rather
than using user defined constructors.

\begin{table}
\begin{tabular}{|l|c|c|c|c|} \hline\hline
{\it } & {\it\ Teyjus version 1\ } & \multicolumn{2}{c|}{\it\ Teyjus version 2\ } & {\it Improvement} \\ \hline
{\it Prenex}   & 3.71 secs &$\ $ 1.77 secs$\ $ &$\ $ 1.157 secs$\ $ & 68.8\%  \\
{\it Typeinf}  & 2.53 secs &$\ $ 1.16 secs$\ $ &$\ $ 0.758 secs$\ $ & 70.0\%  \\
{\it Compiler} & 2.05 secs &$\ $ 2.71 secs$\ $ &$\ $ 1.771 secs$\ $ & 13.8\%  \\
{\it Hcinterp} & 1.58 secs &$\ $ 2.14 secs$\ $ &$\ $ 1.399 secs$\ $ & 11.5\%  \\
{\it Hcsyntax} & 1.11 secs &$\ $ 1.75 secs$\ $ &$\ $ 1.144 secs$\ $ &  3.0\%  \\
{\it Tailrec}  & 1.90 secs &$\ $ 2.78 secs$\ $ &$\ $ 1.817 secs$\ $ &  4.3\%  \\ \hline \hline
\end{tabular}
\caption{Timing comparisons on $\Ll$ programs.}\label{table:hopu}
\end{table}
The results of this set of experiments are present in
Table~\ref{table:hopu}. The numbers of seconds appearing in the 2nd
and 3rd columns are the actual times taken by the execution of the
programs on the two versions of systems respectively. The numbers
appearing in the 4th column are a ``normalized'' execution time on
{\it Teyjus Version 2} obtained by correcting for the hypothesized
degradation arising from our choice of low-level term representation;
the normalization amounts to dividing the actual execution time on
{\it Teyjus version 2} by the factor $(1+53.3\%)$. The percentages in
the last column of the table corresponds to the improvement brought
about by the new system after the term encoding noise is factored out. The
calculation is carried out by the following formula.
\[\frac{normalized\ execution\ time\ in\ Teyjus\ v2\  -\ execution\ time\ in\ Teyjus\ v1}{execution\ time\ in\ Teyjus\ v1}\]

Performance improvements of varying degrees in
the different test cases can be seen to result from
using {\it Teyjus Version 2} . The execution time is substantially
reduced in the case of the first two programs. These programs use
higher-order pattern unification significantly and polymorphic typing
is not used in the first and only sparingly in the second.
Thus the better performance is attributable in these cases mostly to
the higher-order pattern unification employed in the interpretive
unification process of the emulator. In the {\it Compiler} example, a
significant part of the computation is not higher-order although there
are also parts that use $\lambda$-terms and unification in a
non-trivial way. Based on the earlier studies, we anticipate that type
optimizations contribute to about 5\%-6\% with the rest of the
improvement coming from the changed treatment of higher-order
unification. The {\it Hcinterp} program uses $\lambda$-terms and the
syntax here does not even adhere to the higher-order pattern
restriction. However, by the time unification is considered in this
case, most of the terms have, in fact, become first-order in
nature. Following the discussion in the previous section, it can also
be noticed that the improvement in this case is almost entirely
attributable to the type optimizations.
There is virtually no change in the performance observed over the last
two programs. This is also understandable. These programs embody only
an analysis of the objects they work over---first-order formulas and
functional programs in the respective cases. The L$_\lambda$ style of
programming results in the use of only first-order unification in such
analysis, higher-order pattern unification playing a role only when a
synthesis of new structure is also involved.

A question that is interesting to analyze is what particular
characteristics of unification problems in the higher-order pattern
fragment might cause a behavior difference between Huet's procedure
and a more targetted unification algorithm.
Our hypothesis, based on looking at the kinds of disagreement
pairs that actually participate in the interpretive unification
process during the execution of {\it Prenex} and {\it Typeinf}, is
that a
significant contributor to this difference is the presence during
unification of disagreement pairs of the form
\begin{tabbing}
\dquad\dquad$\dg{c_i}{(H\ c_1\ ...\ c_n)}$,
\end{tabbing}
where $H$ is a logic variable, $c_1,...,c_n$ are distinct constants
with higher universe index than $H$ and $i$ is some number between $1$
and $n$. Given such a pair, Huet's unification procedure attempts to
solve it by somewhat blindly considering bindings for $H$ of the
form $\lambda(n, \#j)$, for all $j$ such that $1\leq j\leq n$. This
gives rise to a (admittedly shallow) branching whose width in a
depth-first search setting is controlled by the particular value of
$i$, assuming that we stop the search at the first point of success.
On the other side, higher-order pattern unification treats such pairs
differently, generating the right substitution deterministically by
immediately trying to match $c_i$ to one of the constants in
$c_1,\ldots,c_n$.

To try and validate our hypothesis, we conducted an experiment using
the {\it copy} example.
The queries we used in this context were of the form {\it copy t
  Result}, where $t$ is a term with the structure
\begin{tabbing}
\dquad{\it abs $x_1\plam$ ... abs $x_n\plam$ (app $x_1$ (app $x_1$ (app $x_1$ (app $x_1$ (app $x_1$ $x_1$)))))}.
\end{tabbing}
By setting the arguments of {\it app} to $x_n$, the disagreement
pairs that are generated take the form  $\dg{c_n}{(H\ c_1\ ...\
  c_n)}$. The way substitutions are considered in {\it Teyjus Version
  1}, $(n-1)$ bindings are attempted for $H$ before the ``correct''
one for such a pair is actually found.

\begin{table}
\begin{tabular}{|c|c|c|c|c|} \hline\hline
{\it Number of abstractions } & {\it Teyjus version 1 } & \multicolumn{2}{c|}{\it\ Teyjus version 2\ } & {\it Improvement}  \\ \hline
{\it 1}        & 0.06   secs    & 0.09 secs  & 0.059 secs  & 1.7\%  \\
{\it 5}        & 0.44   secs    & 0.44 secs  & 0.287 secs  & 34.6\% \\
{\it 10}       & 0.99   secs    & 0.87 secs  & 0.569 secs  & 42.6\% \\
{\it 15}       & 1.77   secs    & 1.45 secs  & 0.945 secs  & 46.5\%  \\ \hline \hline
\end{tabular}
\caption{Effect of searching in pattern unification problems.}\label{table:hopu_copy}
\end{table}

Table~\ref{table:hopu_copy} presents the results obtained these
experiments. Execution times shown in this table result from
5,000 invocations of the given queries on the two
systems. The numbers in the 4th column are the
normalized execution times on {\it Teyjus Version 2}. The last column
denotes the performance difference obtained from viewing the
execution time on {\it Teyjus Version 1} as the basis of comparison.
An improvement that is linear to the number of abstractions can be
observed in this case.

\begin{table}
\begin{tabular}{|c|c|c|c|c|} \hline\hline
{\it Number of abstractions } & {\it Teyjus version 1 } & \multicolumn{2}{c|}{\it\ Teyjus version 2\ } & {\it Improvement}  \\ \hline
{\it 1}        & 0.06   secs    & 0.09 secs  & 0.059 secs  & 1.7\%  \\
{\it 5}        & 0.38   secs    & 0.50 secs  & 0.327 secs  & 13.9\% \\
{\it 10}       & 0.72   secs    & 0.95 secs  & 0.621 secs  & 13.7\% \\
{\it 15}       & 1.17   secs    & 1.56 secs  & 1.020 secs  & 12.9\%  \\ \hline \hline
\end{tabular}
\caption{Narrowing the effect of search in pattern unification.}\label{table:abs_copy}
\end{table}

The differences observed above could, of course, be the result of
other factors that we might have somehow overlooked in our
analysis. To try and eliminate this possibility, we conducted another
set of experiments, ones in which the pairs generated were such that
the very first substitution considered for $H$ in the {\it Teyjus
  Version 1} setting would be the right choice. Specifically, we once
again tried queries of the form {\it copy t Result}, but this time
where $t$ had the structure
\begin{tabbing}
\dquad{\it abs $x_1\plam$ ... abs $x_n\plam$ (app $x_1$ (app $x_1$ (app $x_1$ (app $x_1$ (app $x_1$ $x_1$)))))}.
\end{tabbing}
By always using the bound variable $x_1$ as the arguments of
{\it app}, the disagreement pairs generated are of the form
$\dg{c_1}{(H\ c_1\ ...\ c_n)}$. The first substitution generated for
$H$ in {\it Teyjus Version 1} succeeds for such pairs. We would
therefore expect much smaller differences with such queries.
Table~\ref{table:abs_copy} presents the results obtained from the new
experiment; execution time is measured again for 5,000
invocations of the given queries with the two versions of systems and
the different columns have the same explanations as before. The
figures in this table show much smaller differences, thereby
conforming with our expectations. Combined
with the earlier results, our hypothesis that a specific branching
behavior contributes significantly to the differences between the two
versions of the {\it Teyjus} system appears confirmed.

Before concluding this section, it is useful to understand that while
the observed responses of the two versions of the {\it Teyjus} system
agree on most practical programs and queries, they also sometimes
differ. When restricted to the L$_\lambda$ fragment of $\lp$ it is
sometimes possible that {\it Teyjus Version 1} will produce an answer
conditioned on the solutions to a remaining collection of
flexible-flexible disagreement pairs (that are known to have at least
one solution), whereas {\it Teyjus Version 2} will solve these pairs
completely. In the other direction, there are examples of
programs outside the L$_\lambda$ fragment on which {\it Teyjus Version
  1} will provide complete answers whereas {\it Teyjus Version 2} will
stop at a point short of this. As an example of this latter
kind, consider the following program
defining the predicate {\it mapfun} of type {\it $($list i$)\ \ra\
(i\ \ra\ i)\ \ra\ ($list i$)\ \ra\ $o} for some sort $i$:
\begin{tabbing}
\dquad{\it mapfun nil F nil}.\\
\dquad{\it mapfun $($X :: L1$)$ F $(($F X$)$ :: L2$)$}\quad $\pif$ {\it mapfun L1 F L2}.
\end{tabbing}
Intuitively, the predicate {\it mapfun} maps the elements in the first list
argument to those in the third by applying the function given by the
second argument.
Let {\it g} and {\it a} be constants of types {\it i $\ra$ i} and
{\it i} respectively. The disagreement pair $\dg{(F\
a)}{(g\ a)}$ that is generated in solving the query
\begin{tabbing}
\dquad{\it ?- mapfun $($a :: nil$)$ F $(($g a$)$ :: nil $)$}
\end{tabbing}
escapes the $\Ll$ subset and hence is not solved in {\it Teyjus
  Version 2}; instead it is simply produced as a remaining pair
  at the end of the computation. However, this disagreement
  pair can be successfully solved by Huet's procedure, and so, when the
  same query is provided to {\it Teyjus Version 1}, it will succeed
  with the two answer substitutions $\dg{F}{\lambdax{x}g\ x}$ and
  $\dg{F}{\lambdax{x}g\ a}$.

\section{A Summary of the Assessments}\label{sec:expr_conclusion}
% 1. If you really care about efficiency, you might want to pay more
% careful attention to the low-level mapping of term representation.
% 2.certain aspects of the simplification of types processing make
% a difference but others may perhaps be ignored
% 3.using pattern unification does make a significant difference. etc.
We conclude this chapter by summarizing and consolidating the various
observations contained in it concerning our design ideas and the
specific realization of these in {\it Version 2} of the {\it Teyjus}
system.

One major characteristic of the new version of the {\it Teyjus} system
is its choice of low-level encoding of terms. The way we have chosen
to do this has meant a degradation in speed of about 50\%. While we
have not measured this explicitly, it is likely that space usage
is also impacted by this choice: hand-coded term representations are
bound to be significantly more compact than ones generated by the C
compiler based on structure declarations. One counter to these
drawbacks is that by letting the real code be free of low-level
decisions and hacking tricks, we have made it much more transparent,
modular and error-free. A further point to note is that special
low-level treatments can still be built in once an architecture has
been selected by changing a particular module that deals with this
issue in our implementation. A final point to note is that the way we
have dealt with this issue leads naturally to an extremely portable
system. We note in this context that such portability can also have an
important impact on the ``speed of execution'' by allowing us to use
newer and faster architectures to run $\lp$ programs. As a
specific example, recall that {\it Teyjus Version 2}, unlike {\it
  Teyjus Version 1}, can be built on 64 bit machines as well and not
just on 32 bit ones.
Table~\ref{table:portability} presents some data that is relevant in
this context. In particular, it shows the execution times for a set of
queries made against the {\it Prenex}, {\it Typeinf} and {\it
  Compiler} programs when running {\it Teyjus Version 2} on a
2.6GHZ 32-bit i686 and a 2.6GHZ 64-bits x86 processor. The performance
is noticeably better on the 64 bit architecture.

\begin{table}
\begin{tabular}{|l|c|c|c|} \hline\hline
{\it } &{\it\ \ \ \ \ \ \ 2.6GHZ 32-bit i686\ \ \ \ \ \ \ } &{\it\ \ \ \ \ \ \  2.6GHZ 64-bits x86\ \ \ \ \ \ \ } \\ \hline
{\it Prenex}                 & 2.71 secs                & 1.25 secs \\
{\it Typeinf}                & 1.16 secs                & 0.78 secs \\
{\it Compiler}               & 2.71 secs                & 1.66 secs \\
 \hline \hline
\end{tabular}
\caption{Comparing {\it Teyjus version 2} on   different
  architectures.}\label{table:portability}
\end{table}

The second kind of conclusion concerns the benefit of using
higher-order pattern unification. There are improvements from this
that take two forms. First,
this algorithm allows an efficient runtime time type processing
scheme that results in 5\% to 18\% speedups in the execution
times for a collection of first-order and practical $\Ll$ programs
that we tested. A further
observation is that the two kinds of type optimizations utilized in
our compiler do not contribute evenly to the overall performance
improvements. In fact, most of the acceleration results from the
reduction in type annotations maintained with constants; the
improvements from reductions in type associations with predicate
definitions are minor, especially for practically relevant $\lp$
applications. The second kind of advantage resulting from using
higher-order pattern unification concerns the reduction in search.
The improvement from this is large especially for $\lp$ programs used
in the intended  meta-programming tasks. At a more detailed level, our
analysis has also exposed the causes for such an improvement in the
treatment of search.

In addition to the impact on performance, orienting the implementation
around a treatment of only higher-order pattern unification has the
effect of considerably simplifying the structure of the
system. Although not directly quantifiable, the benefits from this
have been enormous. The instruction set for our abstract machine,
especially the part included for treating types, is much
simplified. The uniform nature of these instructions now makes it
possible to consider compiling other languages similar to $\lp$ to
them. The choice with regard to unification also eliminates branching
in its treatment, thereby also enormously simplifying the abstract
machine. The impact of this aspect should not be underestimated. The
need to deal with a more complex unification procedure in an efficient
fashion has made the code for {\it Teyjus Version 1} extremely
complicated and, hence, error-prone and inscrutable. By contrast, we
believe that even the realization of the abstract machine in {\it
  Teyjus Version 2} is quite penetrable and easy to maintain and
modify.

\chapter{Conclusion}\label{chp:conclusion}
In this thesis, we have considered an abstract machine and  compilation
based realization of the $\lp$ language that is oriented around
higher-order pattern unification.
We have not limited the syntax of the language in order to use this
restricted form of unification.
Rather, our approach has been to use the restriction dynamically:
while being prepared for arbitrary unification problems, an
implementation based on our ideas will solve completely only problems
in the higher-order pattern class, leaving any other problems as
constraints that are either to be solved later if subsequent
substitutions put them into the restricted class or to be reported to
the user as qualifications on answer substitution.
This approach is obviously theoretically limited in comparison with
one that uses Huet's procedure for the full class of unification
problems in that it could result in uninformative answers being
provided to the user in certain cases; we observed an example of this
kind in Section~\ref{sec:expr_unif}. However, our approach is
practically well-motivated: an empirical study of a large collection
of real programs in a $\lambda$Prolog-like setting has shown that
virtually all unification problems that are encountered during
computation are either in the higher-order pattern or in the even simpler
first-order class~\cite{MP92}. Within this context, the unification
algorithm that we use is capable of solving flexible-flexible
disagreement pairs and hence has the advantage sometimes of providing
more complete answers. From an implementation perspective, using the
restricted algorithm has the benefits of simplifying the processing
model by eliminating branching in search and greatly reducing the
runtime role of types.

At a concrete level, this thesis has developed an actual abstract
machine and compilation techniques to complement the processing model
described above. The structure that we have designed has several novel
components. First, it uses a representation of $\lambda$-terms based
on an explicit substitution calculus and it includes a reduction
procedure for these terms that is optimized to the particular context
of a higher-order logic programming language. Second, it seamlessly
integrates an interpretive treatment of higher-order unification
problems with a compilation based treatment of first-order unification
that is driven by the terms that appear in the heads of
clauses. Finally, it incorporates static analysis techniques to reduce
even further the runtime presence of types.

This thesis has also provided an actual implementation of $\lp$ based
on the design that it has proposed. This system, called {\it Teyjus
  Version 2} also has several interesting ideas. The two major
requirements that have driven its development are portability and
an openness in structure that can be exploited in extending its
capabilities and in experimenting with different low-level design
choices. These foci have led to implementation challenges that have
also been addressed. To free the implementation from architecture
specific decisions, we have pushed layout choices for terms to the C
compiler, making use of a broad understanding of such compilers to
obtain a tradeoff between efficiency and generality. To make the code
structure penetrable, we have used a genuinely high-level
language---Ocaml in this case---wherever possible in the
implementation. Since it is also imperative to use a low-level
language (typically C) for efficiency reasons in certain parts
of the system, we have had to
deal with the issue of interoperability between implementation
languages across a broad interface. An especially
interesting aspect of the code that we have developed is the manner in
which we have been able to realize the sharing of information about
instruction and general machine structure between the two languages
without tedious and error-prone replication in the two settings.

A final contribution of this thesis has been the evaluation of our
design ideas and a general understanding of the costly aspects of
higher-order unification. This part of our work has consisted of
instrumenting the new implementation and an earlier one that utilizes
Huet's original procedure for higher-order unification and of using
these two systems in a series of experiments over a relevant
collection of $\lp$ programs.

There have been four previous implementations of $\lp$ in
addition to {\it Version 1} of the {\it Teyjus} system that is
discussed in this thesis. Three of these have been interpreter based
and have used a high-level language exclusively in the realization:
specifically, in
{\it Prolog}~\cite{MN88}, {\it Lisp}~\cite{EP89} and {\it
  SML}~\cite{Elliott91asemi-functional, WM97}.
None of these systems considered in any detail the special issues that
arise in a low-level treatment of the higher-order aspects of $\lp$.
The compilation based implementations have been the
{\it Teyjus Version 1} discussed here and {\it
  Prolog/Mali}~\cite{brisset:compilation:inria:93}.
The {\it Prolog/Mali} system achieves compilation indirectly by first
translating $\lp$ programs into C code and then compiling the
resulting C code. The translation process utilizes a memory management
system called {\it Mali} that has been developed especially for logic
programming languages: in particular, translation is realized in the
form of calls to functions supported by this system. A more detailed
comparison of the treatment of the higher-order aspects to $\lp$
between the {\it Prolog/Mali} system and those in the {\it Teyjus} family
can be found in~\cite{N03treatment}.

% compile pattern -- attempted; why abort
The work in this thesis can be extended in several ways. One
interesting direction to pursue is that of incorporating a treatment of
particular cases of higher-order pattern unification into the
compilation structure rather than pushing this off entirely to the
interpretive phase. An example of where such compilation might be
useful is a situation that we discussed when analyzing the test programs
{\it Typeinf} and {\it Prenex} in Section~\ref{sec:expr_unif}. Here we
observed that a common form for disagreement pairs is
\begin{tabbing}
\dquad\dquad$\dg{t}{(H\ c_1\ ...\ c_n)}$,
\end{tabbing}
where $t$ is a first-order term, and $(H\ c_1\ ...\ c_n)$ is a term in
which $H$ is a logic variable and $c_1,...,c_n$  are distinct
constants with higher universe indexes than that of $H$.
The term $t$ is often obtained in these cases from
one of the arguments of the clause head. Compilation can therefore
utilize the structure of $t$ that is statically
available. For example, the instruction
\begin{tabbing}
\dquad\dquad {\it get\_structure\ $A_i$,\ $f$,\ $n$},
\end{tabbing}
can be enhanced so that when the dereferenced result of the term
given by
$A_i$ is actually a flexible higher-order pattern term,
the execution of the following instructions can be carried out in
a ``BND'' mode and geared towards realizing the relevant parts of the
computation described in Figure~\ref{fig:hopu_bnd}.
It can be observed from the transformation rules of {\it bnd}
that the argument list $[c_1,...,c_n]$ then has to be
carried across the instructions following the current
{\it get\_structure}.
To take a concrete example, suppose $t$ is of form $(f\ X)$,
where $f$ is a constant and $X$ is a subsequent occurrence of
a variable universally quantified at the clause head.
In the immediately following instruction {\it unify\_value}
corresponding to $X$, the list $[c_1,...,c_n]$ has to be
input to an interpretive {\it bnd} process.

This kind of passing on of the argument list of the
dynamic term to later instructions is not one that is necessary in a
first-order setting and hence has not been considered in
WAM-style compilation models. Two sorts of attempts were made
during the design of our abstract machine for realizing
this requirement, but neither of them
led to a solution that we considered satisfactory. The
unsuccessful attempts are nevertheless discussed below for
the purpose of illustrating the problems that were identified.

The first way of solving the problem that we considered is to set one
of the data registers $Ai$
to refer to an argument vector when necessary and to use this register
as an explicit argument to the subsequent instructions.
Taking the example $(f\ X)$, then we can have instructions as the following:
\begin{tabbing}
\dquad\dquad {\it get\_structure}\quad\={\it $A_1$,}\quad{\it f,}\quad{\it 1,}\quad{\it $A_{255}$}\\
\dquad\dquad {\it unify\_value}\>{\it $A_2$,}\quad{\it $A_{255}$}
\end{tabbing}
where {\it $A_{255}$} is the register holding the argument list.
However, this solution has a problem in that it adds more work to
instructions that are also used for simple first-order
unification. This form of unification is assumed to
occur much more frequently and hence this approach could
adversely affect the overall execution time.

The second method we have attempted is to use a special register,
for example, the register {\it ArgVector}, to refer to the argument vector.
This register can then be set in the execution of {\it get\_structure},
to be checked by the following instructions when necessary.
However, a closer examination on this solution reveals that it actually
requires the term $t$ from the clause head to be processed in a
depth-first manner, whereas the processing order of head unifications
underlying WAM instructions is in fact breath-first.
This can be illustrated by the following example. Suppose the head
of the clause that is to be compiled is of form
\begin{tabbing}
\dquad\dquad {\it foo ... (f (f X)) (g (g Y))},
\end{tabbing}
where $f$ and $g$ are top-level constants, and $X$ and $Y$ are second
or later occurrences of
variables universally quantified in the front of the clause.
The instructions generated in our
implementation take the following structure:
\begin{tabbing}
\dquad\dquad\={\it foo:}\= \dquad ... \\
\>{\it L1:}\quad\={\it get\_structure}\quad\={\it $A_1$,}\quad{\it f,}\quad{\it 1}\\
\>\>{\it unify\_variable}\>{\it $A_3$}\\
\>{\it L2:} \>{\it get\_structure}\>{\it $A_2$,}\quad{\it g,}\quad{\it 1} \\
\>\>{\it unify\_variable}\>{\it $A_4$}\\
\>{\it L3:} \>{\it get\_structure}\>{\it $A_3$,}\quad{\it f,}\quad{\it 1}\\
\>\>{\it unify\_value}\>{\it X}\\
\>{\it L4:} \>{\it get\_structure}\>{\it $A_4$,}\quad{\it g,}\quad{\it 1}\\
\>\>{\it unify\_value}\>{\it Y}
\end{tabbing}
Now suppose the goal to be solved
actually takes the form
\begin{tabbing}
\dquad\dquad {\it foo ... (G c1 ... cn) F},
\end{tabbing}
and further, assume the instruction {\it get\_structure} is enhanced
to deal with higher-order patterns in a way described above. Then the
execution of this
instruction at label {\it L1} sets the register {\it ArgVector} to refer to the argument
list $[c_1,...,c_n]$, which is assumed to be used by the
{\it get\_structure} and {\it unify\_value} instructions following label
{\it L3}. However, it can be observed that the execution of
{\it get\_structure} at label {\it L2} overwrites {\it ArgVector} to an
empty list.

A way to overcome this problem is to add a segment of instructions
that are only executed in the ``BND'' mode. For example, when the argument
{\it (f (f X))} of {\it foo} is considered in isolation, we can have
the following instructions generated.
\begin{tabbing}
\dquad\dquad\={\it foo:}\= \dquad ... \kill
\>{\it L1:}\quad\={\it get\_structure}\quad\={\it $A_1$,}\quad{\it f,}\quad{\it 1,}\quad{\it L5}\\
\>\>{\it unify\_variable}\>{\it $A_2$}\\
\>{\it L2:} \>{\it get\_structure}\>{\it $A_2$,}\quad{\it f,}\quad{\it 1,}\quad{\it L6} \\
\>\>{\it unify\_value}\>{\it X}\\
\>\>{\it goto}\>{\it END} \\
\\
\>{\it L5:} \>{\it unify\_variable}\>{\it $A_2$} \\
\>\>{\it bnd}\>{\it $A_2$,}\quad{\it f,}\quad{\it 1} \\
\>\>{\it unify\_value}\>{\it X} \\
\>\>{\it goto}\>{\it END} \\
\\
\>{\it L6:} \>{\it unify\_value}\>{\it X} \\
\>\>{\it goto}\>{\it END} \\
\\
\>{\it END:}\dquad ...
\end{tabbing}
In the code above, assume that the additional label argument
 to {\it get\_structure} corresponds to the start of the instruction
 sequence  that must be executed in the situation when the dynamic
 term is of the flexible higher-order pattern form
 discussed. Further, assume {\it bnd} and {\it goto} are
two new instructions. The former carries out the corresponding binding
actions in the rigid-flexible case specified in Figure~\ref{fig:hopu_bnd},
and the latter is a simple jump to the given address.

The problem with this method is obvious: viewing the entire clause head
as an application, the size of the instructions is exploded exponentially
to the total number of applications contained within it. The compilation result
is not satisfactory even for our original {\it foo} example. For
example, here we would get the rather long sequence shown below:
\begin{tabbing}
\dquad\dquad\={\it foo:}\= \dquad ... \\
\>{\it L1:}\quad\={\it get\_structure}\quad\={\it $A_1$,}\quad{\it f,}\quad{\it 1,}\quad{\it L5}\\
\>\>{\it unify\_variable}\>{\it $A_3$}\\
\>{\it L2:} \>{\it get\_structure}\>{\it $A_2$,}\quad{\it g,}\quad{\it 1,}\quad{\it L6} \\
\>\>{\it unify\_variable}\>{\it $A_4$}\\
\>{\it L3:} \>{\it get\_structure}\>{\it $A_3$,}\quad{\it f,}\quad{\it 1,}\quad{\it L7}\\
\>\>{\it unify\_value}\>{\it X}\\
\>{\it L4:} \>{\it get\_structure}\>{\it $A_4$,}\quad{\it g,}\quad{\it 1,}\quad{\it L8}\\
\>\>{\it unify\_value}\>{\it Y} \\
\>\>{\it goto}\>{\it END} \\
\\
\>{\it L5:} \>{\it unify\_variable}\>{\it $A_3$} \\
\>\>{\it bnd}\>{\it $A_3$,}\quad{\it f,}\quad{\it 1} \\
\>\>{\it unify\_value}\>{\it X} \\
\>\>{\it get\_structure}\>{\it $A_2$,}\quad{\it g,}\quad{\it 1,}\quad{\it L6} \\
\>\>{\it unify\_variable}\>{\it $A_4$}\\
\>\>{\it get\_structure}\>{\it $A_3$,}\quad{\it f,}\quad{\it 1,}\quad{\it L7}\\
\>\>{\it unify\_value}\>{\it X}\\
\>\>{\it get\_structure}\>{\it $A_4$,}\quad{\it g,}\quad{\it 1,}\quad{\it L8}\\
\>\>{\it unify\_value}\>{\it Y} \\
\>\>{\it goto}\>{\it END} \\
\\
\>{\it L6:}\dquad ...
\end{tabbing}

In the future research, the feasibilities of the methods proposed
above can be further explored. A closer study can be conducted
of the actual impact of each of them on actual $\lp$ programs.
Practical adjustments are also possible based on an empirical
assessment. For instance, the second method can be controlled in a
way such that it is only performed on the top-level structures of
the arguments of the clause head.

Another possible extension to the work in this thesis
is the reduction of the so-called {\it occurs-check} in unification.
In first-order unification, this check corresponds to examining the
structure of the term $t$ to ensure it does not contain occurrences of
the logic variable $X$ at the time when an attempt is made to bind $X$
to $t$.
This check is generalized in the context of
the higher-order pattern unification. It can be observed from
Figure~\ref{fig:hopu_mksubst} and Figure~\ref{fig:hopu_bnd} that
occurs-check is needed in unifying a pair $\dg{X}{t}$ for the
following three reasons.
\begin{enumerate}
\item The logic variable $X$ could occur in $t$, where non-unifiability
should be detected.
\item The term $t$ could contain a rigid sub-term with its head being a
constant $c$ such that $c$ resides in a universe higher than that of $X$,
which leads to non-unifiability.
\item The term $t$ could contain a flexible sub-term $(Y\ c_1\ ...\ c_n)$,
such that $Y$ resides in a universe that is lower than $X$, and the universe
levels of some constants $c_i$ in its argument list are higher than that of
$X$. In this situation, the (implicit) raising of $X$ introduces a list of
arguments which could be pruned against the arguments of $Y$.
\end{enumerate}

The performance of occurs-check is generally viewed as expensive to
execution, since otherwise, the solution of the pair $\dg{X}{t}$ can
be realized as simply binding $X$ to $t$ without any traversal
over the structure of $t$.
In the conventional implementations of {\it Prolog} and the
{\it Prolog/Mali} implementation
of $\lp$, the occurs-check is left out entirely.
In the {\it Teyjus} family of implementations, the occurs-check is performed
in the following way. A register {\it VAR} ({\it TY\_VAR} for the
first-order occurs-check on types) is used to record the variable
(type variable) for which a binding is being calculated, and is checked
against the structures of the term (type) that constitute the other
element of the
disagreement pair during the interpretive unification process.
These registers are also set in the executions of the instructions
{\it get\_structure} ({\it get\_typed\_structure}) and
{\it get\_type\_structure},
when the incoming term or type is a variable or a type variable, in which case
the computation starts to construct a first-order application (type
structure) as the binding for it, to communicate the (type) variable whose
occurrence should be checked in the interpretive unification invoked by
the following {\it unify\_value} or {\it unify\_type\_value}
corresponding to the arguments of the enclosing first-order
application or type structure.

Optimizations that are targeted towards avoiding unnecessary
occurs-check could be
significant to the performance of the implementations of our language.
In fact, one such optimization is already present in our
compilation model. This optimization happens in the compilation of
the pair $\dg{X}{t}$, where $X$ is a the first occurrence of a variable
that is universally quantified at the clause head.
In this situation, it can be observed that none of the three cases
requiring occurs-check described above can actually happen.
In particular, a new logic variable, say $X'$, with the current universe
level is introduced to replace $X$ when the clause definition is
selected to solve an incoming goal. Since $X$ is in its first occurrence,
it is impossible for $X'$  to be contained
by any other terms. Next, the universe index of $X'$ is already the largest
one in the current computation context, so that the possibility for the
second situation to occur is also eliminated.
Finally, the rasing of $X'$ against any flexible
$\Ll$-subterm $(Y\ b_1\ ...\ b_n)$ contained by $t$ results in
an argument list for $X'$ in which all the constants in $[b_1,...,b_n]$ are
contained, since $X'$ has the highest universe index, and consequently
nothing can be pruned in this argument list against $[b_1,...,b_n]$.
For these reasons, $X'$ can be immediately bound to $t$. Such a
special treatment of
bindings without occurs-check is in fact captured by the instructions
\begin{tabbing}
\dquad\dquad{\it get\_variable Ai, Aj}\dquad and\dquad {\it unify\_variable Ai},
\end{tabbing}
the execution of which simply copy the content of {\it Aj}
({\it S} for the latter) into
the register {\it Ai}. A similar optimization also exists for compiled type
unification through the usage of
\begin{tabbing}
\dquad\dquad{\it get\_type\_variable Ai, Aj}\dquad and\dquad
{\it unify\_type\_variable Ai}.
\end{tabbing}

Research in~\cite{Pientka06} and~\cite{Pientka03optimizinghigher-order}
 proposes an optimization, called {\it linearization}, for minimizing
 occurs-check similar to that in our compilation model in handling
 higher-order pattern unification within a dependently typed
 $\lambda$-calculus~\cite{Pfenning99systemdescription}.
When adopted into our context, this approach suggests a pre-processing in
compilation to translate the clause definitions into a form that any
subsequent variable occurrence in a clause head is replaced by a new
variable in its first use, with additional unifications over the new
variable with the one by which it is replaced inserted
into the beginning of the clause body.
For instance,
suppose a clause
under consideration is of form
\begin{tabbing}
\dquad\dquad {\it foo X (f X) $t$} \dquad $\pif$\dquad {\it $<$goal$>$},
\end{tabbing}
where $t$ is some arbitrary argument.
The linearization result becomes a clause
\begin{tabbing}
\dquad\dquad{\it foo X (f Z) $t$} \dquad $\pif$\dquad {\it X = Z, $<$goal$>$.}
\end{tabbing}
Within the computation context considered by~\cite{Pientka06}
and~\cite{Pientka03optimizinghigher-order},
where no compilation on unification is considered,
this approach has significant effect since after the
linearization, the bindings from a variable to a term required in the matching
of a clause head can be simply performed without occurs-check during their
interpretive computation. However, in our context, this approach actually
has almost the same effect as our special treatment on the first-occurrence of
variables described above except that computations requiring occurs-check is
further delayed till the end of the processing of the clause head.
The usefulness of this delay is arguable. Considering the example
above, suppose the argument $t$ in the clause is a constant $c$ and further
the query has the form
\begin{tabbing}
\dquad\dquad{\it ?- foo W (f (g W)) d}.
\end{tabbing}
where $d$ is a constant different from $c$. The delay of the unification
over $\dg{W}{(g\ W)}$ is beneficial here since failure will be simply recognized
from the inequality between $c$ and $d$. However, in another case,
suppose the third argument of the clause and the query are of the form
$(f\ (f\ (f\ (g\ c))))$ and $(f\ (f\ (f\ (g\ d))))$ respectively, where the
non-matching constants $c$ and $d$ are embedded deeply inside, the eager
calculation over $\dg{W}{(g\ W)}$ becomes more efficient than
actually carrying out the unification on
\begin{tabbing}
\dquad\dquad$(f\ (f\ (f\ (g\ c))))$ \dquad and\dquad $(f\ (f\ (f\ (g\ d))))$.
\end{tabbing}

A more useful solution to this problem that can be considered
is to build a mechanism to dynamically detect the absence of the three
situations requiring occurs-check described before, and perform the simple
binding when it is the case.
For example, compound terms can be
attributed with the maximum universe index of the constants contained inside,
and an additional attribute can be associated with logic variables to
indicate whether they are in their first occurrence. Such attributes should
be maintained by the unification and normalization processes for
them to have any practical value.
A specific approach of this sort is to be investigated.

% top-level query; disjunctive goals % Garbage collector
In addition to improving our abstract machine and processing
structure, enhancements can also be made to the system that has been
implemented.
For example, compilation treatment can be considered for handling
the top-level queries in our system. In the absence of such
compilation, queries are restricted to not containing augment goals. A
compiled treatment would allow us to lift this restriction.
Second, the explicit treatment on the disjunctive goals by the
abstract machine discussed in Section~\ref{sec:misc} could also be
beneficial to the performance of the system. Finally, a garbage
collector for the emulator is also an important enhancement
to our system. The construction of such a garbage collector is, in
fact, currently under investigation.

% LF.
Many of the implementation ideas developed in this thesis seem not
to be limited to $\lp$ and should be of use within the broader
framework of implementing higher-order features in logic programming
and reasoning systems. Specifically, these ideas may be applicable
in the context of logic programming within a dependently typed
$\lambda$-calculus~\cite{Pfenning99systemdescription}, and of
meta-theory based reasoning about computational
systems~\cite{BGMNT07, gacek08ijcar}. These kinds of systems seem to
be of growing importance within the specification and verification
realm. It would be of interest, therefore, to investigate the actual
applications of our ideas in these more general settings.

\bibliography{root}
\bibliographystyle{plain}
\end{document}